\documentclass[a4paper, 12pt, twoside]{article}

\usepackage[a4paper, margin=1in]{geometry}
\geometry{hscale=0.7,vscale=0.7,centering}
\usepackage[utf8]{inputenc}
\usepackage[T1]{fontenc}
\usepackage[english]{babel}
\usepackage{textcomp} 
\usepackage{graphicx}
\DeclareGraphicsExtensions{.jpg,.mps,.pdf,.png,.gif}
\usepackage[french]{varioref} 
\usepackage[]{amsmath,amssymb,amsfonts}

\usepackage{float}
\usepackage{bbm}
\usepackage{amsthm}
\usepackage{dsfont}
\usepackage{color}
\usepackage{natbib}
\usepackage{bm}
\usepackage{subcaption}

\usepackage[hyperindex,breaklinks]{hyperref}

\newcommand{\Ben}{\begin{enumerate}}
\newcommand{\Een}{\end{enumerate}}
\newcommand{\Bit}{\begin{itemize}}
\newcommand{\Eit}{\end{itemize}}
\newcommand{\Beq}{\begin{equation}}
\newcommand{\Eeq}{\end{equation}}
\newcommand{\Ba}{\begin{align*}}
\newcommand{\Ea}{\end{align*}}
\newcommand{\Mb}{\mathbf}
\newcommand{\Mr}{\mathrm}

\newcommand{\Mbb}{\mathbb}
\newcommand{\Tb}{\textcolor{black}}

\newtheorem{Th}{Theorem}
\newtheorem{Lem}{Lemma}
\newtheorem{Prop}{Proposition}

\newtheorem{Rq}{Remark}

\title{Correlation of powers of Hüsler--Reiss vectors and Brown--Resnick fields, and application to insured wind losses}
\date{March 1, 2022}
\begin{document}
\author{
Erwan Koch\footnote{EPFL (Institute of Mathematics): EPFL SB MATH MATH-GE, MA A1 354 (B\^atiment MA), Station 8, 1015 Lausanne, Switzerland. \newline Email: erwan.koch@epfl.ch}
}

\maketitle

\begin{abstract}
Hüsler--Reiss vectors and Brown--Resnick fields are popular models in multivariate and spatial extreme-value theory, respectively, and are widely used in applications. We provide analytical formulas for the correlation between powers of the components of the bivariate Hüsler--Reiss vector, extend these to the case of the Brown--Resnick field, and thoroughly study the properties of the resulting dependence measure. The use of  correlation is justified by spatial risk theory, while power transforms are insightful when taking correlation as dependence measure, and are moreover very suited damage functions for weather events such as wind extremes or floods. This makes our theoretical results worthwhile for, e.g., actuarial applications. We finally perform a case study involving insured losses from extreme wind speeds in Germany, and obtain valuable conclusions for the insurance industry.
 
\medskip

\noindent \textbf{Key words:} Brown--Resnick random field; Correlation of powers; Hüsler--Reiss random vector; Insured wind losses; Power damage functions; Reanalysis wind gust data; Spatial dependence.
\end{abstract}

\section{Introduction}
\label{Sec_Intro}

Extreme-value theory offers many statistical techniques and models useful in various fields such as finance, insurance and environmental sciences. Max-stable random vectors \citep[e.g.,][]{deHaanResnick1977Limit} naturally arise when extending univariate extreme-value theory to the multidimensional setting, and several parametric multivariate max-stable distributions, such as the Hüsler--Reiss model \citep{husler1989maxima}, have been proposed. Max-stable random fields \citep[e.g.,][]{haan1984spectral, de2007extreme, davison2012statistical} constitute an infinite-dimensional generalization and are particularly suitable to model the temporal maxima of a given variable at all points in space since they represent the only possible non-degenerate limiting field of pointwise maxima taken over suitably rescaled independent copies of a field \citep[e.g.,][]{haan1984spectral}. One famous example is the Brown--Resnick field \citep{brown1977extreme, kabluchko2009stationary} which, owing to its flexibility, is generally a good model for spatial extremes of environmental variables. Finite-dimensional distributions of the Brown--Resnick field are Hüsler--Reiss distributions so there is a natural and close link between Hüsler--Reiss vectors and Brown--Resnick fields.

Our main theoretical contributions are explicit formulas for the correlation between powers of the components of bivariate Hüsler--Reiss random vectors, analytical expressions of the spatial correlation function of powers of Brown--Resnick fields, and a careful study of its properties; some results are rather technical to obtain. Studying the correlation function of a field is prominent as it naturally appears when computing the variance of the spatial integral of that field \citep[e.g.,][]{koch2019SpatialRiskAxioms}. If the field models an insured cost, its spatial integral models the total insured loss over the integration region, and its variance is thus of interest for any insurance company. The correlation function also explicitly shows up in the standard deviation of the central limit theorem of the field, and is thus key for the behaviour of the spatial integral when the size of the integration region becomes large \citep[e.g.,][]{koch2019SpatialRiskAxioms}. Moreover, despite its drawbacks,
correlation is commonly used in the finance/insurance industry, making its study useful from a practical viewpoint. Finally, the criticism that it does not properly capture extremal dependence is somewhat irrelevant here as we consider the correlation between random variables which already model extreme events.

It is often insightful to consider the correlation between various powers of two random variables rather than focusing only on the correlation between these variables. First, applying simple non-linear transformations such as the absolute value or powers before taking the correlation sometimes allows one to detect and characterize a strong dependence that would not have been spotted using otherwise; this partially alleviates the defect that correlation only captures linear dependence. In finance, it is common to look at the autocorrelation of powers of the absolute values of asset returns. Returns generally do not exhibit any significant autocorrelation \citep[e.g.,][]{cont2001empirical} whereas their squares or other power values \citep[see, e.g.,][who consider powers ranging from $0.125$ to $5$]{ding1993long} show a significantly positive serial correlation. Second, taking powers may be useful for estimation.
Let $X_1, X_2$ be random variables whose joint distribution function depends on a parameter $\bm{\theta}$. If an explicit formula is available for the correlation between $X_1^{\beta}$ and $X_2^{\beta}$, where $\beta$ belongs to an appropriate space, then one can think of estimating $\bm{\theta}$ by equating that expression with its empirical counterpart, and searching for the value of $\beta$ leading to the optimal corresponding estimator. Such an approach may be notably useful for max-stable random fields, for which estimation is arduous.

Considering powers of random variables is also valuable when these variables are used to model the impact of natural disasters such as, e.g., windstorms or floods. According to physics, the total cost arising from damaging wind to a specific structure should increase as the square \citep[e.g.,][]{simiu1996wind} or the cube \citep[e.g.,][]{lamb1991historic, emanuel2005increasing, powell2007tropical} of the maximum wind speed. Moreover, several studies exploring insured costs have found that power-laws with much higher exponents are appropriate \citep[e.g.,][]{prahl2012applying}. A commonly used damage function for flood is $D(z)=z/(z+1)$, where $z>0$ is the water level measured in meters \citep[e.g.,][]{hinkel2014coastal, prahl2016damage} and so the destruction percentage approximately follows a power-law with exponent unity for levels much below one meter. Thus, as max-stable vectors and fields are suited to model componentwise and pointwise maxima, studying their powers is worthwhile for assessing costs from extreme wind or flood events.

In the second part of the paper, we use our theoretical results to study the spatial dependence of insured losses from extreme wind speed for residential buildings over a large part of Germany. We use ERA5 (European Centre for Medium-Range Weather Forecasts Reanalysis 5th Generation) wind speed reanalysis data on 1979--2020 to derive seasonal pointwise maxima, we fit the Brown--Resnick and Smith random fields, and use the appropriate power damage function for the considered region, according to \cite{prahl2012applying}. The best fitted model leads to a correlation displaying a slow decrease with the distance. We also consider other power values and we find that, for a fixed distance, the correlation between insured costs evolves only slightly with the value of the damage power; this is useful information for insurance companies.

The rest of the paper is organized as follows. Section \ref{Sec_TheoreticalResults} first briefly reviews Hüsler--Reiss vectors and Brown--Resnick fields, and then details our main theoretical contributions. We present our case study in Section \ref{Sec_Application},  and Section \ref{Sec_Conclusion} summarizes our main findings and provides some perspectives. All the proofs are gathered in the Appendix. The code and data required to reproduce the results of the case study will be available in a publication on the Zenodo repository. Note that some elements of this article are revised versions of results from Sections~2.2 and 3 and Appendix A of the unpublished work by \cite{koch2019spatialpowers}. Throughout the paper, $'$ designates transposition and $\mathbb{N}_*=\mathbb{N} \backslash \{ 0 \}$.  

\section{Theoretical results}
\label{Sec_TheoreticalResults}

\subsection{Preliminaries}

A random variable $Z$ has the standard Fréchet distribution if $\Mbb{P}(Z \leq z)=\exp(-1/z), z>0$. A random vector $\bm{Z}=(Z_1, Z_2)'$ having standard Fréchet marginals is said to follow the bivariate Hüsler--Reiss distribution \citep{husler1989maxima} with parameter $h \in [0, \infty]$ if 
\begin{align}
& \quad \ \Mbb{P}(Z_1 \leq z_1, Z_2 \leq z_2) \nonumber \\& = H(z_1, z_2; h) = \exp \left( -\frac{1}{z_2} \Phi \left( \frac{h}{2} - \frac{\log(z_2/z_1)}{h} \right) -\frac{1}{z_1} \Phi \left( \frac{h}{2} - \frac{\log(z_1/z_2)}{h} \right) \right), \quad z_1, z_2 > 0.
\label{Eq_HuslerReissDist}
\end{align}
This is a popular and flexible distribution for max-stable random vectors, and the parameter $h$ interpolates between complete dependence ($h=0$) and independence ($h = \infty$). The $i$-th component, $i=1, 2$, of any bivariate max-stable vector follows the generalized extreme-value (GEV) distribution with location, scale and shape parameters $\eta_i \in \mathbb{R}$, $\tau_i >0$ and $\xi_i \in \mathbb{R}$. If $\bm{X}=(X_1, X_2)'$ is max-stable with such GEV parameters, then
\Beq
\label{Eq_LinkMaxstbVect_SimpleMaxstabVect}
X_i = 
\left \{
\begin{array}{ll}
\eta_i-\tau_i/\xi_i + \tau_i Z_i^{\xi_i}/\xi_i, & \quad \xi_i \neq 0, \\
\eta_i + \tau_i \log Z_i, & \quad \xi_i = 0,
\end{array}
\right.
\Eeq
where $(Z_1, Z_2)'$ is a max-stable vector with standard Fréchet marginal distributions. 

In the following, a max-stable random field with standard Fr\'echet margins will be called simple. The class of Hüsler--Reiss distributions is tightly linked to the Brown--Resnick random field \citep{brown1977extreme, kabluchko2009stationary} which is a flexible and widely used max-stable model. It is very suited to model, e.g., extremes of environmental data \citep[e.g.,][Section 7.4, in the case of rainfall]{davison2012statistical} as it allows realistic realizations as well as independence when distance goes to infinity. If $\{ W(\bm{x}) \}_{\bm{x} \in \mathbb{R}^d}$ is a centred Gaussian random field with stationary increments and with semivariogram $\gamma_W$, then the Brown--Resnick random field associated with the semivariogram $\gamma_W$  is defined by
\Beq
\label{Eq_Spectral_Representation_Stochastic_Processes}
Z(\bm{x}) = \bigvee_{i=1}^{\infty} U_i Y_i(\bm{x}), \quad \bm{x} \in \mathbb{R}^d,
\Eeq
where the $(U_i)_{i \geq 1}$ are the points of a Poisson point process on $(0, \infty)$ with intensity function $u^{-2} \mathrm{d}u$ and the $Y_i, i\geq 1$, are independent replications of 
$$ Y(\bm{x})=\exp \left( W(\bm{x})-\mathrm{Var}(W(\bm{x}))/2 \right), \quad \bm{x} \in \Mbb{R}^d,$$
where $\mathrm{Var}$ denotes the variance. It is a stationary\footnote{Throughout the paper, stationarity refers to strict stationarity.} and simple max-stable field whose distribution only depends on the semivariogram \citep[][Theorem 2 and Proposition 11, respectively]{kabluchko2009stationary}. Its finite-dimensional distribution functions are Hüsler--Reiss distributions \citep[][Remark 24]{kabluchko2009stationary} and, in particular, for any $\bm{x}_1, \bm{x}_2 \in \Mbb{R}^d$,
\Beq
\label{EqBivDistFuncBRField}
\Mbb{P}(Z(\bm{x}_1) \leq z_1, Z(\bm{x}_2) \leq z_2) = H \left(z_1, z_2; \sqrt{2 \gamma_W(\bm{x}_2-\bm{x}_1)} \right), \quad z_1, z_2>0.
\Eeq
A commonly used semivariogram is 
\Beq
\label{Eq_Power_Variogram}
\gamma_W(\bm{x})= \left( \| \bm{x} \| / \kappa \right)^{\psi}, \quad \bm{x} \in \mathbb{R}^d,
\Eeq
where $\kappa>0$ and $\psi \in (0, 2]$ are the range and the smoothness parameters, respectively, and $\|\cdot \|$ denotes the Euclidean norm. The Smith random field with positive definite covariance matrix $\Sigma$ \citep{smith1990max} corresponds to the Brown--Resnick field associated with the semivariogram 
\Beq
\label{Eq_Variogram_Smith_Field}
\gamma_W(\bm{x})=\bm{x}' \Sigma^{-1} \bm{x}/2, \quad \bm{x} \in \mathbb{R}^d;
\Eeq
see, e.g., \cite{huser2013composite}.

If $\{ X(\bm{x}) \}_{\bm{x} \in \mathbb{R}^d}$ is max-stable, there exist functions $\eta(\cdot) \in \Mbb{R}$, $\tau(\cdot)>0$ and $\xi(\cdot) \in \Mbb{R}$ defined on $\Mbb{R}^d$, called the
location, scale and shape functions, such that
\Beq
\label{Eq_Link_Maxstb_Simple_Maxstab}
X(\bm{x}) = 
\left \{
\begin{array}{ll}
\eta(\bm{x})-\tau(\bm{x})/\xi(\bm{x}) + \tau (\bm{x})Z(\bm{x})^{\xi(\bm{x})}/\xi(\bm{x}), & \quad \xi(\bm{x}) \neq 0, \\
\eta(\bm{x}) + \tau \log Z(\bm{x}), & \quad \xi(\bm{x}) = 0,
\end{array}
\right.
\Eeq
where $\{ Z(\bm{x}) \}_{\bm{x} \in \mathbb{R}^d}$ is simple max-stable. In the following, if $\{ X(\bm{x}) \}_{\bm{x} \in \mathbb{R}^d}$ is defined by \eqref{Eq_Link_Maxstb_Simple_Maxstab} with $\{ Z(\bm{x}) \}_{\bm{x} \in \mathbb{R}^d}$ being the Brown--Resnick field associated with the semivariogram $\gamma_W$, then $X$ will be referred to as the Brown--Resnick field associated with the semivariogram $\gamma_W$ and with GEV functions $\eta(\bm{x})$, $\tau(\bm{x})$ and $\xi(\bm{x})$. If, for all $\bm{x} \in \Mbb{R}^d$, $\eta(\bm{x})=\eta$, $\tau(\bm{x})=\tau$ and $\xi(\bm{x})=\xi$, then $X$ will be termed the Brown--Resnick field associated with the semivariogram $\gamma_W$ and with GEV parameters $\eta$, $\tau$ and $\xi$.

\subsection{Theoretical contributions}
\label{Subsec_TheoreticalContribution}

Several dependence measures for max-stable vectors and fields have been introduced in the literature: the extremal coefficient \citep[e.g.,][]{schlather2003dependence}, the F-madogram \citep{cooley2006variograms} and the $\lambda$-madogram \citep{naveau2009modelling}, among others. Here we propose a new spatial dependence measure which is the correlation of powers of max-stable vectors/fields and not of max-stable vectors/fields themselves. As explained in Section \ref{Sec_Intro}, taking power transforms when using correlation is standard practice when dealing with financial time series. For $\bm{X}$ being defined by \eqref{Eq_LinkMaxstbVect_SimpleMaxstabVect} with $(Z_1, Z_2)'$ following the Hüsler--Reiss distribution \eqref{Eq_HuslerReissDist}, we study $\Mr{Corr}(X_1^{\beta_1}, X_2^{\beta_2})$, where $\beta_i \in \mathbb{N}_*$ such that $\beta_i \xi_i < 1/2$, and Corr denotes the correlation. This allows obtaining the expression of $\Mr{Corr}( X^{\beta(\bm{x}_1)}(\bm{x}_1), X^{\beta(\bm{x}_2)}(\bm{x}_2))$, $\bm{x}_1, \bm{x}_2 \in \mathbb{R}^2$, where $X$ is the Brown--Resnick field associated with any semivariogram and with GEV functions $\eta(\bm{x})$, $\tau(\bm{x})$, $\xi(\bm{x})$, and $\beta(\bm{x})$ is a function taking values in $\Mbb{N}_*$ such that $\beta(\bm{x}) \xi(\bm{x}) < 1/2$ for all $\bm{x} \in \Mbb{R}^2$. If those GEV functions and $\beta(\bm{x})$ are not spatially constant, the field $\{ X^{\beta(\bm{x})}(\bm{x}) \}_{\bm{x}\in \Mbb{R}^2}$ is not second-order stationary and its correlation function does not only depend on the lag vector. Taking constant GEV and power functions as in the case study is however reasonable when the region considered is fairly homogeneous (in terms, e.g., of altitude, weather influences and distance to a coastline) or not too large. Moreover, every non-stationary random field can be approximated by piecewise stationary fields; see \cite{koch2019SpatialRiskAxioms} and references therein. Therefore, our main focus will be on
\Beq
\label{Eq_DepMeas}
\mathcal{D}_{X, \beta}(\bm{x}_1, \bm{x}_2)=\Mr{Corr} \left( X^{\beta}(\bm{x}_1), X^{\beta}(\bm{x}_2) \right), \quad \bm{x}_1, \bm{x}_2 \in \mathbb{R}^2,
\Eeq
where $X$ is the Brown--Resnick field with GEV parameters $\eta$, $\tau$, $\xi$, and $\beta \in \mathbb{N}_*$ such that $\beta \xi < 1/2$; in this setting, $X^{\beta}$ is second-order stationary. 

Rescaled powers of max-stable random fields constitute appropriate models for the field of insured costs from high wind speeds (see Section \ref{Subsec_PowersDamageFunction} for details) and so \eqref{Eq_DepMeas} can be viewed as the correlation function of insured wind costs, thus being useful for actuarial practice. The formulas we derive in this section also make possible the estimation of the parameters of Hüsler--Reiss distributions and Brown--Resnick random fields by equalizing the theoretical correlation and the empirical one computed on the dataset (method of moments); this may be investigated in a subsequent work. Appendix \ref{Sec_Appendix_SimpleBRfield}, which deals with simple Brown--Resnick fields, can be useful in this respect.

\medskip

Before presenting the main results, we recall the importance of correlation for risk assessment in a spatial context, which justifies studying the correlation despite the existence of dependence measures specifically designed for max-stable fields. Anyway, powers of max-stable fields are not necessarily max-stable themselves, making these measures not directly usable.  

Denote by $\mathcal{C}$ the set of all real-valued and measurable\footnote{Throughout, when applied to random fields, the adjective ``measurable'' means ``jointly measurable''.} random fields on $\mathbb{R}^2$ having almost surely (a.s.) locally integrable sample paths. Furthermore, let $\mathcal{A}$ denote the set of all compact subsets of $\mathds{R}^2$ with a strictly positive Lebesgue measure and $\mathcal{A}_c$ be the set of all convex elements of $\mathcal{A}$. For any $A \in \mathcal{A}_c$, let $\bm{b}_A$ denote its barycenter and $\lambda A$ be the area obtained by applying to $A$ a homothety with center $\bm{b}_A$ and ratio $\lambda >0$.

Let $C \in \mathcal{C}$ model the insured cost per surface unit triggered by events belonging to a specific class (e.g., European windstorms) during a given period of time.
The total insured loss on a given region $A \in \mathcal{A}$ can thus be modelled by
$$ L \left(A, C \right)=\int_A C(\bm{x}) \Mr{d}\bm{x},$$
and Theorem 4 in \cite{koch2019SpatialRiskAxioms} yields
\Beq
\label{Eq_ExpressionVarianceIntegral}
\Mr{Var}\left(L\left(A, C\right)\right) = \Mr{Var}\left(C(\bm{0})\right) \int_A \int_A \Mr{Corr} \left( C(\bm{x}), C(\bm{y}) \right) \Mr{d}\bm{x} \Mr{d}\bm{y}.
\Eeq
Hence the correlation is explicitly involved in the variance of the total insured loss, which is a key quantity for an insurance company.

Moreover, assuming that $C$ belongs to $\mathcal{C}$, has a constant expectation and satisfies the CLT \citep[see][Section 3.1]{koch2017TCL} (which holds for $C = X^{\beta}$ if $X$ is the Brown--Resnick field associated with the semivariogram \eqref{Eq_Power_Variogram} and with GEV parameters $\eta$, $\tau$ and $\xi$ such that $\beta \xi < 1/2$),
$$ \sigma = \left[ \Mr{Var}\left(C(\bm{0})\right)\int_{\Mbb{R}^2} \Mr{Corr} \left( C(\bm{0}), C(\bm{x}) \right) \Mr{d}\bm{x} \right]^{1/2}$$
is the standard deviation of the normal distribution appearing in the CLT of $C$ and is thus \citep[][Theorems 2 and 5]{koch2019SpatialRiskAxioms} essential for the asymptotic distribution of $L(\lambda A, C)$ and the asymptotic properties of spatial risk measures induced by the field $C$ and associated with value-at-risk and expected shortfall. The analysis of \eqref{Eq_DepMeas} is thereby insightful for the risk assessment of wind damage; the formulas derived in this paper are used in an ongoing study.

\medskip

As \eqref{Eq_LinkMaxstbVect_SimpleMaxstabVect} specifies a transformation of simple max-stable random vectors, we first deal with such vectors. In the next theorem, we take a random vector $\bm{Z}=(Z_1, Z_2)'$ following the Hüsler--Reiss distribution \eqref{Eq_HuslerReissDist}. If $\beta \in \mathbb{R}$ and $Z$ is a standard Fr\'echet random variable, it is easily shown that $Z^{\beta}$ has a finite second moment if and only if $\beta < 1/2$, which imposes, in order for the covariance $\Mr{Cov}(Z_1^{\beta_1}, Z_2^{\beta_2})$ to exist, that $\beta_1, \beta_2 < 1/2$. This covariance and other expressions throughout this section involve, for $\beta_1, \beta_2 < 1/2$,
\Beq
\label{Eq_Def_g_beta1_beta2}
I_{\beta_1, \beta_2}(h) =
\left \{
\begin{array}{ll}
\Gamma(1-\beta_1-\beta_2), & \mbox{if} \quad  h=0, \\ 
\displaystyle \int_{0}^{\infty} \theta^{\beta_2} \Big[ C_2(\theta,h) \  C_1(\theta,h)^{\beta_1+\beta_2 -2} \ \Gamma(2-\beta_1-\beta_2) \\ \qquad + C_3(\theta,h) \ C_1(\theta,h)^{\beta_1+\beta_2-1} \ \Gamma(1-\beta_1-\beta_2) \Big] \mathrm{d}\theta, & \mbox{if} \quad h>0,
\end{array}
\right.
\Eeq
where $\Gamma$ denotes the gamma function, and, for $\theta, h > 0$,
\begin{align*}
C_1(\theta,h) &=   \Phi \left( \frac{h}{2}+ \frac{\log \theta}{h} \right)+\frac{1}{\theta} \Phi \left( \frac{h}{2}- \frac{\log \theta}{h} \right), \\
C_2(\theta,h) &= \left[   \Phi \left( \frac{h}{2}+ \frac{\log \theta}{h} \right) +\frac{1}{h} \phi \left( \frac{h}{2}+ \frac{\log \theta}{h} \right)-\frac{1}{h \theta} \phi \left(  \frac{h}{2}-\frac{\log \theta}{h} \right) \right]
\\& \quad \ \times \left[ \frac{1}{\theta^2} \Phi \left(  \frac{h}{2}- \frac{\log \theta}{h} \right)+\frac{1}{h \theta^2} \phi \left( \frac{h}{2}- \frac{\log \theta}{h} \right)-\frac{1}{h \theta} \phi \left( \frac{h}{2}+ \frac{\log \theta}{h} \right) \right], \\
C_3(\theta,h) &= \frac{1}{h^2 \theta} \left( \frac{h}{2}- \frac{\log \theta}{h} \right) \ \phi \left( \frac{h}{2}+ \frac{\log \theta}{h} \right)+\frac{1}{h^2 \theta^2} \left( \frac{h}{2}+ \frac{ \log \theta}{h} \right) \phi \left( \frac{h}{2}- \frac{\log \theta}{h}  \right),
\end{align*}
with $\Phi$ and $\phi$ denoting the standard Gaussian distribution and density functions, respectively. 

In order to obtain the next result, we take advantage of the radius/angle decomposition of multivariate extreme-value distributions.
\begin{Th}
\label{Th_CovarianceSimpleHR}
Let $\bm{Z}=(Z_1, Z_2)'$ follow the Hüsler--Reiss distribution \eqref{Eq_HuslerReissDist} with parameter $h$. Then, for all $\beta_1, \beta_2 < 1/2$,
\Beq
\label{Eq_CovarianceSimpleHR}
\mathrm{Cov} \left( Z_1^{\beta_1}, Z_2^{\beta_2} \right) = I_{\beta_1, \beta_2} \left( h \right)- \Gamma(1-\beta_1) \Gamma(1-\beta_2).
\Eeq
\end{Th}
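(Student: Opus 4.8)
The plan is to reduce the statement to the computation of the joint product moment $\mathbb{E}[Z_1^{\beta_1} Z_2^{\beta_2}]$, since $\mathrm{Cov}(Z_1^{\beta_1}, Z_2^{\beta_2}) = \mathbb{E}[Z_1^{\beta_1} Z_2^{\beta_2}] - \mathbb{E}[Z_1^{\beta_1}]\,\mathbb{E}[Z_2^{\beta_2}]$. The marginal factors are immediate: since each $Z_i$ is standard Fréchet with density $z^{-2} e^{-1/z}$, the substitution $u = 1/z$ gives $\mathbb{E}[Z_i^{\beta_i}] = \int_0^\infty u^{-\beta_i} e^{-u}\,\mathrm{d}u = \Gamma(1-\beta_i)$ for $\beta_i < 1$, which produces the term $\Gamma(1-\beta_1)\Gamma(1-\beta_2)$ in \eqref{Eq_CovarianceSimpleHR}. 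It then remains to show $\mathbb{E}[Z_1^{\beta_1} Z_2^{\beta_2}] = I_{\beta_1,\beta_2}(h)$. The case $h=0$ is complete dependence, where $Z_1 = Z_2$ a.s., so the moment equals $\mathbb{E}[Z^{\beta_1+\beta_2}] = \Gamma(1-\beta_1-\beta_2)$, matching the first branch of \eqref{Eq_Def_g_beta1_beta2}; from here on I assume $h>0$.

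Writing $H(z_1,z_2;h) = \exp(-V(z_1,z_2))$, the exponent function $V$ is homogeneous of degree $-1$ and the (absolutely continuous) joint density is $f = (V_1 V_2 - V_{12})\,e^{-V}$, where subscripts denote partial derivatives; note $f \ge 0$ since $V_1, V_2 < 0$ and $V_{12} \le 0$. The key idea, the radius/angle decomposition, is to pass to the homogeneous coordinates $r = z_1$ and $\theta = z_2/z_1$, with Jacobian $\mathrm{d}z_1\,\mathrm{d}z_2 = r\,\mathrm{d}r\,\mathrm{d}\theta$. By $(-1)$-homogeneity, $V(r, r\theta) = r^{-1} v(\theta)$ with $v(\theta) := V(1,\theta)$, and one checks directly that $v(\theta) = C_1(\theta,h)$. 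Differentiating the homogeneous representation, $V_1 V_2 - V_{12}$ collapses to $-r^{-4} v'(\theta)\big(v(\theta)+\theta v'(\theta)\big) + r^{-3}\big(2v'(\theta)+\theta v''(\theta)\big)$, so that, after multiplying by $z_1^{\beta_1} z_2^{\beta_2} = r^{\beta_1+\beta_2}\theta^{\beta_2}$ and the Jacobian, the integrand separates into a power of $r$ times $e^{-v(\theta)/r}$ and purely angular factors.

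Since $z_1^{\beta_1} z_2^{\beta_2} f \ge 0$ and its total integral is the finite moment, Tonelli's theorem licenses integrating over $r$ first. Using $\int_0^\infty r^{a} e^{-v/r}\,\mathrm{d}r = v^{a+1}\Gamma(-a-1)$, valid for $a<-1$, the two powers $a = \beta_1+\beta_2-3$ and $a = \beta_1+\beta_2-2$ yield the factors $v^{\beta_1+\beta_2-2}\Gamma(2-\beta_1-\beta_2)$ and $v^{\beta_1+\beta_2-1}\Gamma(1-\beta_1-\beta_2)$; convergence at $r=\infty$ is precisely what forces $\beta_1+\beta_2<1$, for which the hypothesis $\beta_i<1/2$ suffices. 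This reduces the product moment to the single angular integral in \eqref{Eq_Def_g_beta1_beta2} with $C_1 = v$, provided I identify the angular coefficients as $C_2 = -v'\,(v+\theta v')$ and $C_3 = 2v'+\theta v''$.

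The final, and most laborious, step is this identification. Setting $a(\theta) = \tfrac{h}{2}+\tfrac{\log\theta}{h}$ and $b(\theta) = \tfrac{h}{2}-\tfrac{\log\theta}{h}$, so that $a+b=h$, $a'=1/(h\theta)$ and $b'=-1/(h\theta)$, I would compute $v'$ and $v''$ explicitly using $\phi'(x)=-x\phi(x)$. One then finds that $v+\theta v'$ and $-v'$ equal, respectively, the first and second bracketed factors of $C_2$, while the relation $1-a/h = b/h$ turns $2v'+\theta v''$ into exactly $C_3$. This matching is where the bulk of the effort lies and is the main obstacle: the calculus is elementary but demands careful bookkeeping of the $\Phi$- and $\phi$-terms together with repeated use of $a+b=h$. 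Assembling the product moment with the marginal factors then gives \eqref{Eq_CovarianceSimpleHR}.
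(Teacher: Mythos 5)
Your proposal is correct and follows essentially the same route as the paper: the radius/angle change of variables $(z_1,z_2)=(r,r\theta)$ with Jacobian $r$, radial integration via $\int_0^\infty r^{a}e^{-v/r}\,\mathrm{d}r=v^{a+1}\Gamma(-a-1)$ (which the paper phrases as moments of Fr\'echet distributions with scale $C_1(\theta,h)$), the same reduction of the covariance to the product moment, and the same treatment of $h=0$ via complete dependence. The only presentational difference is that you obtain the factorization of the density in homogeneous coordinates, $f(r,r\theta)=e^{-v(\theta)/r}\left(-r^{-4}v'(v+\theta v')+r^{-3}(2v'+\theta v'')\right)$, from the $(-1)$-homogeneity of the exponent function and then verify $C_1=v$, $C_2=-v'(v+\theta v')$, $C_3=2v'+\theta v''$ by direct differentiation (all three identifications are indeed correct), whereas the paper substitutes the explicit Hüsler--Reiss density of \cite{padoan2010likelihood} into the transformed integral.
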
 
\begin{Rq}
Theorem \ref{Th_CovarianceSimpleHR}, which is a cornerstone of this section, stems from unpublished work in Section 4.5.1 of the PhD thesis by \cite{kochphd2014}.
\end{Rq}
We adapt Theorem \ref{Th_CovarianceSimpleHR} to the more realistic setting where the margins are general GEV distributions with non-zero shape parameters. The support of such margins possibly includes strictly negative values, and we thus consider powers which are strictly positive integers.
\begin{Th}
\label{Th_Cov_Maxstab_Real_Marg}
Let $\bm{Z}$ having \eqref{Eq_HuslerReissDist} as distribution function with parameter $h$, and let $\bm{X}=(X_1, X_2)'$ be the transformed version of $\bm{Z}$ by \eqref{Eq_LinkMaxstbVect_SimpleMaxstabVect} with $\eta_i \in \mathbb{R}$, $\tau_i >0$ and $\xi_i \neq 0$, $i=1, 2$. Moreover, let $\beta_i \in \mathbb{N}_*$ such that $\beta_i \xi_i < 1/2$, $i=1,2$.
Then,
\begin{align}
\Mr{Cov} \left( X_1^{\beta_1}, X_2^{\beta_2} \right)
&= \sum_{k_1=0}^{\beta_1} \sum_{k_2=0}^{\beta_2}  B_{k_1, \beta_1, \eta_1, \tau_1, \xi_1, k_2, \beta_2, \eta_2, \tau_2, \xi_2} \ I_{(\beta_1-k_1)\xi_1, (\beta_2-k_2) \xi_2} \left( h\right) \nonumber
\\& \quad - \sum_{k_1=0}^{\beta_1} \sum_{k_2=0}^{\beta_2} B_{k_1, \beta_1, \eta_1, \tau_1, \xi_1, k_2, \beta_2, \eta_2, \tau_2, \xi_2} \ \Gamma(1-[\beta_1-k_1]\xi_1) \Gamma(1-[\beta_2-k_2]\xi_2),
\label{EqCovHRDiffBetaMar}
\end{align}
where 
$$ B_{k_1, \beta_1, \eta_1, \tau_1, \xi_1, k_2, \beta_2, \eta_2, \tau_2, \xi_2} = {\beta_1 \choose k_1} \left( \eta_1-\frac{\tau_1}{\xi_1} \right)^{k_1} \left( \frac{\tau_1}{\xi_1} \right)^{\beta_1-k_1} {\beta_2 \choose k_2} \left( \eta_2-\frac{\tau_2}{\xi_2} \right)^{k_2} \left( \frac{\tau_2}{\xi_2} \right)^{\beta_2-k_2},$$
and, for $i=1, 2$,
\Beq
\Mr{Var} \left( X_i^{\beta_i} \right) = \sum_{k_1=0}^{\beta_i} \sum_{k_2=0}^{\beta_i} B_{k_1, k_2, \beta_i, \eta_i, \tau_i, \xi_i} \left \{ \Gamma(1-\xi_i[2 \beta_i - k_1 -k_2])- \Gamma(1-[\beta_i-k_1]\xi_i) \Gamma(1-[\beta_i-k_2]\xi_i) \right \},
\label{Eq_Var_GEV_beta_i}
\Eeq
where, for $\eta \in \mathbb{R}$, $\tau >0$, $\xi \neq 0$, and $\beta \in \mathbb{N}_*$ such that $\beta \xi < 1/2$,
$$ B_{k_1, k_2, \beta, \eta, \tau, \xi}= {\beta \choose k_1} {\beta \choose k_2} \left( \eta-\frac{\tau}{\xi} \right)^{k_1+k_2} \left( \frac{\tau}{\xi} \right)^{2\beta-(k_1+k_2)}.$$
\end{Th}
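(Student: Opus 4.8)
The plan is to reduce both formulas to Theorem \ref{Th_CovarianceSimpleHR} (for the cross-covariance) and to the elementary Fréchet moment identity $\Mbb{E}[Z^{\gamma}]=\Gamma(1-\gamma)$, valid for any standard Fréchet variable $Z$ and any $\gamma<1$ (for the variance), through a single binomial expansion. First I would set $a_i = \eta_i - \tau_i/\xi_i$ and $b_i = \tau_i/\xi_i$, so that \eqref{Eq_LinkMaxstbVect_SimpleMaxstabVect} reads $X_i = a_i + b_i Z_i^{\xi_i}$ in the case $\xi_i \neq 0$. Raising to the integer power $\beta_i$ and expanding gives
$$ X_i^{\beta_i} = \sum_{k_i=0}^{\beta_i} \binom{\beta_i}{k_i} a_i^{k_i}\, b_i^{\beta_i - k_i}\, Z_i^{(\beta_i-k_i)\xi_i}, $$
so that $X_i^{\beta_i}$ is a finite linear combination of powers $Z_i^{(\beta_i-k_i)\xi_i}$ whose coefficients, once multiplied across the two margins, reproduce exactly the constants $B$ of the statement.

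For the covariance I would then invoke bilinearity:
$$ \Mr{Cov}\left( X_1^{\beta_1}, X_2^{\beta_2} \right) = \sum_{k_1=0}^{\beta_1} \sum_{k_2=0}^{\beta_2} \binom{\beta_1}{k_1} a_1^{k_1} b_1^{\beta_1-k_1} \binom{\beta_2}{k_2} a_2^{k_2} b_2^{\beta_2-k_2}\ \Mr{Cov}\left(Z_1^{(\beta_1-k_1)\xi_1}, Z_2^{(\beta_2-k_2)\xi_2}\right), $$
and apply Theorem \ref{Th_CovarianceSimpleHR} to each summand with exponents $(\beta_1-k_1)\xi_1$ and $(\beta_2-k_2)\xi_2$. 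Splitting the resulting $I_{\cdot,\cdot}(h)$ and $\Gamma(\cdot)\Gamma(\cdot)$ contributions yields the two double sums of \eqref{EqCovHRDiffBetaMar}. The variance \eqref{Eq_Var_GEV_beta_i} comes from the same expansion applied to $\Mr{Cov}(X_i^{\beta_i}, X_i^{\beta_i})$; each summand is now a same-margin covariance $\Mr{Cov}(Z_i^{(\beta_i-k_1)\xi_i}, Z_i^{(\beta_i-k_2)\xi_i})$, which I would evaluate directly as $\Mbb{E}[Z_i^{(2\beta_i-k_1-k_2)\xi_i}] - \Mbb{E}[Z_i^{(\beta_i-k_1)\xi_i}]\,\Mbb{E}[Z_i^{(\beta_i-k_2)\xi_i}]$, equal to $\Gamma(1 - \xi_i[2\beta_i - k_1 - k_2]) - \Gamma(1-[\beta_i-k_1]\xi_i)\Gamma(1-[\beta_i-k_2]\xi_i)$ by the moment identity.

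The only step requiring genuine care — and the one most prone to error — is checking that every exponent arising in the expansion satisfies the relevant moment condition, so that all the $\Gamma$ values are finite and Theorem \ref{Th_CovarianceSimpleHR} is legitimately applicable. Since $0 \leq \beta_i - k_i \leq \beta_i$, a short sign discussion settles this: when $\xi_i>0$ one has $(\beta_i-k_i)\xi_i \leq \beta_i \xi_i < 1/2$, while when $\xi_i<0$ one has $(\beta_i-k_i)\xi_i \leq 0 < 1/2$, so each cross-exponent lies below $1/2$ as Theorem \ref{Th_CovarianceSimpleHR} demands. Likewise $\xi_i[2\beta_i-k_1-k_2] < 1$ holds because $2\beta_i\xi_i < 1$ when $\xi_i>0$ and $2\beta_i-k_1-k_2 \geq 0$ when $\xi_i<0$, which guarantees that $\Gamma(1 - \xi_i[2\beta_i - k_1 - k_2])$ is well defined. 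Once these bounds are in place, both identities follow by routine bookkeeping of the binomial coefficients.
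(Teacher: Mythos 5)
Your proposal is correct and follows essentially the same route as the paper: a binomial expansion of $X_i^{\beta_i} = (a_i + b_i Z_i^{\xi_i})^{\beta_i}$, bilinearity of covariance, Theorem \ref{Th_CovarianceSimpleHR} applied to each cross term, and the standard Fr\'echet moment identity $\Mbb{E}[Z^{\gamma}]=\Gamma(1-\gamma)$ for the variance. Your explicit sign discussion verifying that all exponents $(\beta_i-k_i)\xi_i < 1/2$ and $\xi_i[2\beta_i-k_1-k_2]<1$ is a welcome addition that the paper leaves implicit.
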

The combination of \eqref{EqCovHRDiffBetaMar} and \eqref{Eq_Var_GEV_beta_i} immediately yields the expression of $\Mr{Corr}( X_1^{\beta_1}, X_2^{\beta_2})$. We have assumed in Theorem \ref{Th_Cov_Maxstab_Real_Marg} that $\xi_i \neq 0$ but, as shown now, the case $\xi_1=\xi_2=0$ is easily recovered by taking $\xi_1=\xi_2=\xi$ and letting $\xi$ tend to $0$ in \eqref{EqCovHRDiffBetaMar}.
\begin{Prop}
\label{Prop_Continuity_Cov_xi_0}
Let $\beta_1, \beta_2 \in \mathbb{N}_*$, $\varepsilon>0$ and $S_{\beta_1, \beta_2, \varepsilon} =  \{ \xi \neq 0:  \xi <\min \{ 1 /[2 \beta_1 (1+\varepsilon)], 1 /[2 \beta_2 (1+\varepsilon)] \} \}$. Let $\bm{Z}$ be a simple max-stable vector with continuous exponent function and let $\bm{X}_{\xi}=(X_{1, \xi}, X_{2, \xi})'$ be the transformed version of $\bm{Z}$ by \eqref{Eq_LinkMaxstbVect_SimpleMaxstabVect} with $\eta_i \in \mathbb{R}$, $\tau_i >0$ and $\xi_i = \xi \in S_{\beta_1, \beta_2, \epsilon}$, $i=1, 2$. Let $\bm{X}_{0}=(X_{1, 0}, X_{2, 0})'$ be built as $\bm{X}_{\xi}$ but with $\xi=0$. 
Then,
$$ \lim_{\xi \to 0} \mathrm{Cov} \left( X_{1, \xi}^{\beta_1},  X_{2, \xi}^{\beta_2} \right) =  \mathrm{Cov} \left( X_{1, 0}^{\beta_1}, X_{2, 0}^{\beta_2} \right).$$
\end{Prop}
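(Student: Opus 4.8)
The plan is to deduce the result from almost sure convergence together with uniform integrability, via the Vitali convergence theorem. Writing $X_{i,\xi} = \eta_i + \tau_i (Z_i^{\xi}-1)/\xi$ for $\xi \neq 0$, the elementary limit $(z^{\xi}-1)/\xi \to \log z$ as $\xi \to 0$ gives, for every realization, $X_{i,\xi} \to \eta_i + \tau_i \log Z_i = X_{i,0}$, hence $X_{i,\xi}^{\beta_i} \to X_{i,0}^{\beta_i}$ and $X_{1,\xi}^{\beta_1} X_{2,\xi}^{\beta_2} \to X_{1,0}^{\beta_1} X_{2,0}^{\beta_2}$ almost surely. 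Since $\mathrm{Cov}(X_{1,\xi}^{\beta_1}, X_{2,\xi}^{\beta_2}) = \mathbb{E}[X_{1,\xi}^{\beta_1} X_{2,\xi}^{\beta_2}] - \mathbb{E}[X_{1,\xi}^{\beta_1}]\,\mathbb{E}[X_{2,\xi}^{\beta_2}]$, it suffices to pass to the limit in each of these three expectations, and for that it is enough to establish uniform integrability of the families $\{X_{i,\xi}^{\beta_i}\}$ and $\{X_{1,\xi}^{\beta_1} X_{2,\xi}^{\beta_2}\}$ as $\xi \to 0$.

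The crux, and the step I expect to be the main obstacle, is a $\xi$-free integrable domination. Because only the limit as $\xi \to 0$ matters, and a punctured neighbourhood of $0$ lies inside $S_{\beta_1,\beta_2,\varepsilon}$, I would restrict attention to $0 < |\xi| \le \xi_0$ for a small $\xi_0$ to be fixed. The key estimate I would prove is that, for all $z > 0$ and all $|\xi| \le \xi_0$,
\[
\left| \frac{z^{\xi}-1}{\xi} \right| \le |\log z|\,\bigl( z^{\xi_0} + z^{-\xi_0} \bigr),
\]
which follows from the elementary bounds $e^{t}-1 \le t\,e^{t}$ and $1-e^{-t} \le t$ for $t \ge 0$, applied separately to the four sign combinations of $\xi$ and $\log z$; crucially the bound carries no $1/\xi$ factor, so the singularity of the coefficient $\tau_i/\xi$ is absorbed. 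Consequently $|X_{i,\xi}| \le g_i(Z_i)$ with $g_i(z) = |\eta_i| + \tau_i |\log z|(z^{\xi_0}+z^{-\xi_0})$, a dominating function independent of $\xi$.

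It then remains to check that $g_i(Z_i)$ has a high enough moment. Since $Z_i$ is standard Fréchet, $\mathbb{E}[Z_i^{p}] = \Gamma(1-p) < \infty$ precisely for $p < 1$, the logarithmic and the $z^{-\xi_0}$ factors being harmless (the former does not change the critical tail exponent, the latter because the Fréchet density $z^{-2}e^{-1/z}$ vanishes super-polynomially as $z \to 0$). Thus $\mathbb{E}[g_i(Z_i)^{2\beta_i(1+\delta)}] < \infty$ as soon as $2\beta_i \xi_0 (1+\delta) < 1$, which I can arrange by fixing, say, $\delta = 1$ and then choosing $\xi_0 < 1/\bigl(4\max(\beta_1,\beta_2)\bigr)$. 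This yields $\sup_{0<|\xi|\le\xi_0} \mathbb{E}\bigl[|X_{i,\xi}^{\beta_i}|^{2(1+\delta)}\bigr] < \infty$, so each family $\{X_{i,\xi}^{\beta_i}\}$ is bounded in $L^{2(1+\delta)}$ and hence uniformly integrable; by Cauchy--Schwarz, $\mathbb{E}[|X_{1,\xi}^{\beta_1} X_{2,\xi}^{\beta_2}|^{1+\delta}] \le \prod_{i=1,2}\mathbb{E}[|X_{i,\xi}^{\beta_i}|^{2(1+\delta)}]^{1/2}$ is likewise uniformly bounded, giving uniform integrability of the product family. Combining the almost sure convergence with uniform integrability, the Vitali convergence theorem yields $\mathbb{E}[X_{i,\xi}^{\beta_i}] \to \mathbb{E}[X_{i,0}^{\beta_i}]$ and $\mathbb{E}[X_{1,\xi}^{\beta_1} X_{2,\xi}^{\beta_2}] \to \mathbb{E}[X_{1,0}^{\beta_1} X_{2,0}^{\beta_2}]$, and therefore the convergence of the covariances, as claimed. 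The continuity of the exponent function of $\bm{Z}$ is not needed beyond guaranteeing the standard Fréchet margins used in the moment computation.
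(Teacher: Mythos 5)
Your proposal is correct, and it takes a genuinely different route from the paper's proof on the two technical points that matter. For the convergence step, you exploit the pathwise coupling built into the statement --- all $\bm{X}_{\xi}$ are deterministic transforms of the \emph{same} $\bm{Z}$ via \eqref{Eq_LinkMaxstbVect_SimpleMaxstabVect} --- so the elementary limit $(z^{\xi}-1)/\xi \to \log z$ gives almost sure convergence directly; the paper instead proves convergence in distribution of $([Z_1^{\xi}-1]/\xi, [Z_2^{\xi}-1]/\xi)'$ by passing through the exponent function $V$ and the continuous mapping theorem, and then applies Theorem 3.5 of \cite{billingsley1999convergence}. Your closing remark is accurate: the continuity of $V$, an explicit hypothesis of the proposition, is exactly what powers the paper's distributional step and becomes dispensable on your route, a small gain in generality. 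For uniform integrability, the paper changes variables to write $\mathbb{E}[|X_{i,\xi}|^{\alpha}]$ as $\int_0^1 [\eta_i + \tau_i(z^{-\xi}-1)/\xi]^{\alpha} e^{-z}\,\mathrm{d}z$ and shows that $\xi \mapsto (z^{-\xi}-1)/\xi$ is increasing (via $\log u \geq 1 - 1/u$), so the supremum over $\xi \in (0, \xi^*]$ is attained at $\xi^*$; you instead prove the $\xi$-free domination $|(z^{\xi}-1)/\xi| \leq |\log z|\,(z^{\xi_0} + z^{-\xi_0})$, which is valid (it follows from $|(e^t-1)/t| \leq e^{|t|}$ with $t = \xi \log z$), absorbs the $1/\xi$ singularity as you note, and --- a real advantage --- covers $0 < |\xi| \leq \xi_0$ of both signs at once, whereas the paper handles $\xi < 0$ only by an appeal to ``similar arguments.'' Your moment check for the dominating variable is sound: the logarithmic factor does not move the Fréchet critical exponent and the density $z^{-2}e^{-1/z}$ kills the $z^{-\xi_0}$ factor near the origin, and restricting to a punctured neighbourhood of $0$ inside $S_{\beta_1,\beta_2,\varepsilon}$ is legitimate since only the limit $\xi \to 0$ is at stake. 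Both proofs then close identically: Cauchy--Schwarz for the product family, uniform integrability plus convergence to pass to the limit in the three expectations defining the covariance. On balance, your fixed choice $\delta = 1$, $\xi_0 < 1/(4\max(\beta_1,\beta_2))$ is more ad hoc than the paper's use of the sharp threshold encoded in $S_{\beta_1,\beta_2,\varepsilon}$, but it is entirely sufficient for the stated limit, and your argument is the cleaner and marginally more general of the two.
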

Using similar arguments, we get $\lim_{\xi \to 0} \mathrm{Var}( X_{i, \xi}^{\beta_i})=\mathrm{Var}( X_{i, 0}^{\beta_i})$, which yields
$$ \lim_{\xi \to 0} \mathrm{Corr} \left( X_{1, \xi}^{\beta_1},  X_{2, \xi}^{\beta_2} \right) =  \mathrm{Corr} \left( X_{1, 0}^{\beta_1},  X_{2, 0}^{\beta_2} \right).$$
This result obviously applies if $\bm{Z}$ follows the Hüsler--Reiss distribution  \eqref{Eq_HuslerReissDist}. 

Next proposition, which is an immediate corollary of Theorem \ref{Th_Cov_Maxstab_Real_Marg}, provides all the necessary ingredients for the computation of our dependence measure $\mathcal{D}_{X, \beta}$ in \eqref{Eq_DepMeas}.
\begin{Prop}
\label{EqCovPowersHRSameMarginsSameBeta}
Under the same assumptions as in Theorem \ref{Th_Cov_Maxstab_Real_Marg} but with $\eta_1=\eta_2=\eta$, $\tau_1=\tau_2=\tau$, $\xi_1=\xi_2=\xi$ and $\beta_1=\beta_2=\beta$, we have
\Beq
\Mr{Cov} \left( X_1^{\beta}, X_2^{\beta} \right)
= g_{\beta, \eta, \tau, \xi} \left( h \right) - \sum_{k_1=0}^{\beta} \sum_{k_2=0}^{\beta} B_{k_1, k_2, \beta, \eta, \tau, \xi} \ \Gamma(1-[\beta-k_1]\xi) \Gamma(1-[\beta-k_2]\xi),
\label{Eq_Cov_Maxstab_Real_Marg_Eq_Coeff}
\Eeq
with
\Beq
g_{\beta, \eta, \tau, \xi}(h) = \sum_{k_1=0}^{\beta} \sum_{k_2=0}^{\beta} B_{k_1, k_2, \beta, \eta, \tau, \xi} \ I_{(\beta-k_1)\xi, (\beta-k_2) \xi} \left( h \right),
\label{Eq_Function_gtilde}
\Eeq
and, for $i=1, 2$,
\Beq
\Mr{Var} \left( X_i^{\beta} \right) = \sum_{k_1=0}^{\beta} \sum_{k_2=0}^{\beta} B_{k_1, k_2, \beta, \eta, \tau, \xi} \left \{ \Gamma(1-\xi[2 \beta - k_1 -k_2])- \Gamma(1-[\beta-k_1]\xi) \Gamma(1-[\beta-k_2]\xi) \right \}.
\label{Eq_Var_GEV_beta}
\Eeq
\end{Prop}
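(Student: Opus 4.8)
The plan is to obtain Proposition \ref{EqCovPowersHRSameMarginsSameBeta} by directly specializing Theorem \ref{Th_Cov_Maxstab_Real_Marg} to the situation where the two margins share the same GEV parameters and the same power, since the statement is advertised as an immediate corollary. The only point requiring genuine verification is that the two separate combinatorial coefficients appearing in \eqref{EqCovHRDiffBetaMar} coalesce into the single coefficient $B_{k_1, k_2, \beta, \eta, \tau, \xi}$ already introduced for the variance.

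First I would set $\eta_1=\eta_2=\eta$, $\tau_1=\tau_2=\tau$, $\xi_1=\xi_2=\xi$ and $\beta_1=\beta_2=\beta$ in the coefficient $B_{k_1, \beta_1, \eta_1, \tau_1, \xi_1, k_2, \beta_2, \eta_2, \tau_2, \xi_2}$ of Theorem \ref{Th_Cov_Maxstab_Real_Marg}. Grouping the two binomial factors together, and likewise the two powers of $(\eta-\tau/\xi)$ and of $(\tau/\xi)$, gives exactly
$$ {\beta \choose k_1}{\beta \choose k_2}\left(\eta-\frac{\tau}{\xi}\right)^{k_1+k_2}\left(\frac{\tau}{\xi}\right)^{2\beta-(k_1+k_2)} = B_{k_1, k_2, \beta, \eta, \tau, \xi}. $$
With this identification, the first double sum in \eqref{EqCovHRDiffBetaMar} becomes $\sum_{k_1=0}^{\beta}\sum_{k_2=0}^{\beta} B_{k_1, k_2, \beta, \eta, \tau, \xi}\, I_{(\beta-k_1)\xi, (\beta-k_2)\xi}(h)$, which is precisely the definition of $g_{\beta, \eta, \tau, \xi}(h)$ in \eqref{Eq_Function_gtilde}; the second double sum turns into the subtracted term of \eqref{Eq_Cov_Maxstab_Real_Marg_Eq_Coeff}. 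Together these yield the covariance formula. The variance expression \eqref{Eq_Var_GEV_beta} is then recovered by simply dropping the index $i$ in \eqref{Eq_Var_GEV_beta_i}, since both components now carry identical parameters.

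No step here is a real obstacle: the argument is a pure algebraic substitution, the only (trivial) subtlety being the recombination of the product coefficient into $B_{k_1, k_2, \beta, \eta, \tau, \xi}$. Consequently the result is indeed immediate from Theorem \ref{Th_Cov_Maxstab_Real_Marg}, and I would present it as such rather than reproducing any of the integral computations underlying that theorem.
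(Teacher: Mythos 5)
Your specialization is correct and matches the paper exactly: the paper gives no separate proof of Proposition \ref{EqCovPowersHRSameMarginsSameBeta}, presenting it as an immediate corollary of Theorem \ref{Th_Cov_Maxstab_Real_Marg}, and your verification that $B_{k_1, \beta, \eta, \tau, \xi, k_2, \beta, \eta, \tau, \xi}$ collapses to $B_{k_1, k_2, \beta, \eta, \tau, \xi}$ is precisely the only algebraic point to check. Nothing is missing.
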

The following theorem, which is a direct consequence of \eqref{EqBivDistFuncBRField} and Proposition \ref{EqCovPowersHRSameMarginsSameBeta}, gives the expression of $\mathcal{D}_{X, \beta}$.
\begin{Th}
\label{ThAppBR}
Let $X$ be the Brown--Resnick field associated with the semivariogram $\gamma_W$ and with GEV parameters $\eta \in \Mbb{R}$, $\tau>0$, $\xi \neq 0$, and let $\beta \in \mathbb{N}_*$ such that $\beta \xi < 1/2$. Then 
\Beq
\label{Eq_ExpressionMainQuantityPaper}
\mathcal{D}_{X, \beta}(\bm{x}_1, \bm{x}_2)=\Mr{Cov} \left( X^{\beta}(\bm{x}_1), X^{\beta}(\bm{x}_2) \right)/\Mr{Var}(X^{\beta}(\bm{0})), \quad \bm{x}_1, \bm{x}_2 \in \mathbb{R}^2,
\Eeq
where $\Mr{Cov} \left( X^{\beta}(\bm{x}_1), X^{\beta}(\bm{x}_2) \right)$ is given by \eqref{Eq_Cov_Maxstab_Real_Marg_Eq_Coeff} with $h=\sqrt{2 \gamma_W(\bm{x}_2-\bm{x}_1)}$ and $\Mr{Var}(X^{\beta}(\bm{0}))$ is given by \eqref{Eq_Var_GEV_beta}. 
\end{Th}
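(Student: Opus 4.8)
The plan is to reduce everything to the bivariate Hüsler--Reiss setting already handled in Proposition \ref{EqCovPowersHRSameMarginsSameBeta}, the only genuinely field-specific inputs being the identification of the pairwise distribution and the use of stationarity. Indeed, Theorem \ref{ThAppBR} is just a numerator/denominator assembly of these earlier results, so the argument is short.

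First I would fix $\bm{x}_1, \bm{x}_2 \in \Mbb{R}^2$ and set $Z_1 = Z(\bm{x}_1)$, $Z_2 = Z(\bm{x}_2)$, where $\{ Z(\bm{x}) \}_{\bm{x} \in \Mbb{R}^d}$ is the simple Brown--Resnick field underlying $X$ through \eqref{Eq_Link_Maxstb_Simple_Maxstab}. By \eqref{EqBivDistFuncBRField}, the pair $(Z_1, Z_2)'$ follows the Hüsler--Reiss distribution \eqref{Eq_HuslerReissDist} with parameter $h = \sqrt{2 \gamma_W(\bm{x}_2 - \bm{x}_1)}$. Since the GEV functions are spatially constant and equal to $\eta$, $\tau$, $\xi$, the vector $(X(\bm{x}_1), X(\bm{x}_2))'$ is precisely the transform of $(Z_1, Z_2)'$ by \eqref{Eq_LinkMaxstbVect_SimpleMaxstabVect} with common parameters $\eta_i = \eta$, $\tau_i = \tau$, $\xi_i = \xi$ and common power $\beta_i = \beta$. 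The hypothesis $\beta \xi < 1/2$ guarantees the finiteness of the relevant second moments, so $(X(\bm{x}_1), X(\bm{x}_2))'$ meets the assumptions of Proposition \ref{EqCovPowersHRSameMarginsSameBeta}.

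Next I would apply Proposition \ref{EqCovPowersHRSameMarginsSameBeta} with this value of $h$: this directly gives $\Mr{Cov}(X^{\beta}(\bm{x}_1), X^{\beta}(\bm{x}_2))$ as in \eqref{Eq_Cov_Maxstab_Real_Marg_Eq_Coeff} and $\Mr{Var}(X^{\beta}(\bm{x}_i))$, $i=1,2$, as in \eqref{Eq_Var_GEV_beta}. Because $X$ is stationary, so is $X^{\beta}$; hence $\Mr{Var}(X^{\beta}(\bm{x}_1)) = \Mr{Var}(X^{\beta}(\bm{x}_2)) = \Mr{Var}(X^{\beta}(\bm{0}))$, and the two marginal variances appearing in the denominator of the correlation coincide.

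Finally, inserting these into the definition $\Mr{Corr}(X^{\beta}(\bm{x}_1), X^{\beta}(\bm{x}_2)) = \Mr{Cov}(X^{\beta}(\bm{x}_1), X^{\beta}(\bm{x}_2)) / \sqrt{\Mr{Var}(X^{\beta}(\bm{x}_1)) \, \Mr{Var}(X^{\beta}(\bm{x}_2))}$ and using the equality of variances, the square root collapses to the single quantity $\Mr{Var}(X^{\beta}(\bm{0}))$, which yields \eqref{Eq_ExpressionMainQuantityPaper}. There is essentially no obstacle here, the result being a bookkeeping combination of \eqref{EqBivDistFuncBRField} and Proposition \ref{EqCovPowersHRSameMarginsSameBeta}. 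The only points requiring a moment's care are verifying that the stationarity of $X$ is inherited by $X^{\beta}$ (so that the denominator simplifies to a single variance rather than a geometric mean of two a priori distinct ones) and checking that $\beta \xi < 1/2$ indeed secures square-integrability at both locations, which it does by the remark preceding Theorem \ref{Th_CovarianceSimpleHR} applied to each GEV margin.
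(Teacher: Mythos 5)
Your proposal is correct and matches the paper's own reasoning exactly: the paper presents Theorem \ref{ThAppBR} as a direct consequence of \eqref{EqBivDistFuncBRField} (which identifies the pair $(Z(\bm{x}_1), Z(\bm{x}_2))'$ as Hüsler--Reiss with $h=\sqrt{2\gamma_W(\bm{x}_2-\bm{x}_1)}$) combined with Proposition \ref{EqCovPowersHRSameMarginsSameBeta}, with stationarity collapsing the denominator to the single variance $\Mr{Var}(X^{\beta}(\bm{0}))$. Your two points of care (stationarity of $X^{\beta}$ and square-integrability under $\beta\xi<1/2$) are precisely the right ones and are handled correctly.
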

Note that the case $\xi=0$ is easily recovered as explained above.
\begin{Rq}
\label{RqCorrBRDiffMarPow}
The combination of \eqref{EqBivDistFuncBRField} and Theorem \ref{Th_Cov_Maxstab_Real_Marg} yields the following more general result than Theorem \ref{ThAppBR}. Let $\{ X(\bm{x}) \}_{\bm{x} \in \mathbb{R}^2}$ be the Brown--Resnick field associated with the semivariogram $\gamma_W$ and with GEV functions $\eta(\bm{x}) \in \Mbb{R}$, $\tau(\bm{x})>0$, $\xi(\bm{x}) \neq 0$, and let $\beta(\bm{x})$ be a function taking values in $\Mbb{N}_*$ such that $\beta(\bm{x}) \xi(\bm{x}) < 1/2$ for any $\bm{x} \in \Mbb{R}^2$. Then,
\Beq
\label{Eq_CorrVaryingBetaEtaTauXi}
\Mr{Corr} \left( X^{\beta(\bm{x}_1)}(\bm{x}_1), X^{\beta(\bm{x}_2)}(\bm{x}_2) \right)= \frac{\Mr{Cov} \left( X^{\beta(\bm{x}_1)}(\bm{x}_1), X^{\beta(\bm{x}_2)}(\bm{x}_2) \right)}{\sqrt{\Mr{Var}(X^{\beta(\bm{x}_1)}(\bm{x}_1)) \Mr{Var}(X^{\beta(\bm{x}_2)}(\bm{x}_2))}}, \quad \bm{x}_1, \bm{x}_2 \in \Mbb{R}^2,
\Eeq
where $\Mr{Cov} \left( X^{\beta(\bm{x}_1)}(\bm{x}_1), X^{\beta(\bm{x}_2)}(\bm{x}_2) \right)$ is given by \eqref{EqCovHRDiffBetaMar} with $h=\sqrt{2 \gamma_W(\bm{x}_2-\bm{x}_1)}$, $\eta_i = \eta(\bm{x}_i), \tau_i = \tau(\bm{x}_i), \xi_i = \xi(\bm{x}_i)$, $\beta_i = \beta(\bm{x}_i)$, $i=1,2$, and $\Mr{Var}(X^{\beta(\bm{x}_i)}(\bm{x}_i))$ is given by \eqref{Eq_Var_GEV_beta} with $\eta_i = \eta(\bm{x}_i), \tau_i = \tau(\bm{x}_i), \xi_i = \xi(\bm{x}_i)$, $\beta_i = \beta(\bm{x}_i)$. 
\end{Rq}
The analytical formulas in Theorems \ref{Th_CovarianceSimpleHR},  \ref{Th_Cov_Maxstab_Real_Marg}, \ref{ThAppBR}, Proposition \ref{EqCovPowersHRSameMarginsSameBeta}, and Remark \ref{RqCorrBRDiffMarPow} allow a more accurate and much faster computation of the respective quantities than using Monte Carlo methods as the involved integrals can be computed fast and with high precision using, e.g., adaptive quadrature. The Smith field being a member of the class of Brown--Resnick fields, Theorem \ref{ThAppBR} and Remark \ref{RqCorrBRDiffMarPow} also apply for $X$ being the Smith field with any covariance matrix.

The influence of the marginal parameters and of the power $\beta$ merits some theoretical comments.
Let $\bm{Z}=(Z_1, Z_2)'$ and $\bm{X}=(X_1, X_2)'$ be as in Theorem \ref{Th_Cov_Maxstab_Real_Marg} and suppose that $X_1$ and $X_2$ are a.s. strictly positive (i.e., $\xi_1, \xi_2>0$ and $\eta_1 - \tau_1/\xi_1, \eta_2 - \tau_2/\xi_2>0$). For $\eta \in \Mbb{R}$, $\tau >0$ and $\xi \neq 0$, the transformation $z \mapsto \eta -\tau /\xi + \tau z^{\xi}/\xi$, $z>0$, is strictly increasing and the same applies for $x \mapsto x^{\beta}$, $x>0$, with $\beta \in \Mbb{N}$, and $z \mapsto z^{\beta^*}$, $z>0$, with $0 < \beta^* < 1/2$. Thus, owing to the invariance of the copula of a distribution under
strictly increasing transformations of the margins, the copula of $(X_1^{\beta_1}, X_2^{\beta_2})'$ is the same whatever the values of $\beta_i \in \Mbb{N}_*$, and is the same as the copula of $(Z_1^{\beta_1^*}, Z_2^{\beta_2^*})'$ whatever the values of $\beta_i^*$ such that $0 < \beta_i^* < 1/2$. However, the correlation between two random variables does not only depend on their copula but also on their margins, and is typically not invariant under non-linear transformations. We do not have equality between $\mathrm{Corr} ( X_1^{\beta_1}, X_2^{\beta_2})$ and $\mathrm{Corr} (Z_1^{\beta_1^*}, Z_2^{\beta_2^*})$ in general, as can also be seen directly from the formulas, and this also holds in the particular case where $\bm{Z}$, $\bm{X}$ and $\beta_1, \beta_2$ are as in Proposition \ref{EqCovPowersHRSameMarginsSameBeta} and $\beta_1^* = \beta_2^* = \beta^*$ such that $0 <\beta^*< 1/2$. We have 
$\mathrm{Corr} ( X_1^{\beta}, X_2^{\beta})\neq \mathrm{Corr} (Z_1^{\beta^*}, Z_2^{\beta^*})$ and, for $\beta \neq 1$, $\mathrm{Corr} ( X_1^{\beta}, X_2^{\beta})\neq \mathrm{Corr} ( X_1, X_2)$. Thus, $\mathcal{D}_{X, \beta}$ in \eqref{Eq_DepMeas} is not invariant with respect to the marginal parameters $\eta$, $\tau$, $\xi$ and the power $\beta$. Taking the appropriate values of those quantities is necessary when using $\mathcal{D}_{X, \beta}$ for concrete risk assessment problems, and studying its sensitivity with respect to $\beta$ is also of interest. The conclusions of this paragraph regarding the correlations are a fortiori true if $X_1, X_2$ are not a.s. strictly positive; in that case, even the mentioned equalities of copulas do not hold in general.

We now investigate the behaviour of the function $g_{\beta, \eta, \tau, \xi}$ defined in \eqref{Eq_Function_gtilde} in order to derive useful conclusions about $\mathcal{D}_{X, \beta}$ and because we need it in an ongoing work about spatial risk measures. The proof of next proposition is appealing as it first involves showing a result (Proposition \ref{Prop_Generalization_Dhaene} in Appendix \ref{subsubsec_Generalization_Dhaene}) about the correlation order, which is a classical concept of dependence comparison in actuarial risk theory \citep[e.g.,][Section 6.2]{denuit2005actuarial}.
\begin{Prop}
\label{Prop_Decrease_gtilde}
For all $\eta \in \mathbb{R}$, $\tau>0$, $\xi \neq 0$ and $\beta \in \mathbb{N}_*$ such that $\beta \xi < 1/2$, the function $g_{\beta, \eta, \tau, \xi}$ defined in \eqref{Eq_Function_gtilde} is strictly decreasing.
\end{Prop}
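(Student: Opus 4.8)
\section*{Proof proposal for Proposition \ref{Prop_Decrease_gtilde}}

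The plan is to first recast the claim as a monotonicity property of a covariance. By Proposition \ref{EqCovPowersHRSameMarginsSameBeta}, the quantity subtracted from $g_{\beta, \eta, \tau, \xi}(h)$ in \eqref{Eq_Cov_Maxstab_Real_Marg_Eq_Coeff} does not depend on $h$ (it equals $\mathbb{E}[X_1^{\beta}]\,\mathbb{E}[X_2^{\beta}]$), so $g_{\beta, \eta, \tau, \xi}(h) = \mathbb{E}[X_1^{\beta} X_2^{\beta}] = \Mr{Cov}(X_1^{\beta}, X_2^{\beta}) + \mathbb{E}[X_1^{\beta}]\,\mathbb{E}[X_2^{\beta}]$, where $\bm{X}$ is obtained from a Hüsler--Reiss vector $\bm{Z}$ with parameter $h$ through \eqref{Eq_LinkMaxstbVect_SimpleMaxstabVect} with common $\eta, \tau, \xi$. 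Hence proving that $g_{\beta, \eta, \tau, \xi}$ is strictly decreasing is equivalent to proving that $h \mapsto \Mr{Cov}(X_1^{\beta}, X_2^{\beta})$ is strictly decreasing. Writing $X_i = c + d\, Z_i^{\xi}$ with $c = \eta - \tau/\xi$ and $d = \tau/\xi$, we have $X_i^{\beta} = \psi(Z_i)$ with $\psi(z) = (c + d z^{\xi})^{\beta}$; since $z \mapsto c + d z^{\xi}$ is strictly monotone on $(0, \infty)$, the dependence between $X_1^{\beta}$ and $X_2^{\beta}$ is governed entirely by that between $Z_1$ and $Z_2$, i.e. by $h$.

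The second step is to show that the Hüsler--Reiss family is strictly decreasing in the positive quadrant dependence (PQD), or concordance, order as $h$ grows: for every fixed $z_1, z_2 > 0$, the map $h \mapsto H(z_1, z_2; h)$ in \eqref{Eq_HuslerReissDist} is strictly decreasing. Writing $H = \exp(-E)$ and $a = \log(z_2/z_1)$, a direct differentiation gives $\partial_h E = \frac{1}{z_2}\phi(\frac{h}{2} - \frac{a}{h})(\frac{1}{2} + \frac{a}{h^2}) + \frac{1}{z_1}\phi(\frac{h}{2} + \frac{a}{h})(\frac{1}{2} - \frac{a}{h^2})$, and the elementary identity $\frac{1}{z_2}\phi(\frac{h}{2} - \frac{a}{h}) = \frac{1}{z_1}\phi(\frac{h}{2} + \frac{a}{h})$ (obtained by completing the square and using $e^{\pm a} = (z_2/z_1)^{\pm 1}$) collapses this to $\partial_h E = \frac{1}{z_1}\phi(\frac{h}{2} + \frac{a}{h}) > 0$. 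Thus $\partial_h H < 0$ pointwise: larger $h$ means strictly weaker positive dependence, consistently with the interpolation between comonotonicity ($h=0$) and independence ($h=\infty$).

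The third step transfers this dependence ordering to the covariance of the power transforms, and this is where the auxiliary result on the correlation order (Proposition \ref{Prop_Generalization_Dhaene}) enters. When $\psi$ is monotone---which occurs for instance when $\beta$ is odd, or whenever $X_1, X_2$ are almost surely of constant sign---one is in the classical correlation-order setting: the Hoeffding representation $\Mr{Cov}(\psi(Z_1), \psi(Z_2)) = \iint [H(z_1,z_2;h) - F(z_1)F(z_2)]\,\mathrm{d}\psi(z_1)\,\mathrm{d}\psi(z_2)$ has a nonnegative integrating measure $\mathrm{d}\psi \otimes \mathrm{d}\psi$, so the pointwise decrease of $H$ from step two forces the covariance to decrease, strictly by the strict decrease of $H$. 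I would isolate this mechanism as the content of the generalized Dhaene result before applying it.

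The hard part will be the case where $\beta$ is even and $c = \eta - \tau/\xi < 0$, so that $X_i$ takes negative values and $\psi(z) = (c + d z^{\xi})^{\beta}$ is U-shaped rather than monotone; then $\mathrm{d}\psi$ is a signed measure and the classical argument, which hinges on the monotonicity of $\psi$, breaks down. The resolution I would pursue is to differentiate in $h$ and establish $\partial_h \Mr{Cov}(\psi(Z_1), \psi(Z_2)) = \iint \psi'(z_1)\psi'(z_2)\,\partial_h H(z_1,z_2;h)\,\mathrm{d}z_1\,\mathrm{d}z_2 < 0$ by exploiting the structure of $-\partial_h H$ found in step two. After the substitution $s_i = \log z_i$ (which absorbs both Jacobians into $\psi'(z_i)\,z_i = \frac{\mathrm{d}}{\mathrm{d}s_i}\psi(e^{s_i})$), $-\partial_h H$ factors as a strictly positive term times the Gaussian kernel $\exp(-(s_2 - s_1)^2/(2h^2))$, which is positive definite; this positivity is what controls the sign-changing factor $\psi'(z_1)\psi'(z_2)$ and yields the strict inequality. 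Encapsulating precisely the hypotheses under which PQD ordering implies covariance ordering for such non-monotone transforms is exactly the role of Proposition \ref{Prop_Generalization_Dhaene}, and deploying it is the crux of the argument; the overall strictness then follows from the strictness in step two together with the non-degeneracy of $\psi$.
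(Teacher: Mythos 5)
Your first three steps reproduce the paper's proof almost exactly. The paper likewise differentiates \eqref{Eq_HuslerReissDist} with respect to $h$, uses the same completing-the-square identity $\tfrac{1}{z_2}\phi(\tfrac{h}{2}-\tfrac{a}{h})=\tfrac{1}{z_1}\phi(\tfrac{h}{2}+\tfrac{a}{h})$ to collapse the derivative to a single strictly negative Gaussian term (the quantity $T_2$ there), and concludes that $h \mapsto H(z_1,z_2;h)$ is strictly decreasing pointwise; it then transfers this ordering to the covariance through precisely the Hoeffding-representation mechanism you describe, which is the paper's Proposition \ref{Prop_Generalization_Dhaene} (stated for \emph{strictly increasing} $f_1,f_2$ on $(0,\infty)$, with Lemma 1 of \cite{dhaene1996dependency} supplying the Hoeffding formula), and finally removes the $h$-free constant exactly as in your first step, via \eqref{Eq_Order_Expectation} and \eqref{Eq_g_Expectation}. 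So on the monotone case your proposal and the paper's proof coincide in structure and in every computation.

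Where you diverge is your fourth step, and there the situation is double-edged. The paper does not split into cases: it simply asserts that $z \mapsto (\eta-\tau/\xi+\tau z^{\xi}/\xi)^{\beta}$ is strictly increasing on $(0,\infty)$ and applies Proposition \ref{Prop_Generalization_Dhaene}; in particular, that proposition concerns increasing transforms only, so the role you attribute to it at the end --- covering non-monotone $\psi$ --- is not what it does. You are right that the asserted monotonicity fails when $\beta$ is even and $\eta-\tau/\xi+\tau z^{\xi}/\xi$ changes sign (which occurs for every $\xi<0$, hence in particular for the fitted model of the case study), so you have put your finger on a genuine subtlety that the paper's one-line monotonicity claim passes over. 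However, your proposed repair does not close as sketched: after the substitution $s_i=\log z_i$ one has, up to positive constants, $-\partial_h H = H(z_1,z_2;h)\,a(s_1)a(s_2)\,e^{-(s_1-s_2)^2/(2h^2)}$, and the factor $H(z_1,z_2;h)$ depends \emph{jointly} on both variables. Positive definiteness of the Gaussian kernel alone therefore does not control the sign of $\iint \psi'(z_1)\psi'(z_2)\,(-\partial_h H)\,\mathrm{d}z_1\,\mathrm{d}z_2$; you would additionally need $H(\cdot,\cdot\,;h)$ itself to be a positive-definite kernel (so as to invoke the Schur product theorem), which is neither obvious nor established, and you also do not justify differentiating the Hoeffding representation under the integral sign, nor the strictness of the resulting inequality. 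In short: where your proposal overlaps with the paper it is correct and essentially identical; your extra case analysis is the only genuinely new content, and as written it is a sketch with a real gap rather than a proof --- though it does identify a case that the paper's own argument, as stated, does not cover.
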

The two following propositions state the continuity of $g_{\beta, \eta, \tau, \xi}$ and characterize its behaviour around $0$ and at $\infty$.
\begin{Prop}
\label{Prop_Lim_gtildebeta_hto0}
For all $\eta \in \mathbb{R}$, $\tau>0$, $\xi \neq 0$ and $\beta \in \mathbb{N}_*$ such that $\beta \xi < 1/2$, the function $g_{\beta, \eta, \tau, \xi}$ defined in \eqref{Eq_Function_gtilde} satisfies
\Beq
\label{Eq_Lim_gtildebeta_hto0}
\lim_{h \to 0} g_{\beta, \eta, \tau, \xi}(h) = \sum_{k_1=0}^{\beta} \sum_{k_2=0}^{\beta} B_{k_1, k_2, \beta, \eta, \tau, \xi} \Gamma(1-\xi[2 \beta - k_1 -k_2])
\Eeq
and is continuous everywhere on $[0, \infty)$.
\end{Prop}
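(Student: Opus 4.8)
\emph{Proof proposal.} The plan is to reduce the statement to a single-variable property of the building block $I_{\beta_1,\beta_2}$ and then argue probabilistically. By \eqref{Eq_Function_gtilde}, $g_{\beta,\eta,\tau,\xi}$ is a finite linear combination, with fixed coefficients $B_{k_1,k_2,\beta,\eta,\tau,\xi}$, of the terms $I_{(\beta-k_1)\xi,(\beta-k_2)\xi}(h)$; since $(\beta-k_1)\xi+(\beta-k_2)\xi=\xi[2\beta-k_1-k_2]$, it suffices to show that for every $\beta_1,\beta_2<1/2$ the map $h\mapsto I_{\beta_1,\beta_2}(h)$ is continuous on $(0,\infty)$ and satisfies $\lim_{h\to 0^+} I_{\beta_1,\beta_2}(h)=\Gamma(1-\beta_1-\beta_2)$. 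The right-hand side of \eqref{Eq_Lim_gtildebeta_hto0} is precisely $g_{\beta,\eta,\tau,\xi}(0)$ as read off from the $h=0$ branch of \eqref{Eq_Def_g_beta1_beta2}, so these two facts deliver both the announced limit and continuity at $0$, while continuity on $(0,\infty)$ transfers to $g$ by linearity. Note that the relevant arguments $(\beta-k_i)\xi$ all lie in $[0,\beta\xi]$ (if $\xi>0$) or $[\beta\xi,0]$ (if $\xi<0$), hence are $<1/2$, and $\beta_1+\beta_2=\xi[2\beta-k_1-k_2]\le 2\beta\xi<1$, so every Gamma value occurring is finite.

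The key I would use is the identity $I_{\beta_1,\beta_2}(h)=\mathbb{E}[Z_1^{\beta_1}Z_2^{\beta_2}]$, where $(Z_1,Z_2)'$ follows the Hüsler--Reiss law \eqref{Eq_HuslerReissDist} with parameter $h$. This follows at once from Theorem \ref{Th_CovarianceSimpleHR} together with the standard Fréchet moment $\mathbb{E}[Z_i^{\gamma}]=\Gamma(1-\gamma)$ for $\gamma<1$: indeed \eqref{Eq_CovarianceSimpleHR} then reads $\mathbb{E}[Z_1^{\beta_1}Z_2^{\beta_2}]-\Gamma(1-\beta_1)\Gamma(1-\beta_2)=I_{\beta_1,\beta_2}(h)-\Gamma(1-\beta_1)\Gamma(1-\beta_2)$. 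Crucially, the margins of $(Z_1,Z_2)'$ are standard Fréchet \emph{for every} $h$.

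Fix $h_0\in[0,\infty)$. For $h_0>0$ the distribution function $H(z_1,z_2;h)$ in \eqref{Eq_HuslerReissDist} is manifestly continuous in $h$; as $h\to 0^+$, using $\Phi(h/2-\log(z_2/z_1)/h)\to \mathbf{1}\{z_1>z_2\}$, one finds $H(z_1,z_2;h)\to \exp(-1/\min(z_1,z_2))$, the law of the completely dependent pair $Z_1=Z_2$. Hence $(Z_1,Z_2)'$ converges weakly as $h\to h_0$ and, by the continuous mapping theorem, so does $Z_1^{\beta_1}Z_2^{\beta_2}$. To upgrade this to convergence of first moments I would invoke uniform integrability: since the margins are standard Fréchet for every $h$, pick $\delta>0$ with $(1+\delta)\beta_i<1/2$ (possible as $\beta_i<1/2$) and bound, by Cauchy--Schwarz, $\mathbb{E}\big[(Z_1^{\beta_1}Z_2^{\beta_2})^{1+\delta}\big]\le \Gamma(1-2(1+\delta)\beta_1)^{1/2}\,\Gamma(1-2(1+\delta)\beta_2)^{1/2}$, a finite constant \emph{independent} of $h$. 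The family is therefore bounded in $L^{1+\delta}$, hence uniformly integrable, so $\mathbb{E}[Z_1^{\beta_1}Z_2^{\beta_2}]$ converges to its value at $h_0$. Taking $h_0>0$ gives continuity of $I_{\beta_1,\beta_2}$ on $(0,\infty)$, and $h_0=0$ gives $\lim_{h\to0^+}I_{\beta_1,\beta_2}(h)=\mathbb{E}[Z^{\beta_1+\beta_2}]=\Gamma(1-\beta_1-\beta_2)$, as required.

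I expect the only delicate point to be the passage $h\to0^+$, and the virtue of the probabilistic route is that it renders this painless: the whole difficulty collapses into the elementary pointwise limit of $\Phi$ and the $h$-free moment bound supplied by the fixed Fréchet margins. A purely analytic alternative—passing to the limit directly inside the integral in \eqref{Eq_Def_g_beta1_beta2}—would be markedly harder, since as $h\to0^+$ the factors $\Phi(h/2\pm\log\theta/h)$ concentrate like an indicator around $\theta=1$ and no single $h$-independent integrable dominating function in $\theta$ is available; one would then have to split the integral near $\theta=1$ and track this concentration, which I would avoid.
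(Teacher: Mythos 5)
Your proof is correct, but it takes a genuinely different route from the paper's. The paper argues at the level of the random field: it instantiates the Brown--Resnick field with semivariogram $\gamma_W(\bm{x})=\|\bm{x}\|^2/2$ (so that $h=\sqrt{2\gamma_W(\bm{x})}=\|\bm{x}\|$ sweeps all of $[0,\infty)$), uses sample-continuity of that field to deduce, via Proposition 1 of \cite{koch2017TCL}, that $X^{\beta}$ is continuous in quadratic mean, hence that its stationary covariance function is continuous at the origin; the limit \eqref{Eq_Lim_gtildebeta_hto0} then falls out of Theorem \ref{ThAppBR} by identifying $\lim_{\bm{x}\to\bm{0}}\mathrm{Cov}(X^{\beta}(\bm{0}),X^{\beta}(\bm{x}))$ with $\mathrm{Var}(X^{\beta}(\bm{0}))$, and continuity on $(0,\infty)$ is obtained from the classical fact that a second-order stationary covariance function can be discontinuous only at the origin. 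You instead work at the level of the bivariate Hüsler--Reiss vector: you identify $I_{\beta_1,\beta_2}(h)=\mathbb{E}[Z_1^{\beta_1}Z_2^{\beta_2}]$ from Theorem \ref{Th_CovarianceSimpleHR}, establish weak convergence of the Hüsler--Reiss pair as $h\to h_0$ (including the comonotone limit $\exp(-1/\min(z_1,z_2))$ as $h\to 0^+$), and upgrade to convergence of moments via the $h$-free Cauchy--Schwarz bound in $L^{1+\delta}$ guaranteed by the fixed standard Fréchet margins — which is exactly the weak-convergence-plus-uniform-integrability machinery (Billingsley, Theorem 3.5) that the paper itself deploys in the proof of Proposition \ref{Prop_Continuity_Cov_xi_0}, redirected here to the parameter $h$ rather than to $\xi$. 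Each approach buys something: the paper's is very short given the field-level infrastructure it cites (sample-continuity of Brown--Resnick fields, mean-square continuity, properties of stationary covariances) and avoids any integrability bookkeeping; yours is more self-contained and elementary, needs no facts about the field beyond the bivariate law, and proves the strictly stronger statement that each $I_{\beta_1,\beta_2}$ with $\beta_1,\beta_2<1/2$ is itself continuous on $[0,\infty)$ with $\lim_{h\to 0^+}I_{\beta_1,\beta_2}(h)=\Gamma(1-\beta_1-\beta_2)$ — a fact the paper only asserts by analogy in Appendix \ref{Sec_Appendix_SimpleBRfield} and which your argument would supply directly. Your checks that all exponents $(\beta-k_i)\xi$ remain below $1/2$ and that $\xi[2\beta-k_1-k_2]<1$ for both signs of $\xi$ are the right ones, and your closing remark about the difficulty of dominating the integrand in \eqref{Eq_Def_g_beta1_beta2} as $h\to 0^+$ correctly identifies why a direct analytic passage to the limit is unattractive.
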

\begin{Prop}
\label{Prop_Lim_gtilde_infty}
For all $\eta \in \mathbb{R}$, $\tau>0$, $\xi \neq 0$ and $\beta \in \mathbb{N}_*$ such that $\beta \xi < 1/2$, the function $g_{\beta, \eta, \tau, \xi}$ defined in \eqref{Eq_Function_gtilde} satisfies
\Beq
\label{Eq_Lim_gtilde_infty}
\lim_{h \to \infty} g_{\beta, \eta, \tau, \xi}(h)=\sum_{k_1=0}^{\beta} \sum_{k_2=0}^{\beta} B_{k_1, k_2, \beta, \eta, \tau, \xi} \ \Gamma(1-[\beta-k_1]\xi) \Gamma(1-[\beta-k_2]\xi).
\Eeq
\end{Prop}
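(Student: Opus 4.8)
The plan is to reduce the statement to a single limit for the building-block integral $I_{\beta_1,\beta_2}$ and then read it off from the independence limit of the Hüsler--Reiss distribution. Since $g_{\beta,\eta,\tau,\xi}(h)$ in \eqref{Eq_Function_gtilde} is a finite linear combination, with $h$-independent coefficients $B_{k_1,k_2,\beta,\eta,\tau,\xi}$, of the terms $I_{(\beta-k_1)\xi,(\beta-k_2)\xi}(h)$, and since the right-hand side of \eqref{Eq_Lim_gtilde_infty} is the same linear combination of the products $\Gamma(1-[\beta-k_1]\xi)\Gamma(1-[\beta-k_2]\xi)$, it suffices to prove that for every pair $\beta_1,\beta_2<1/2$ one has $\lim_{h\to\infty} I_{\beta_1,\beta_2}(h)=\Gamma(1-\beta_1)\Gamma(1-\beta_2)$. (Each exponent $(\beta-k_i)\xi$ is indeed $<1/2$: if $\xi>0$ it is $\le\beta\xi<1/2$, and if $\xi<0$ it is $\le 0$.) By Theorem \ref{Th_CovarianceSimpleHR}, $I_{\beta_1,\beta_2}(h)=\Mr{Cov}(Z_1^{\beta_1},Z_2^{\beta_2})+\Gamma(1-\beta_1)\Gamma(1-\beta_2)$ for $\bm{Z}$ following \eqref{Eq_HuslerReissDist} with parameter $h$; equivalently, since the margins are standard Fréchet for every $h$ (so that $\Mbb{E}[Z_i^{\beta_i}]=\Gamma(1-\beta_i)$), this reads $I_{\beta_1,\beta_2}(h)=\Mbb{E}[Z_1^{\beta_1}Z_2^{\beta_2}]$. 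Thus the goal becomes $\Mbb{E}[Z_1^{\beta_1}Z_2^{\beta_2}]\to\Gamma(1-\beta_1)\Gamma(1-\beta_2)$ as $h\to\infty$.

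Next I would establish the independence limit. Letting $h\to\infty$ in \eqref{Eq_HuslerReissDist}, each argument $h/2\pm\log(z_2/z_1)/h\to+\infty$, so $\Phi(\cdot)\to1$ and $H(z_1,z_2;h)\to\exp(-1/z_1-1/z_2)$ for every $z_1,z_2>0$. The limit is the (continuous) distribution function of a pair $(\tilde Z_1,\tilde Z_2)$ of independent standard Fréchet variables; pointwise convergence of distribution functions to a continuous limit yields $\bm{Z}\Rightarrow(\tilde Z_1,\tilde Z_2)$, and by the continuous mapping theorem $(Z_1^{\beta_1},Z_2^{\beta_2})\Rightarrow(\tilde Z_1^{\beta_1},\tilde Z_2^{\beta_2})$.

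To turn this weak convergence into convergence of the mixed moment I would verify uniform integrability of the family $\{Z_1^{\beta_1}Z_2^{\beta_2}\}_{h}$. Pick $\delta>0$ small enough that $2\beta_i(1+\delta)<1$ for $i=1,2$ (possible because $\beta_i<1/2$; any $\delta>0$ works when $\beta_i\le0$). By the Cauchy--Schwarz inequality and the fact that each $Z_i$ is standard Fréchet irrespective of $h$,
\[
\Mbb{E}\!\left[\left(Z_1^{\beta_1}Z_2^{\beta_2}\right)^{1+\delta}\right]\le\left(\Gamma(1-2\beta_1(1+\delta))\,\Gamma(1-2\beta_2(1+\delta))\right)^{1/2},
\]
a finite bound that does not depend on $h$. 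Hence the $(1+\delta)$-moments are uniformly bounded, the family is uniformly integrable, and weak convergence together with uniform integrability gives $\Mbb{E}[Z_1^{\beta_1}Z_2^{\beta_2}]\to\Mbb{E}[\tilde Z_1^{\beta_1}\tilde Z_2^{\beta_2}]=\Gamma(1-\beta_1)\Gamma(1-\beta_2)$, the last equality by independence. Substituting this limit term by term into \eqref{Eq_Function_gtilde} yields \eqref{Eq_Lim_gtilde_infty}.

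The main obstacle is the passage to the limit inside the expectation: the integrand $z_1^{\beta_1}z_2^{\beta_2}$ is unbounded (near the origin when an exponent is negative, and at infinity when it is positive), so weak convergence alone is not enough and the uniform-integrability bound above is the crux. An alternative, purely analytic route would instead apply dominated convergence directly to the integral representation \eqref{Eq_Def_g_beta1_beta2}, studying the limits of $C_1,C_2,C_3$ as $h\to\infty$; this is feasible but more computational, so the probabilistic argument is preferable.
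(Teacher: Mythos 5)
Your proof is correct, but it takes a genuinely different route from the paper. The paper's own proof is indirect and field-theoretic: it specializes to the Brown--Resnick field with the Smith semivariogram $\gamma_W(\bm{x})=\|\bm{x}\|^2/2$, verifies via two auxiliary lemmas (local integrability of sample paths, and a $(2+\delta)$-moment bound on the transformation $D_{\beta,\eta,\tau,\xi}$) that $X^{\beta}$ satisfies the central limit theorem of \cite{koch2017TCL}, deduces that $\int_{\mathbb{R}^2}\left|\mathrm{Cov}\left(X^{\beta}(\bm{0}),X^{\beta}(\bm{x})\right)\right|\mathrm{d}\bm{x}<\infty$, and then invokes the strict monotonicity of $g_{\beta,\eta,\tau,\xi}$ (Proposition \ref{Prop_Decrease_gtilde}) to conclude that the integrand, hence $g_{\beta,\eta,\tau,\xi}(h)$ minus the claimed constant, must vanish at infinity. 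You instead argue directly at the level of the Hüsler--Reiss vector: you identify $I_{\beta_1,\beta_2}(h)=\Mbb{E}[Z_1^{\beta_1}Z_2^{\beta_2}]$ from Theorem \ref{Th_CovarianceSimpleHR}, show weak convergence to an independent Fréchet pair as $h\to\infty$ (the $\Phi$ arguments tend to $+\infty$, and the limit distribution function is continuous), and pass to the limit in the mixed moment via a uniform-integrability bound that is uniform in $h$ because the margins are standard Fréchet for every $h$ — the Cauchy--Schwarz step with $2\beta_i(1+\delta)<1$ is exactly the crux, and you handle the sign cases of the exponents $(\beta-k_i)\xi$ correctly. Your approach buys self-containedness and generality: it avoids the CLT machinery, sample-path regularity, and the dependence on Proposition \ref{Prop_Decrease_gtilde}, and it proves the stronger statement $\lim_{h\to\infty}I_{\beta_1,\beta_2}(h)=\Gamma(1-\beta_1)\Gamma(1-\beta_2)$ for arbitrary real $\beta_1,\beta_2<1/2$, which is precisely what Appendix \ref{Sec_Appendix_SimpleBRfield} asserts for simple fields with only a pointer to ``very similar proofs.'' What the paper's route buys in exchange is that it reuses infrastructure needed elsewhere (the CLT applicability of $X^{\beta}$ is itself invoked in the risk-theoretic discussion of Section \ref{Subsec_TheoreticalContribution}), so the proposition comes almost for free once that machinery is in place — at the cost of an indirect argument that derives a purely analytic limit from probabilistic properties of a specific random field.
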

By Theorem \ref{ThAppBR}, $\mathcal{D}_{X, \beta}(\bm{x}_1, \bm{x}_2)$ depends on $\bm{x}_1$ and $\bm{x}_2$ through $\gamma_W(\bm{x}_2 - \bm{x}_1)$ only. As a variogram is a non-negative conditionally negative definite function, it follows from \citet[][Chapter 4, Section 3, Proposition 3.3]{berg1984harmonic}\footnote{In that book, the term ``non-negative'' is used for ``conditionally non-negative''.} that $d(\bm{x}_1, \bm{x}_2)= \sqrt{2 \gamma_W(\bm{x}_2-\bm{x}_1)}$, $\bm{x}_1, \bm{x}_2 \in \mathbb{R}^2$, defines a metric. For many common models of isotropic semivariogram $\gamma_W$, $\gamma_W(\bm{x}_2-\bm{x}_1)$ is a strictly increasing function of $\| \bm{x}_2-\bm{x}_1 \|$, which implies by \eqref{Eq_ExpressionMainQuantityPaper} and Proposition \ref{Prop_Decrease_gtilde} that $\mathcal{D}_{X, \beta}(\bm{x}_1, \bm{x}_2)$ is a strictly decreasing function of $\| \bm{x}_2-\bm{x}_1 \|$; such a decrease of the correlation with the distance seems natural. Moreover \eqref{Eq_ExpressionMainQuantityPaper} and \eqref{Eq_Lim_gtildebeta_hto0} give that $\lim_{ \bm{x}_2-\bm{x}_1 \to \bm{0}} \mathcal{D}_{X, \beta}(\bm{x}_1, \bm{x}_2)=1$, and \eqref{Eq_ExpressionMainQuantityPaper} and \eqref{Eq_Lim_gtilde_infty} imply, provided $\lim_{\| \bm{x}_2-\bm{x}_1 \| \to \infty} \gamma_W(\bm{x}_2-\bm{x}_1)=\infty$, that $\lim_{\| \bm{x}_2-\bm{x}_1 \| \to \infty}  \mathcal{D}_{X, \beta}(\bm{x}_1, \bm{x}_2)=0$. The faster the increase of $\gamma_W$ to infinity, the faster the convergence of $\mathcal{D}_{X, \beta}(\bm{x}_1, \bm{x}_2)$ to $0$. These results are consistent with our expectations. For a function $f$ from $\mathbb{R}^2$ to $\mathbb{R}$, by $\lim_{\| \bm{h} \| \to \infty} f(\bm{h})=\infty$, we mean $\lim_{h \to \infty} \inf_{\bm{u} \in \mathcal{B}_1} \{ f(h \bm{u}) \}=\infty$, where $\mathcal{B}_1= \{ \bm{x} \in \mathbb{R}^2: \| \bm{x} \|=1 \}$.

\section{Case study}
\label{Sec_Application}

We focus on insured losses from wind extremes for residential buildings over a large part of Germany, more precisely over the rectangle from $5.75 ^\circ$ to $12 ^\circ$ longitude and $49 ^\circ$ to $52 ^\circ$ latitude (see Figure \ref{Fig_MapGridPoints}). We apply the results developed in Section \ref{Subsec_TheoreticalContribution} for assessing the spatial dependence of those losses. For the insured cost field, we use the model introduced in \citet[][Section 2.3]{koch2017spatial}, that is
\Beq
\label{Eq_CostFieldModel}
C(\bm{x}) = E(\bm{x}) D(X(\bm{x})),\quad \bm{x} \in \mathbb{R}^2,
 \Eeq
where $E$ is the strictly positive and deterministic field of insured value per surface unit, $D: \Mbb{R} \mapsto [0,1]$ is the damage function, and $X$ is the model for the random field of the environmental variable generating risk. Applying the damage function $D$ to $X$ allows getting at each site the insured cost ratio, which, multiplied by the insured value, gives the corresponding insured cost. We assume the risk to be generated by wind speed maxima and we model the latter with a Brown--Resnick and a Smith max-stable model. Section \ref{Subsec_PowersDamageFunction} outlines and thoroughly justifies the power damage function $D$ that we will use. In Section \ref{Sec_ModelInferenceValidation} we describe the wind speed data and perform model estimation, selection and validation. Finally, we apply in Section \ref{Subsec_Results} the results of Section \ref{Subsec_TheoreticalContribution} using the derived insured cost model.

\subsection{Power damage function}
\label{Subsec_PowersDamageFunction}

We consider the damage function 
$$
D(w)=
\left \{
\begin{array}{cc}
(w/c_1)^{\beta}, & \quad w \leq c_1, \\
1, & \quad w \geq c_1,
\end{array}
\right.
$$
where $\beta \in \mathbb{N}_*$ and $c_1 >0$. The quantity $c_1$ corresponds to the wind speed above which the insured cost ratio equals unity and can be assumed to be much larger than possible wind speed values in Germany, especially as the distribution of wind speed maxima is bounded in this application (see below). In practice it is hence equivalent to take
\Beq
\label{Eq_GeneralDamageFunction}
D(w) = (w/c_1)^{\beta}, \quad w \in \mathbb{R},
\Eeq
and this is our choice in the following.

Power functions are perfectly suited to the case of wind. The total cost for a specific structure should increase as the square or the cube of the maximum wind speed since wind loads and dissipation rate of wind kinetic energy are proportional to the second and third powers of wind speed, respectively. For arguments supporting the use of the square, see, e.g., \citet[][Equations (4.7.1), (8.1.1) and (8.1.8) and the interpretation following Equation (4.1.20)]{simiu1996wind}. Regarding the cube, see, among others, \citet[][Chapter 2, p.7]{lamb1991historic} where the cube of the wind speed appears in the severity index, and \cite{emanuel2005increasing}. In his discussion of the paper by \cite{powell2007tropical}, \cite{kantha2008tropical} states that wind damage for a given structure must be proportional to the rate of work done (and not the force exerted) by the wind and therefore strongly argues in favour of the cube. In addition to this debate about whether the square or cube is more appropriate for total costs, several studies in the last two decades have found power-laws with much higher exponents when insured costs are considered. For instance, \cite{prahl2012applying} find powers ranging from $8$ to $12$ for insured losses on residential buildings in Germany (local damage functions). \cite{prahl2015comparison} argue that, if the total cost follows a cubic law but the insurance contract is triggered only when that cost exceeds a strictly positive threshold (e.g., in the presence of a deductible), then the resulting cost for the insurance company is of power-law type but with a higher exponent. We have validated this statement using simulations and observed that the resulting exponent depends on the threshold (not shown).

Several authors \citep[e.g.,][]{klawa2003model, pinto2007changing, donat2011high} use, even in the case of insured losses, a cubic relationship that they justify with the physical arguments given above. However, they apply the third power to the difference between the wind speed value and a high percentile of the wind distribution and not to the effective wind speed; as shown by \citet[][Appendix A3]{prahl2015comparison}, this is equivalent to applying a much higher power to the effective wind speed. Note that exponential damage functions are sometimes also encountered in the literature \cite[e.g.,][]{huang2001long, prettenthaler2012risk}; we do not consider such functions here.

According to \cite{prahl2012applying} who use \eqref{Eq_GeneralDamageFunction} as well, a spatially-constant exponent of $10$ seems
appropriate in our region for insured losses on residential buildings; see their Figure 2. Finally, \eqref{Eq_GeneralDamageFunction} yields $c_1 = w/D(w)^{1/\beta}$ for any $w>0$ and one reads in \citet[][Figure 1]{prahl2012applying} $D(26) \approx 10^{-5}$, leading to $c_1 \approx 82.2$ m s$^{-1}$. Our damage function is then
\Beq
\label{Eq_SpecificDamageFunction}
D(w) = (w/82.2)^{10}, \quad w \in \Mbb{R}.
\Eeq
As will be seen, the normalization does not play any role in our application.

\subsection{Wind data and model for extreme winds}
\label{Sec_ModelInferenceValidation}

\subsubsection{Wind data}

We consider hourly maxima of the $3$ s wind gust at $10$ m height (as defined by the World Meteorological Organization) from 1 January 1979 08:00 central European time (CET) to 1 January 2020 at 00:00 CET. This is publicly available data from the European Centre for Medium-Range Weather Forecasts (ECMWF); more precisely we use the ``$10$ m wind gust since previous post-processing'' variable in the ERA5 (ECMWF Reanalysis 5th Generation) dataset. The covered region is a rectangle from $5.75 ^\circ$ to $12 ^\circ$ longitude and $49 ^\circ$ to $52 ^\circ$ latitude and the resolution is $0.25 ^\circ$ latitude and $0.25 ^\circ$ longitude, leading to $338$ grid points. We randomly choose $226$ of them to fit the models and use the remaining $112$ for model validation; see Figure \ref{Fig_MapGridPoints}. This area encompasses the Ruhr region in Germany and is associated with high residential insured values per surface unit.  

\begin{figure}[!ht]
\begin{center}
\includegraphics[scale=0.7]{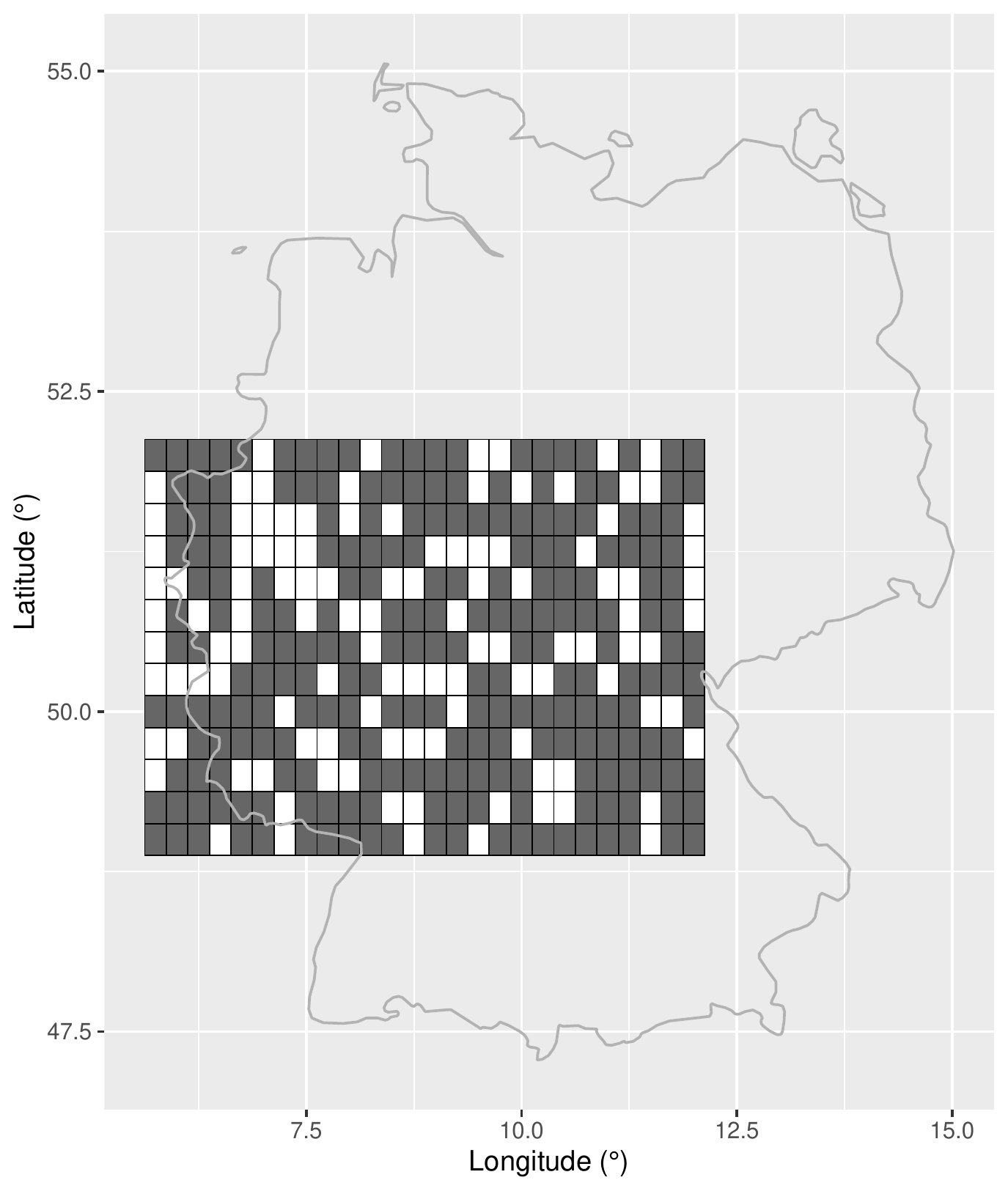}
\caption{The grey and white cells correspond to the $226$ and $118$ calibration and validation grid points, respectively.}
\label{Fig_MapGridPoints}
\end{center}
\end{figure}

We derive at each grid point the $42$ seasonal (from October to March) maxima and fit the models to the resulting pointwise maxima. For the first and last season, the maxima are computed over January--March and October--December, respectively. Focusing on October--March allows us to get rid of seasonal non-stationarity in the wind speed time series and to mainly account for winter storms rather than intense summer thunderstorms.

\subsubsection{Model}
\label{Subsubsec_Model}

We consider both the Brown--Resnick field with semivariogram \eqref{Eq_Power_Variogram} and the Smith field. As mentioned above, max-stable models are very natural ones for pointwise maxima, and the Brown--Resnick field generally shows good performance on environmental data. We model the location, scale and shape parameters as constant across the region, which is reasonable here (this can be explained by the homogeneity in terms of altitude and weather influences). Using trend surfaces for these parameters rather than fitting them separately at each grid point is standard practice as it reduces parameter uncertainty, allows a joint estimation of all marginal and dependence parameters in a reasonable amount of time and enables prediction at sites where no observations are available. Allowing anisotropy in the semivariogram of the Brown--Resnick model would be pertinent but would not modify our main conclusions. Isotropy already leads to a very satisfying model and makes our dependence measure \eqref{Eq_DepMeas} isotropic in the original space, which facilitates our discussions in Section \ref{Subsec_Results}.

Both models are fitted using maximum pairwise likelihood \citep[e.g.,][]{padoan2010likelihood} implemented in
the \texttt{fitmaxstab} function of the \texttt{SpatialExtremes} \texttt{R} package \citep{PackageSpatialExtremes}; marginal and dependence parameters were jointly estimated using the Nelder--Mead algorithm with a relative convergence tolerance of $1.49 \times 10^{-8}$. We then perform model selection by minimization of the composite likelihood information criterion (CLIC); see \cite{varin2005note}. According to that criterion, the Brown--Resnick field is the most compatible with the data; see Table \ref{Table_CLIC_Param_Application}.
\begin{table}[!h]
\center
\resizebox{\textwidth}{!}{
\begin{tabular}{c|c|c|c|c|c|c|c}
\textbf{Brown--Resnick} & CLIC &  & $\kappa $ & $\psi $ & $\eta $ & $\tau $
& $\xi $ \\ 
& 10'503'932 &  & $3.28\ (1.11)$ & $0.83\ (0.06)$ & $25.69\ (0.41)$ & $3.05\
(0.22)$ & $-0.12 \ (0.02)$ \\ \hline
\textbf{Smith} & CLIC & $\sigma _{11}$ & $\sigma _{12}$ & $\sigma _{22}$ & $%
\eta $ & $\tau $ & $\xi $ \\ 
& 10'603'208 & $4.17 \ (0.75)$ & $-0.17\ (0.05)$ & $1.03\ (0.19)$ & $25.71\
(0.37) $ & $3.07\ (0.20)$ & $-0.12\ (0.01)$
\end{tabular}
}
\newline
\caption{CLIC values and parameters' estimates (standard errors inside the brackets) of the Brown--Resnick and
Smith models.}
\label{Table_CLIC_Param_Application}
\end{table}

Figure \ref{Fig_GoodnessFitBR} shows that the theoretical pairwise extremal coefficient function of the fitted
Brown--Resnick model agrees reasonably well with the empirical pairwise extremal coefficients for the validation grid points. It is slightly above their binned estimates when those are computed using the empirical distribution functions. This small underestimation of the spatial dependence likely comes from the choice of parsimonious trend surfaces for the location, scale and shape
parameters, and disappears when we compute the empirical extremal coefficients using the marginal parameters' estimates. Overall Figure \ref{Fig_GoodnessFitBR} indicates that the proposed model fits the extremal dependence structure of the data fairly well. Figure \ref{Fig_QQplotsValidation} is complementary as it assesses both the marginal and dependence components; it shows that the distributions of several summary statistics in various dimensions are very similar for our model and the data. Finally Figure \ref{FigPathsRealSimul} suggests, for two seasons with different ranges of values, that realizations from our model have similar patterns as observed pointwise maxima, although being slightly rougher. The combination of these goodness-of-fit assessments shows that the proposed model is well-suited to our data, and so that this case study is useful in practice.
\begin{figure}[!h]
    \centering
    \begin{subfigure}[b]{0.48\textwidth}
        \centering
       \includegraphics[width=\textwidth]{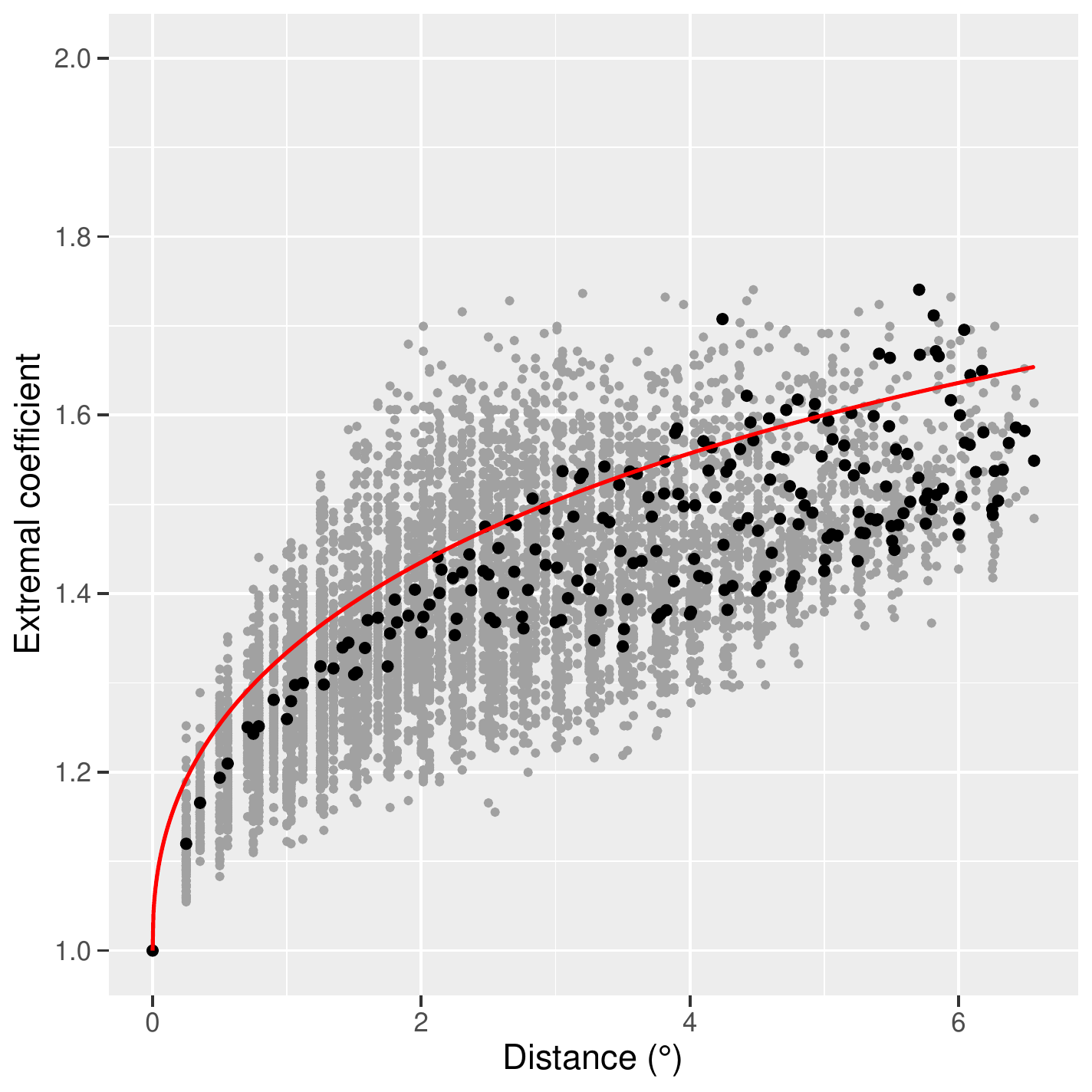}
        \end{subfigure}
        \hfill
        \begin{subfigure}[b]{0.48\textwidth}  
            \centering 
            \includegraphics[width=\textwidth]{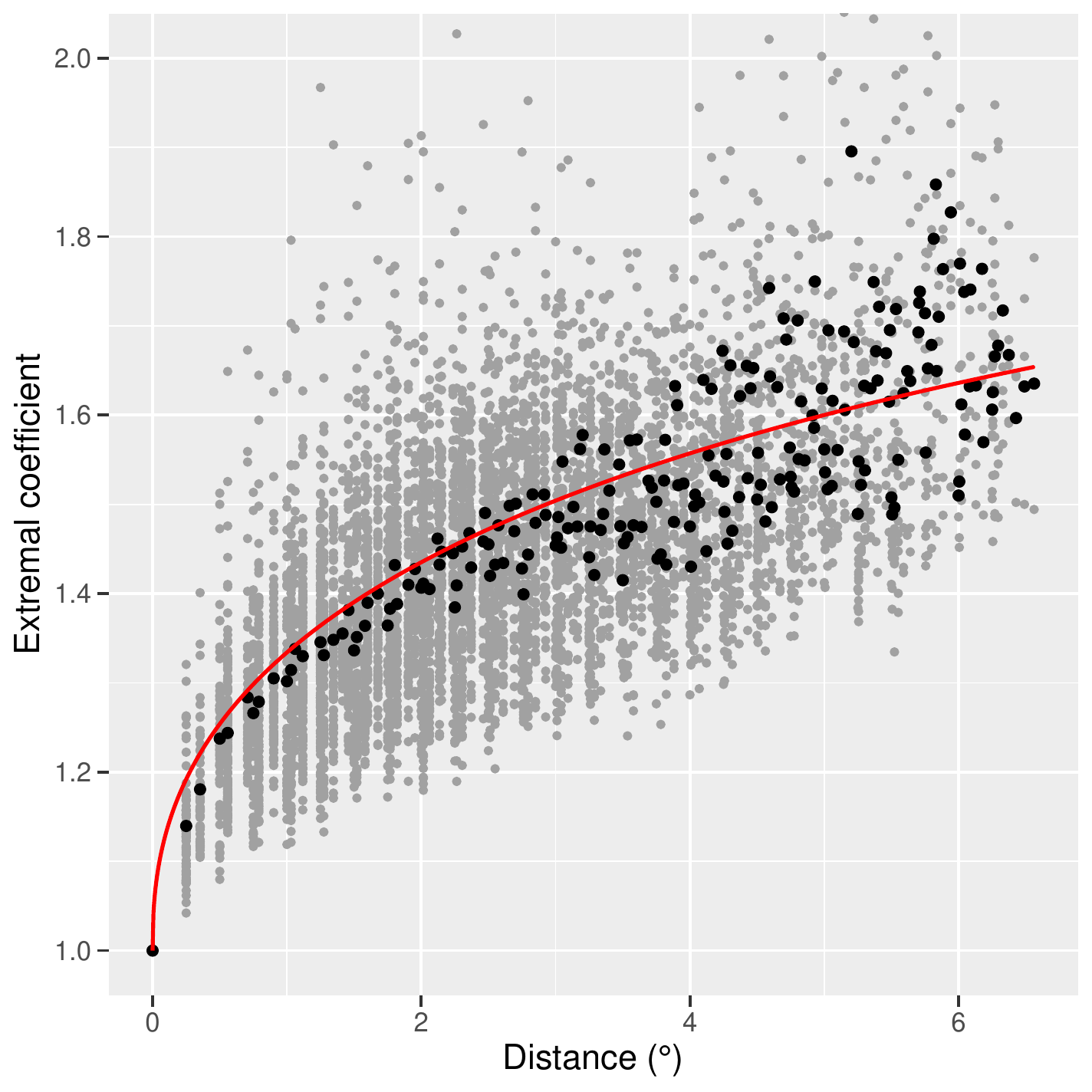}
        \end{subfigure}
        \caption{Model's performance on the validation grid points. Theoretical pairwise extremal coefficient function from the fitted Brown--Resnick model (red line) and empirical pairwise extremal coefficients (dots). The grey and black
dots are pairwise and binned estimates, respectively. The empirical extremal coefficients have been computed using the empirical distribution functions (left) and the obtained GEV parameters (right).}
\label{Fig_GoodnessFitBR}
\end{figure}
\begin{figure}[!h]
    \centering
    \begin{subfigure}[b]{0.32\textwidth}
        \centering
       \includegraphics[width=\textwidth]{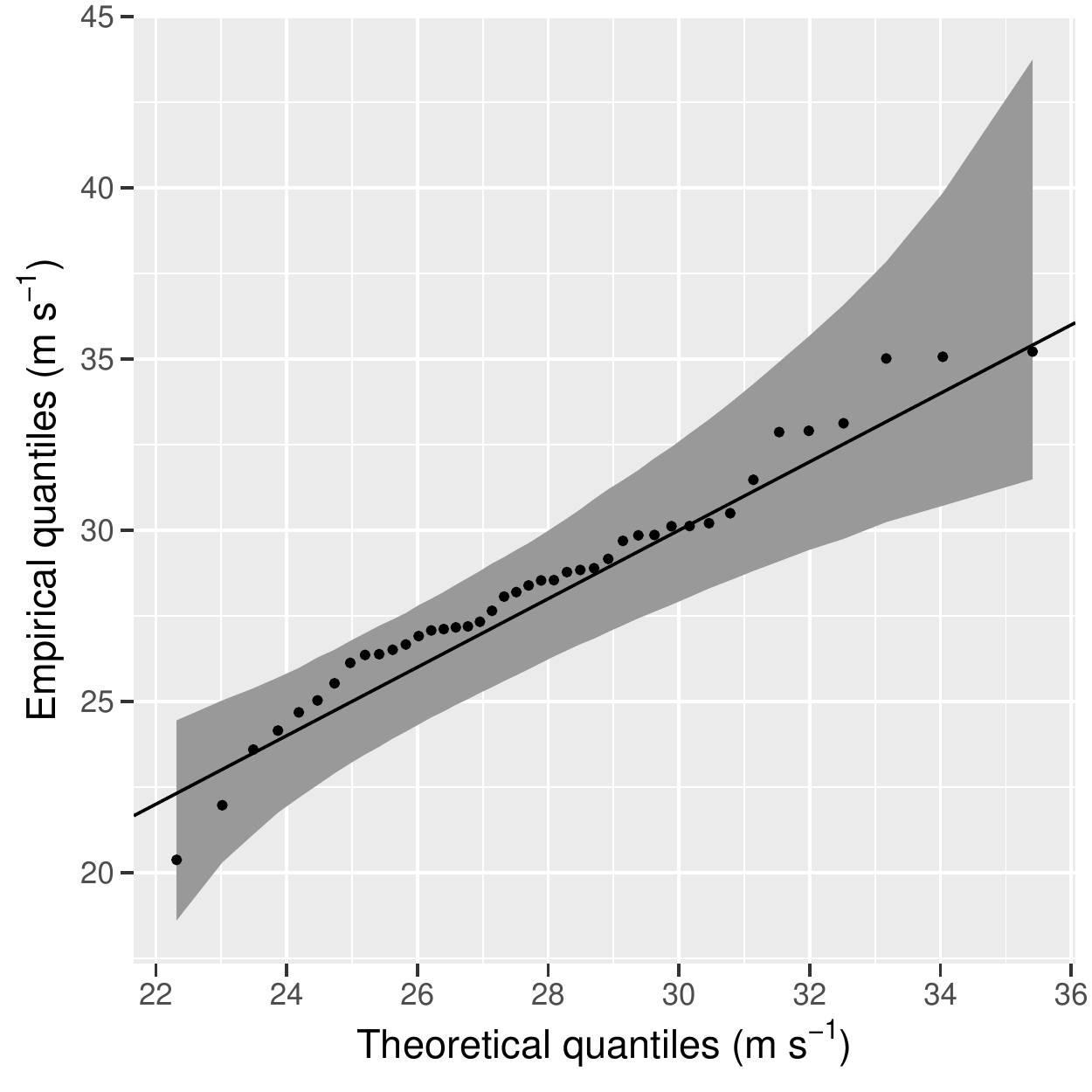}
        \end{subfigure}
        \hfill
        \begin{subfigure}[b]{0.32\textwidth}  
            \centering 
            \includegraphics[width=\textwidth]{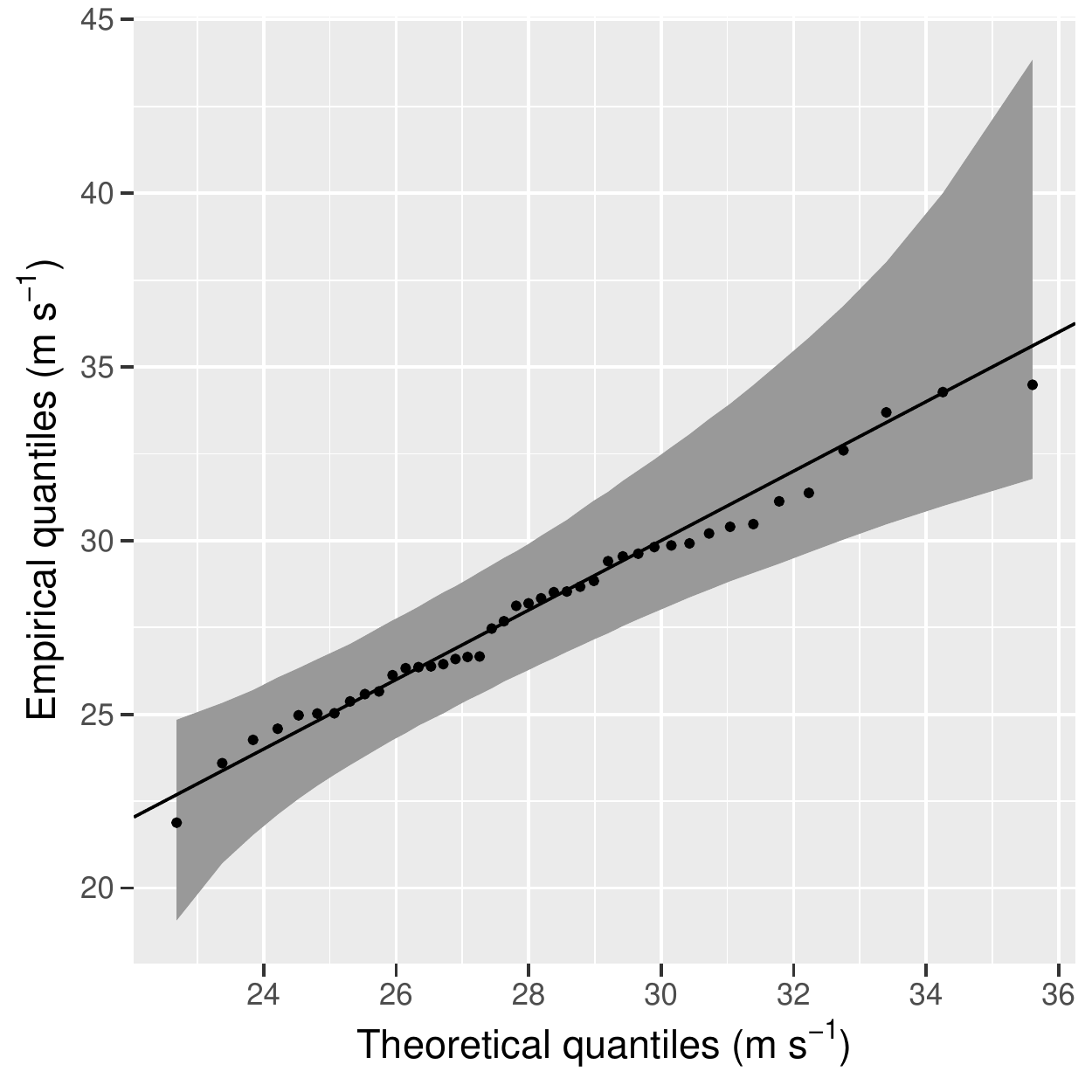}
        \end{subfigure}
        \hfill
        \begin{subfigure}[b]{0.32\textwidth}  
            \centering 
            \includegraphics[width=\textwidth]{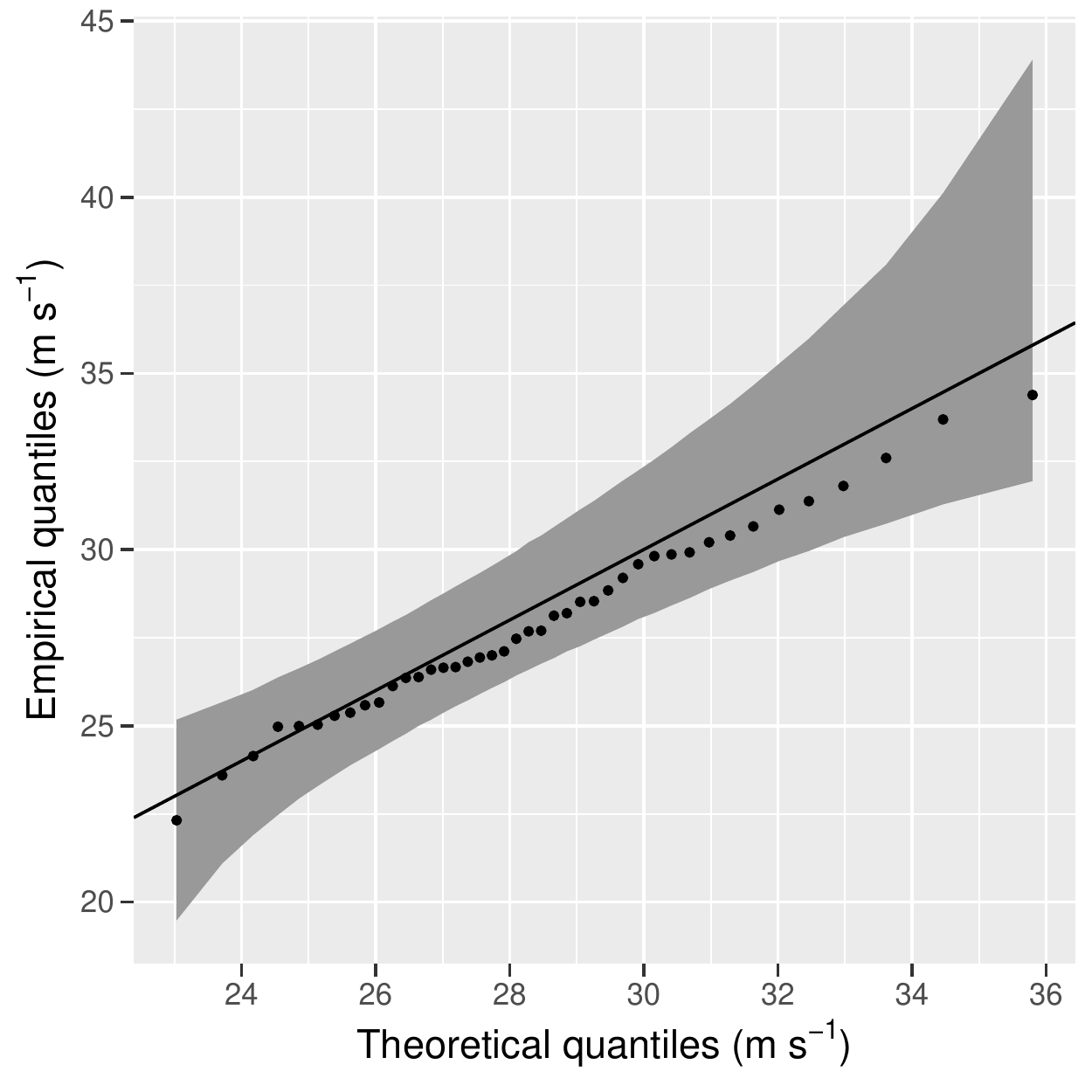}
        \end{subfigure}
        \vskip\baselineskip
        \begin{subfigure}[b]{0.32\textwidth}   
            \centering 
            \includegraphics[width=\textwidth]{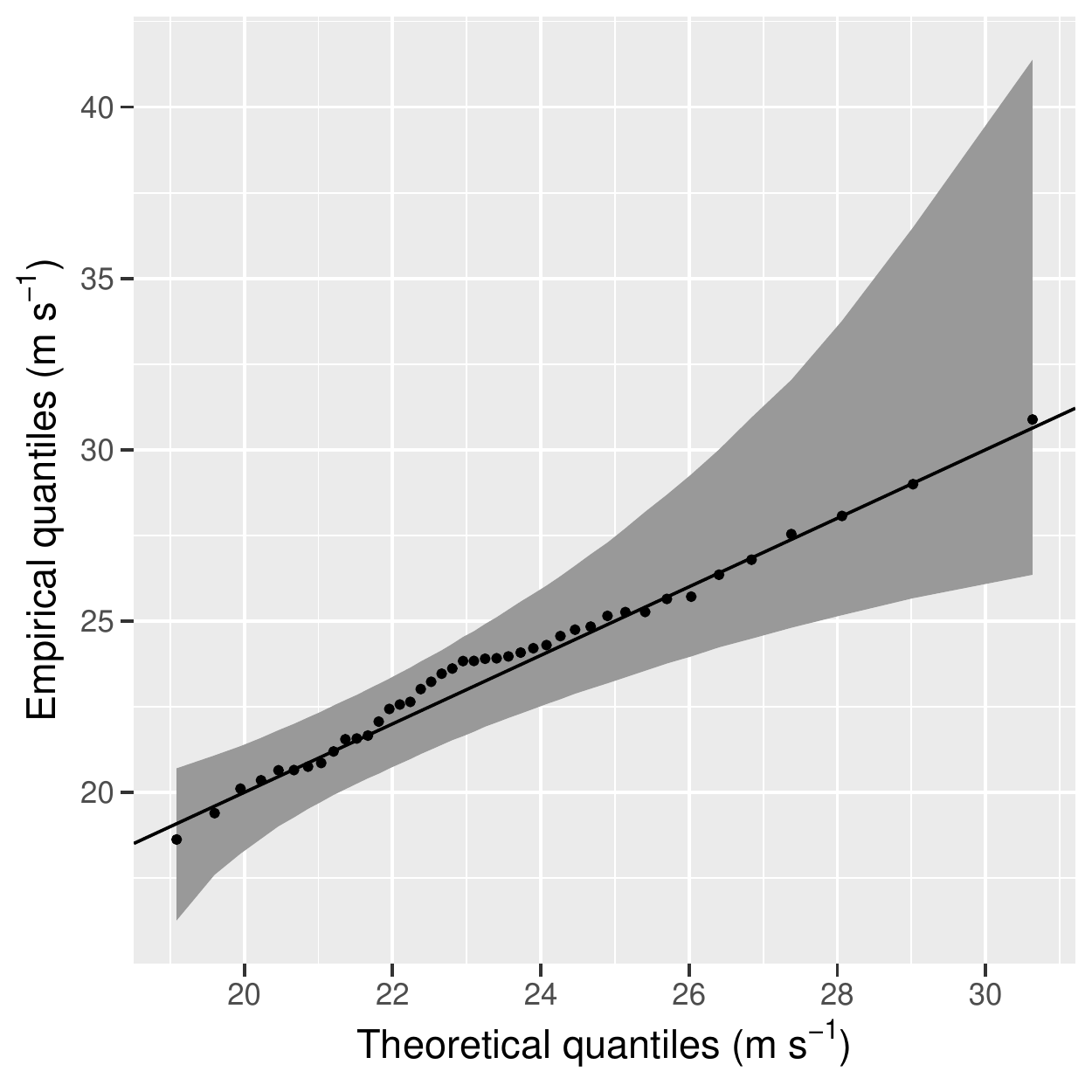}
        \end{subfigure}
        \hfill
        \begin{subfigure}[b]{0.32\textwidth}   
            \centering 
            \includegraphics[width=\textwidth]{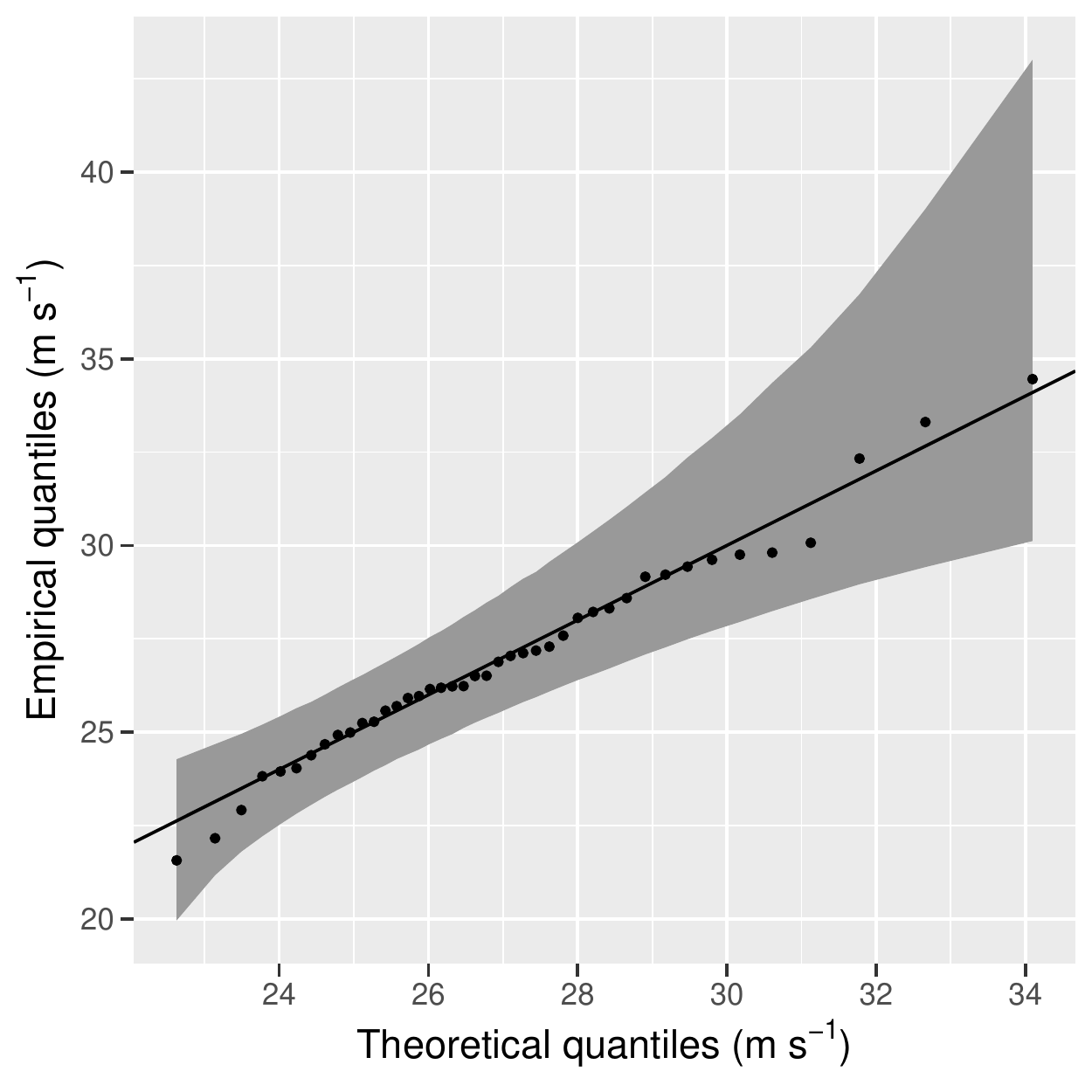}
        \end{subfigure}
        \hfill
        \begin{subfigure}[b]{0.32\textwidth}   
            \centering 
            \includegraphics[width=\textwidth]{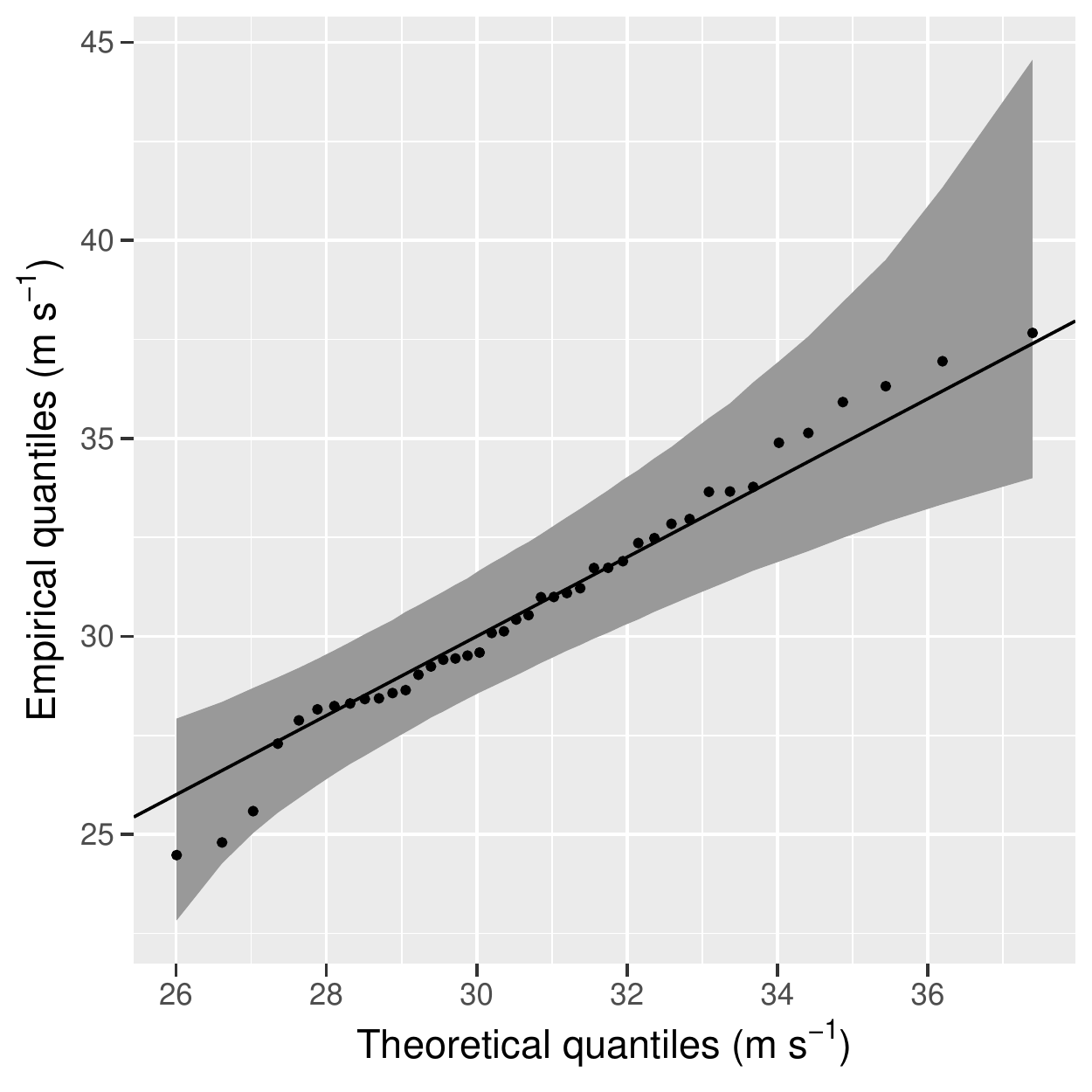}
        \end{subfigure}
        \vskip\baselineskip
        \begin{subfigure}[b]{0.32\textwidth}   
            \centering 
            \includegraphics[width=\textwidth]{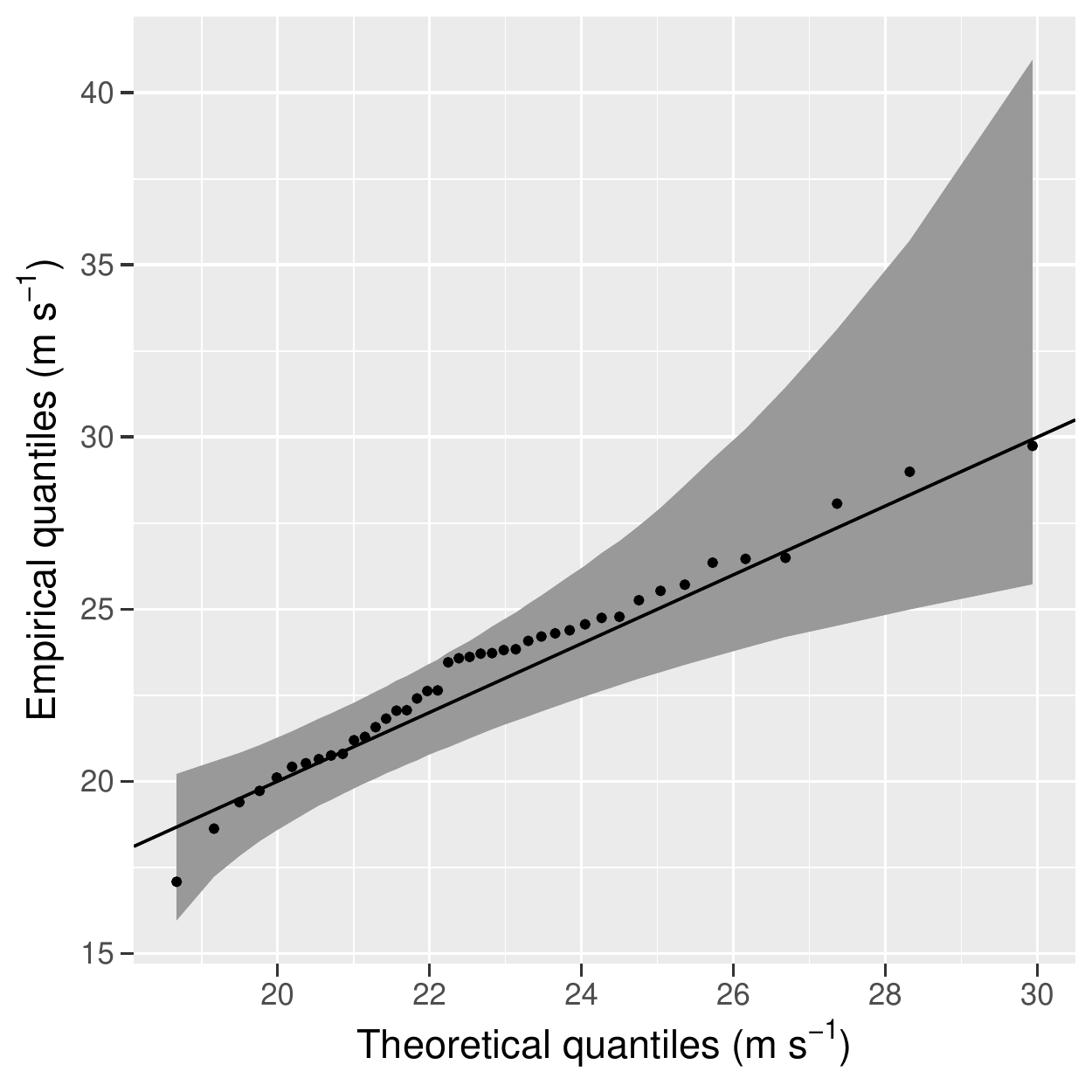}
        \end{subfigure}
        \hfill
        \begin{subfigure}[b]{0.32\textwidth}   
            \centering 
            \includegraphics[width=\textwidth]{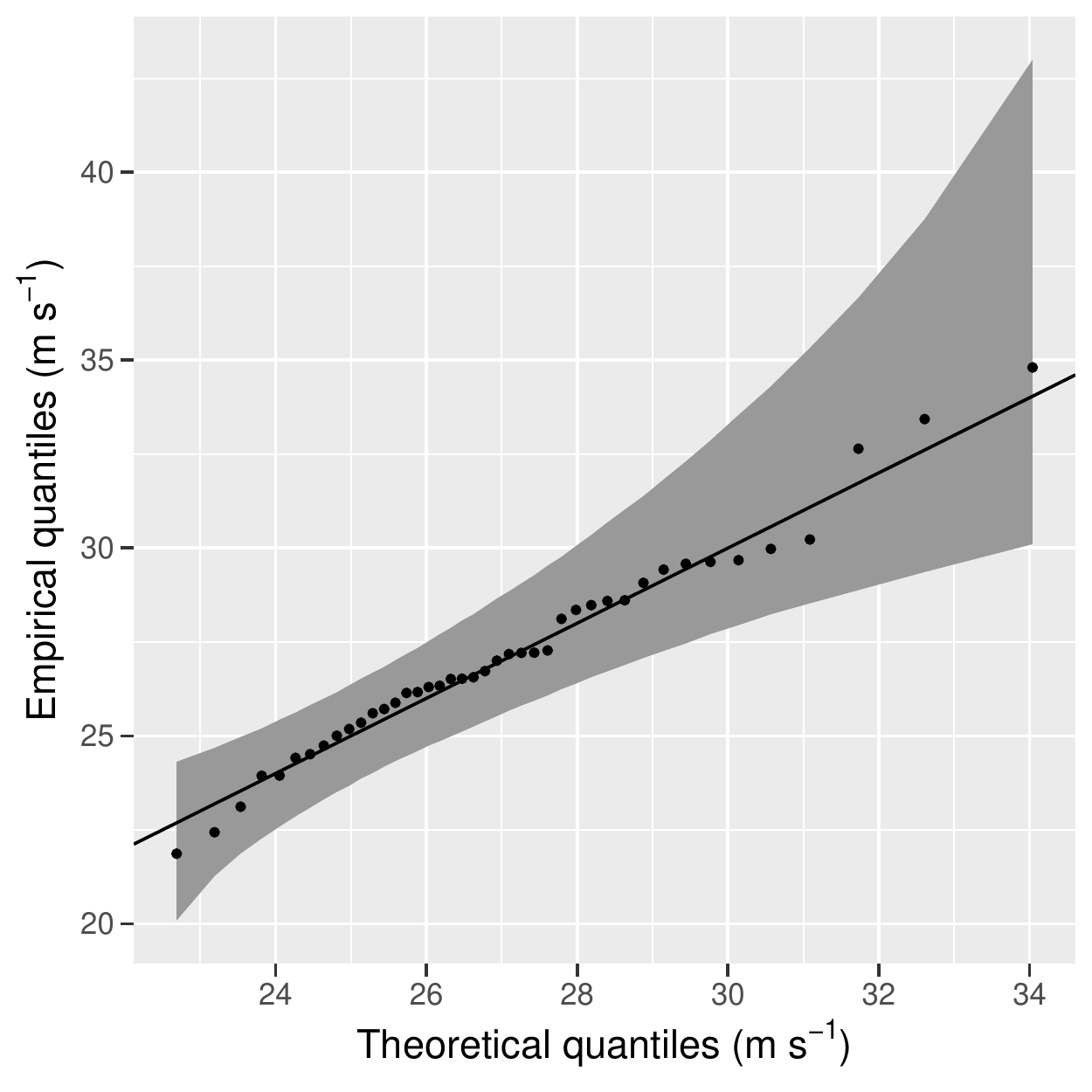}
        \end{subfigure}
        \hfill
        \begin{subfigure}[b]{0.32\textwidth}   
            \centering 
            \includegraphics[width=\textwidth]{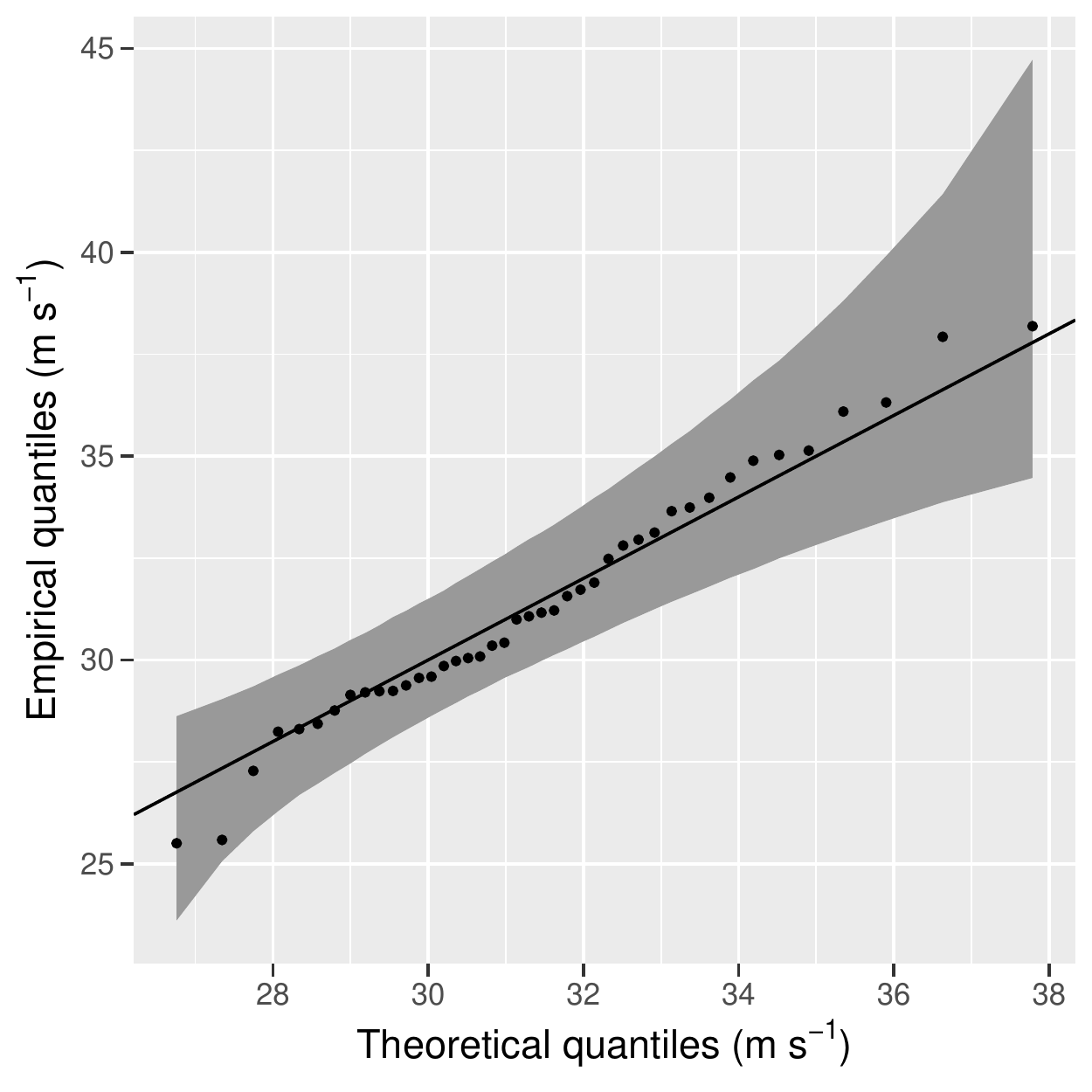}
        \end{subfigure}
        \caption{Performance of the fitted Brown--Resnick model on the validation grid points. The top row concerns maxima for pairs of validation grid points separated by a low (left), moderate (middle) and long (right) distance. The middle row focuses on minima (left), mean (middle) and maxima (right) for a group of $40$ validation grid points chosen randomly. The bottom row concerns  minima (left), mean (middle) and maxima (right) for all $118$ validation grid points. Overall envelopes at the 95\% confidence level are depicted in dark grey.}
        \label{Fig_QQplotsValidation}
    \end{figure}
  \begin{figure}[!h]
    \centering
    \begin{subfigure}[b]{0.49\textwidth}
        \centering
       \includegraphics[width=\textwidth]{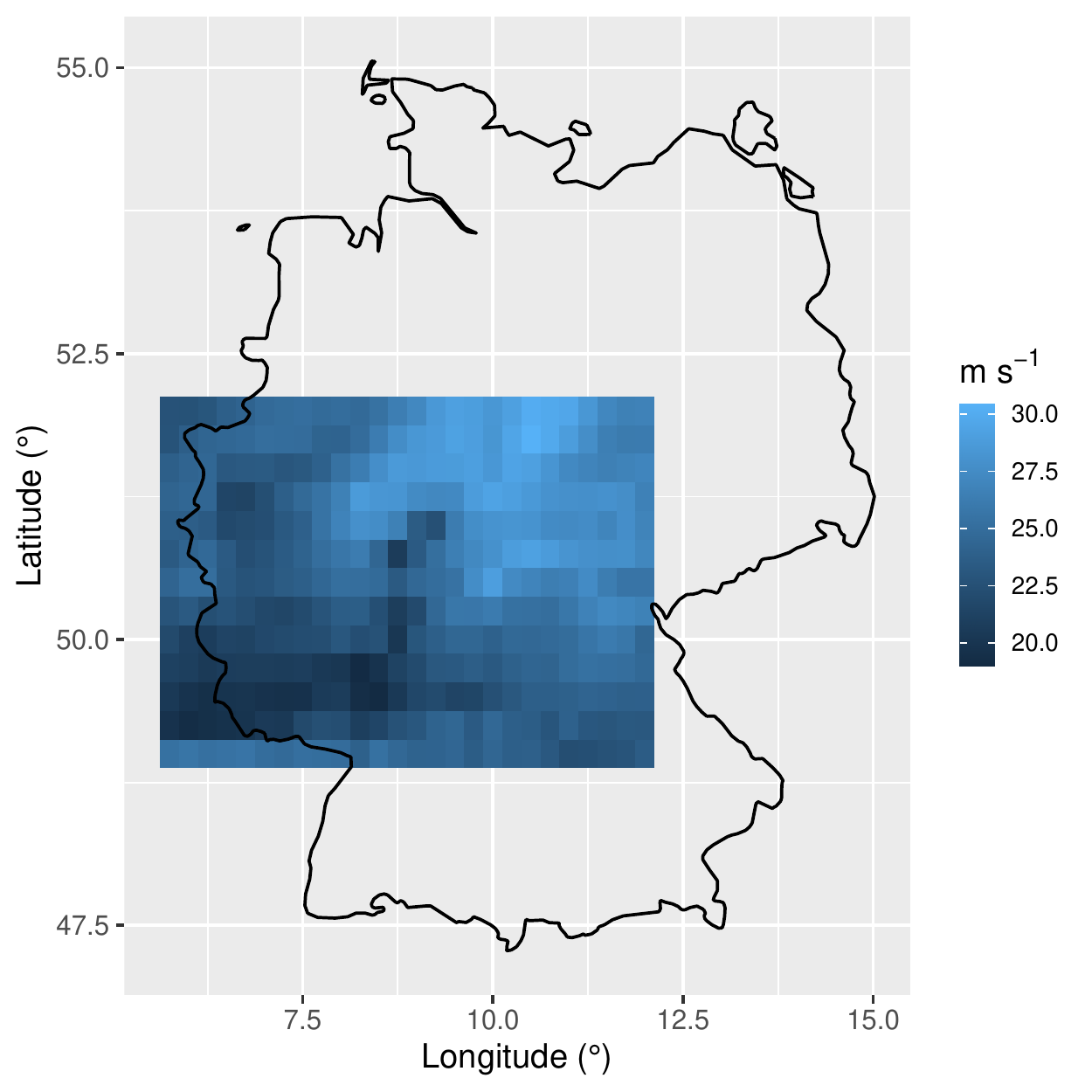}
        \end{subfigure}
        \hfill
        \begin{subfigure}[b]{0.49\textwidth}  
            \centering 
            \includegraphics[width=\textwidth]{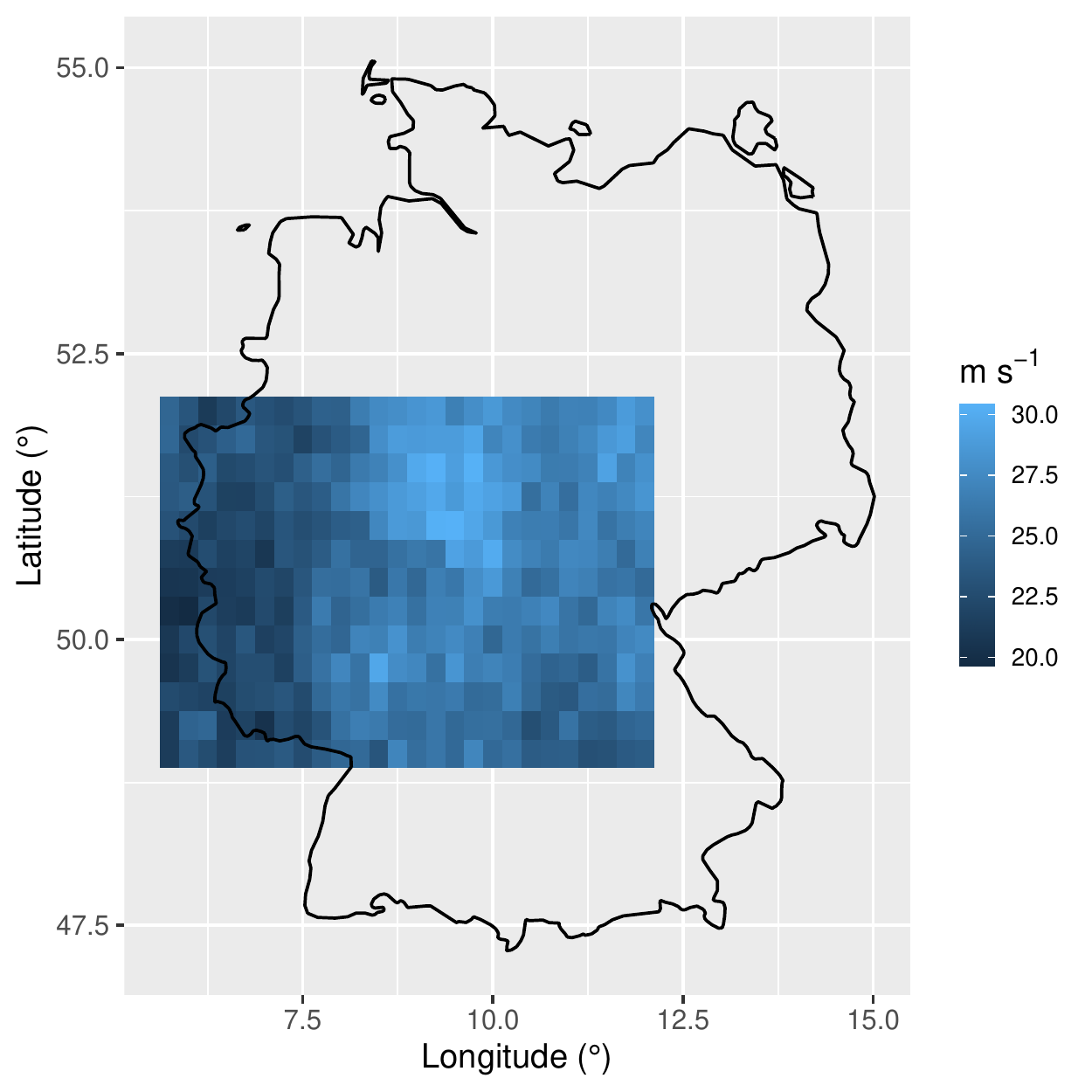}
        \end{subfigure}
        \hfill
        \begin{subfigure}[b]{0.49\textwidth}
        \centering
       \includegraphics[width=\textwidth]{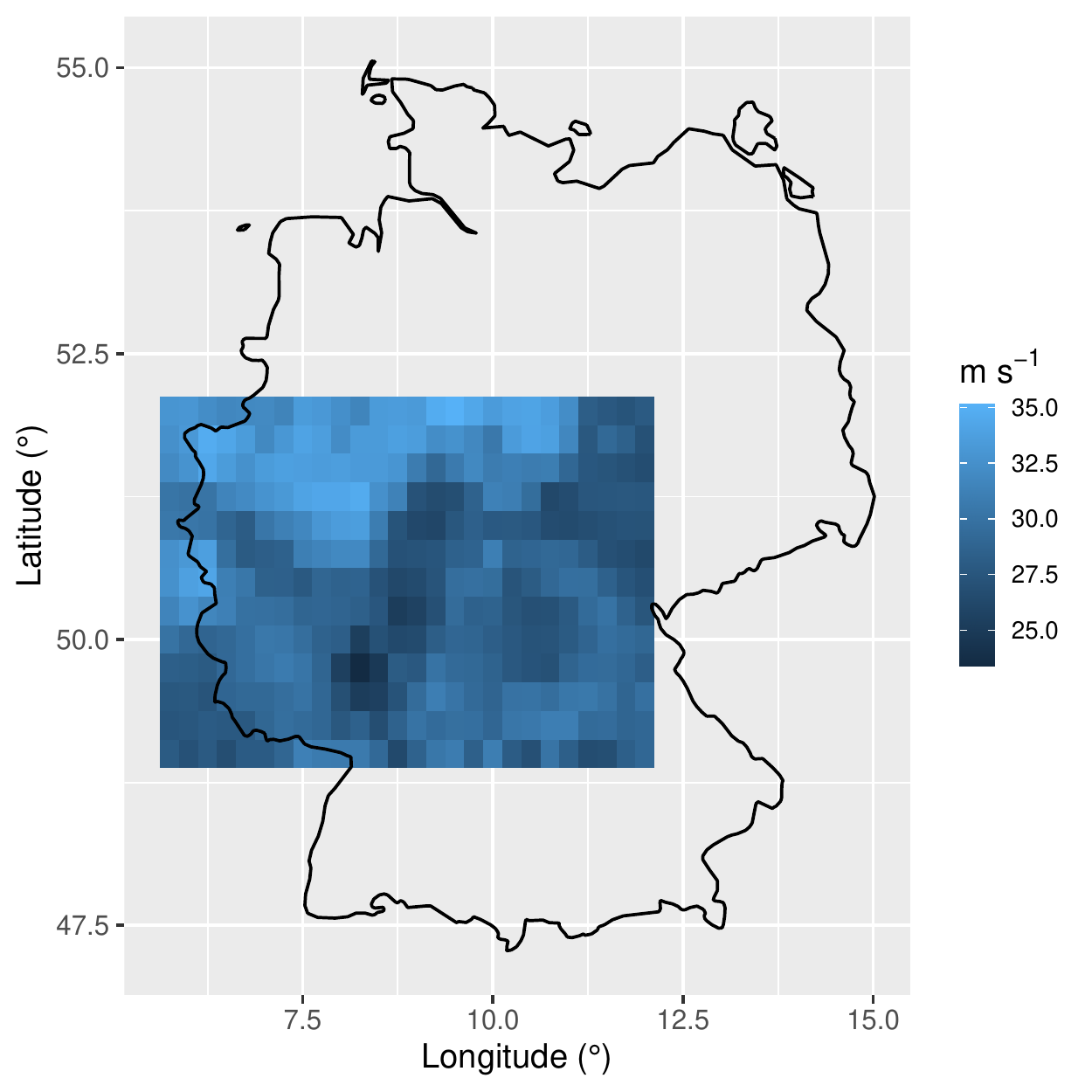}
        \end{subfigure}
        \hfill
        \begin{subfigure}[b]{0.49\textwidth}  
            \centering 
            \includegraphics[width=\textwidth]{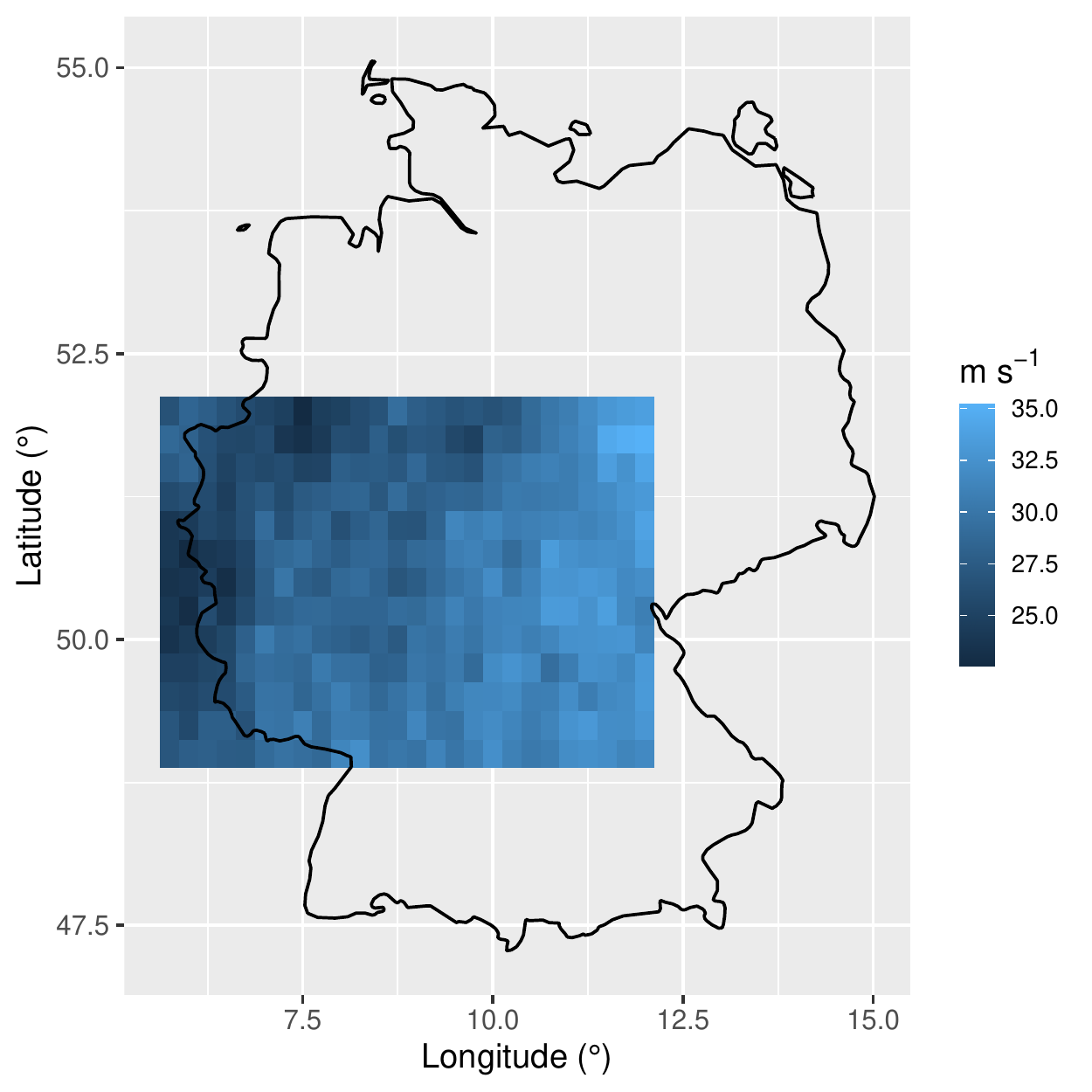}
        \end{subfigure}
        \caption{Comparison between observed fields of pointwise maxima and realizations from the fitted Brown--Resnick model. On the left, pointwise maxima over the period October 2005--March 2006 (top) and the period October 2002--March 2003 (bottom). On the right, examples of realizations from the model having values comparable with those in the first column. 
}
\label{FigPathsRealSimul}
\end{figure}
    
Having shown that our model performs well, we fit it to the data corresponding to all grid points in order to get as accurate parameters' estimates as possible; see Table \ref{Fig_EstimatesAllGridPoints}. Our estimates are in line with general findings on wind speed extremes. Many studies point out that the shape parameter $\xi$ is usually slightly negative, entailing that the distribution of wind speed maxima has a finite right endpoint. E.g., \cite{ceppi2008extreme} obtain a $\xi$ ranging from $-0.2$ to $0$ by fitting a generalized Pareto distribution (GPD) to in situ observations over Switzerland. Similarly, \cite{della2007extreme} fit a GPD to ERA-40 (ECMWF Reanalysis originally intended as a 40-year reanalysis) windstorms data over Europe and find negative values, between $-0.1$ and $-0.3$ on most of land areas; see their Figure 4.15. Typical values for the location and scale parameters $\eta$ and $\tau$ for yearly maxima over Europe are about $25$ m s$^{-1}$ and $3$ m s$^{-1}$, respectively; e.g., considering annual maxima at $35$ weather stations in the Netherlands, \cite{ribatet2013spatial} obtains trend surfaces whose intercepts are about $27$ m s$^{-1}$ for $\eta$ and $3.25$ m s$^{-1}$ 
for $\tau$. Finally, a value of the smoothness parameter $\psi$ between $0.2$ and $1$ seems reasonable; e.g, \cite{ribatet2013spatial} obtains $0.24$ on the Netherlands data and, on similar ones, \cite{einmahl2016m} find $0.40$. We obtain a higher value perhaps because reanalysis data tend to be smoother than in situ observations.
\begin{table}[!h]
\center
\begin{tabular}{c|c|c|c|c}
$\kappa $ & $\psi $ & $\eta $ & $\tau $
& $\xi $ \\ 
\hline
$3.39\ (1.18)$ & $0.81\ (0.05)$ & $25.71\ (0.41)$ & $3.03\
(0.22)$ & $-0.12 \ (0.02)$
\end{tabular}
\newline
\caption{Parameters' estimates (standard errors inside brackets) when using all grid points for the fit.}
\label{Fig_EstimatesAllGridPoints}
\end{table}

\subsection{Results}
\label{Subsec_Results}

Using \eqref{Eq_CostFieldModel}, \eqref{Eq_GeneralDamageFunction} and the facts that $E(\bm{x})>0$ for any $\bm{x} \in \Mbb{R}^2$ and $c_1^{\beta}>0$, we get 
$$ \Mr{Corr}(C(\bm{x}_1), C(\bm{x}_2)) = \Mr{Corr} \left( X^{\beta}(\bm{x}_1), X^{\beta}(\bm{x}_2) \right) = \mathcal{D}_{X, \beta}(\bm{x}_1, \bm{x}_2).$$
Therefore, our dependence measure \eqref{Eq_DepMeas} naturally appears in concrete assessments of the spatial risk associated with extreme wind speed. In this section, we thoroughly study the evolution of $\mathcal{D}_{X, \beta}(\bm{x}_1, \bm{x}_2)$ with respect to $\| \bm{x}_2-\bm{x}_1 \|$, where $X$ is the Brown--Resnick model fitted to the data in Section \ref{Subsubsec_Model}, i.e., with semivariogram \eqref{Eq_Power_Variogram} and parameters in Table \ref{Fig_EstimatesAllGridPoints}, and where $\beta$ has the proper value on our region, i.e., $10$. The integral in $I_{\beta_1, \beta_2}$ (see \eqref{Eq_Def_g_beta1_beta2}) has no closed form and therefore a numerical approximation is required. For this purpose, we use adaptive quadrature with a relative accuracy of $10^{-13}$. Figure \ref{Fig_CorrCurveBeta10} shows a decrease of $\mathcal{D}_{X, \beta}$ from $1$ to $0$ as the Euclidean distance increases, in agreement with our theoretical results of Section \ref{Subsec_TheoreticalContribution}. The decrease is quite slow owing to fairly large range $\kappa$ and  rather low smoothness $\psi$. For two grid points $5 ^\circ$ and $10 ^\circ$ away, $\mathcal{D}_{X, 10}$ is still as high as $0.65$ and $0.48$, respectively. The latter conclusion is however hypothetical as the largest distance between two grid points in our region is about $6.93 ^{\circ}$; fitting our model on a wider region would be possible, but the assumption of spatially-constant GEV parameters and power might be less suitable. This slow decrease points out the need for an insurer to cover a wider region than the one considered here in order to benefit from sufficient spatial diversification.
\begin{figure}[!h]
\begin{center}
\includegraphics[scale=0.7]{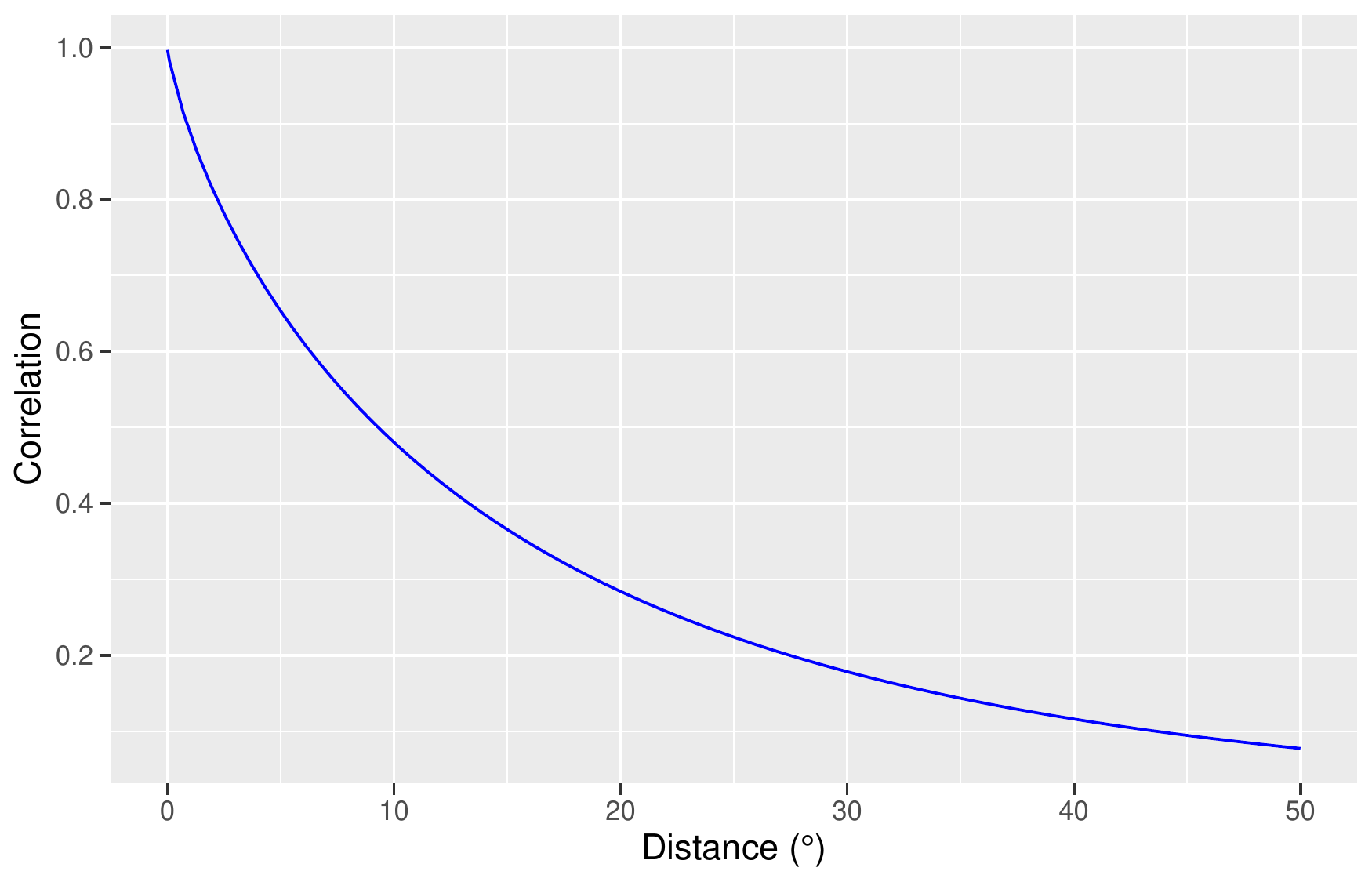}
\caption{Evolution of $\mathcal{D}_{X, 10}(\bm{x}_1, \bm{x}_2)$ with respect to $\| \bm{x}_2-\bm{x}_1 \|$ for $X$ being the Brown--Resnick field with semivariogram \eqref{Eq_Power_Variogram} and parameters in Table \ref{Fig_EstimatesAllGridPoints}.}
\label{Fig_CorrCurveBeta10}
\end{center}
\end{figure}

As already mentioned, various values (basically from $2$ to $12$) of damage powers have been proposed in the literature and the appropriate one may depend on the insurance contract. Moreover, as explained in Section \ref{Sec_Intro}, taking powers (such as the square) of the variables of interest is worthwhile when using correlation as dependence measure. For example, if the true power is $6$, it may also be valuable to study $\Mr{Corr}([X^6(\bm{x}_1)]^2, [X^6(\bm{x}_2)]^2)=\Mr{Corr}(X^{12}(\bm{x}_1), X^{12}(\bm{x}_2))$. For these reasons, investigating how $\mathcal{D}_{X, \beta}(\bm{x}_1, \bm{x}_2)$ varies with $\beta$ for a given max-stable model $X$ and various values of $\bm{x}_1 - \bm{x}_2$ is useful. Figure \ref{Plot_Surface_Corr_4panels} shows that whatever the model considered (including the one fitted to our data) and for any given Euclidean distance, $\mathcal{D}_{X, \beta}$ is only faintly sensitive to the value of $\beta$; more precisely, it very slightly increases in a concave way with $\beta$. On top of being potentially insightful for the understanding of max-stable fields, this finding is valuable for actuarial practice as it shows that making a small error on the evaluation of $\beta$ is not very impactful as far as correlation is concerned. Nonetheless this does not imply that the computations should be done with $\beta=1$ regardless of the true power value. First, although evolving little with $\beta$, our dependence measure is not constant with $\beta$ and so using the right value is recommended for accuracy. Second, $\beta$ strongly affects $\Mr{Var}(X^{\beta}(\bm{x}))$ for any $\bm{x} \in \Mbb{R}^2$, and thus for instance the covariance function and the variance in \eqref{Eq_ExpressionVarianceIntegral}. 

Although the smoothness parameter $\psi$ has been estimated on the data, we also consider various values since $\psi$ heavily affects the rate of decrease of $\mathcal{D}_{X, \beta}$ as the distance between the two sites increase, and thereby the rate of spatial diversification for an insurance company. This allows us to figure out the impact of the use of rougher or smoother data, of estimation error, and of model misspecification. We take $\psi=0.5, 0.81, 1.5, 2$; the value $0.81$ is the one we obtained on our data, $\psi=2$ corresponds to the Smith field with $\Sigma=I_2$ (see \eqref{Eq_Variogram_Smith_Field}), $\psi=1.5$ is intermediate between these two settings, and $\psi=0.5$ corresponds to a quite rough field. In accordance with the discussion at the end of Section \ref{Subsec_TheoreticalContribution}, Figure \ref{Plot_Surface_Corr_4panels} shows that $\mathcal{D}_{X, \beta}$ decreases from $1$ to $0$ as the Euclidean distance increases, and this at a higher rate for larger values of $\psi$. The decrease is faster for the Smith field than for all Brown--Resnick fields having $\psi<2$, and if the true value of $\psi$ is close to $0.5$ or even $0.81$, using the Smith model leads to a serious underestimation of the dependence between insured costs. The minimum Euclidean distance required for $\mathcal{D}_{X, 10}$ to be lower than $0.1$ equals $43.60 ^\circ$ for $\psi=0.81$, instead of around $9.54 ^\circ$ for $\psi=2$ (not shown). 

The results outlined in the two previous paragraphs remain qualitatively unchanged with other values of $\eta$, $\tau$, $\xi$, and choosing a specific value for $\kappa$ does not induce any loss of generality in our study; should $\kappa$ be different, the appropriate plots would be the same as in Figure \ref{Plot_Surface_Corr_4panels} with the values on the x-axis multiplied by the ratio between the true value and the one chosen here. 
\begin{figure}[!h]
\center
\includegraphics[scale=0.85]{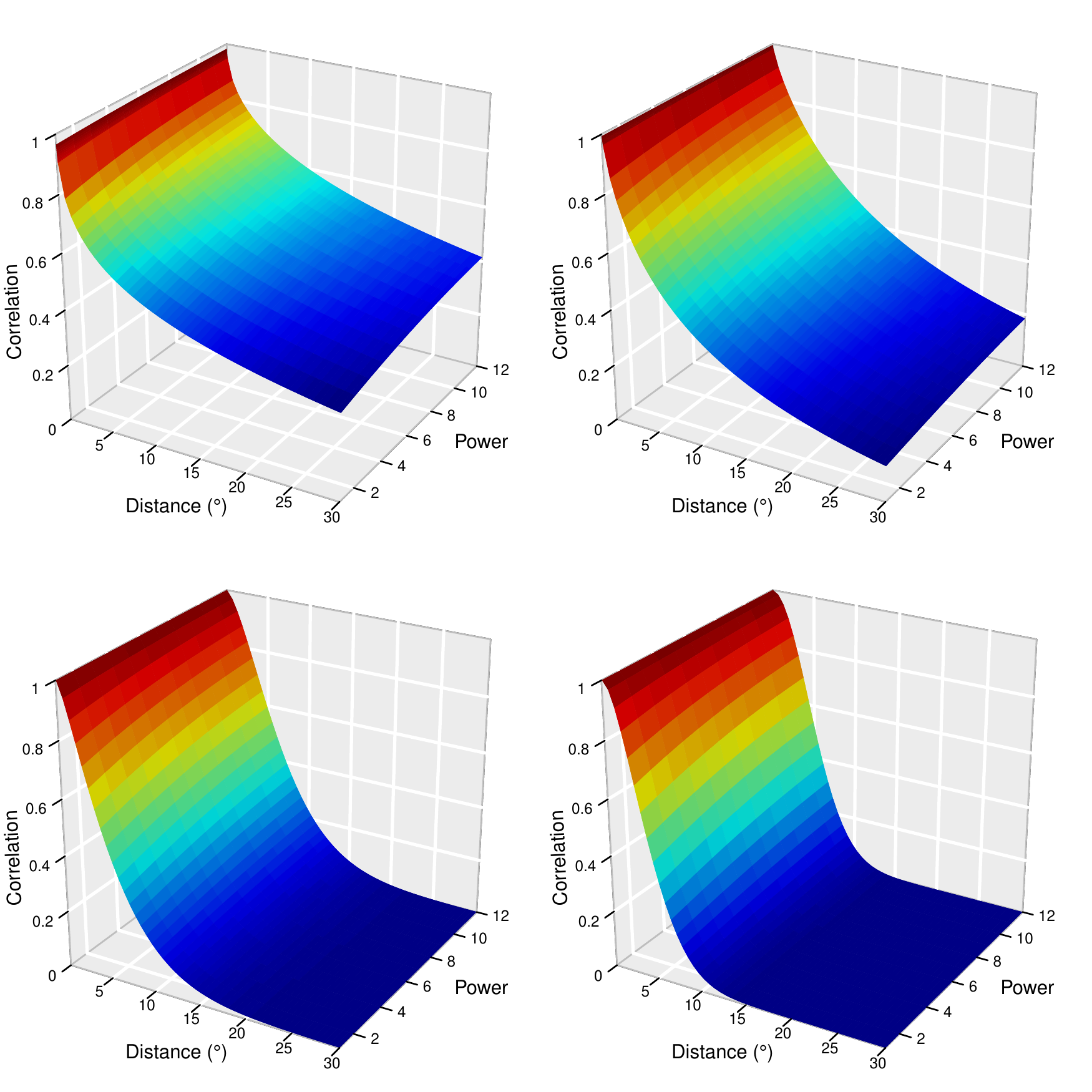}
\caption{Evolution of $\mathcal{D}_{X, \beta}(\bm{x}_2-\bm{x}_1)$ with respect to the distance $\| \bm{x}_2-\bm{x}_1 \|$ and the power $\beta$, where $X$ is the Brown--Resnick field with semivariogram \eqref{Eq_Power_Variogram} with $\psi = 0.5 \mbox{ (top left), } 0.81 \mbox{ (top right), } 1.5 \mbox{ (bottom left) and } 2 \mbox{ (bottom right)}$, and whose other parameters are given in Table \ref{Fig_EstimatesAllGridPoints}.}
\label{Plot_Surface_Corr_4panels}
\end{figure}

Finally we briefly study the extension of \eqref{Eq_DepMeas} where the marginal parameters and the power  are site-specific. We consider two sites $\bm{x}_1, \bm{x}_2$ that are $3 ^{\circ}$ away, but our findings hold more generally. We successively investigate the effects of a spatially-varying power, location, scale and shape; more precisely we evaluate \eqref{Eq_CorrVaryingBetaEtaTauXi} where $X$ is the Brown--Resnick model with semivariogram \eqref{Eq_Power_Variogram}
\Bit
\item with parameters in Table \ref{Fig_EstimatesAllGridPoints} and $\beta(\bm{x}_1), \beta(\bm{x}_2) \in \{ 1, \ldots, 12\}$.
\item with parameters in Table \ref{Fig_EstimatesAllGridPoints} apart from the location ($\eta(\bm{x}_1), \eta(\bm{x}_2) \in [15, 35]$), and $\beta(\bm{x}_1)=\beta(\bm{x}_2)=10$.
\item with parameters in Table \ref{Fig_EstimatesAllGridPoints} apart from the scale ($\tau(\bm{x}_1), \tau(\bm{x}_2) \in [2, 4]$), and $\beta(\bm{x}_1)=\beta(\bm{x}_2)=10$.  
\item with parameters in Table \ref{Fig_EstimatesAllGridPoints} apart from the shape ($\xi(\bm{x}_1), \xi(\bm{x}_2) \in [-0.2, -0.06]$), and $\beta(\bm{x}_1)=\beta(\bm{x}_2)=10$.  
\Eit
The ranges for the GEV parameters have been chosen to be approximately centred on the estimates obtained on the data. Figure \ref{FigCorrelationCombinations} shows that, for a fixed $\beta(\bm{x}_1)$, the correlation increases with $\beta(\bm{x}_2)$ on $[1, \beta(\bm{x}_1)]$ and then decreases. The highest correlation is thus obtained for $\beta(\bm{x}_2) = \beta(\bm{x}_1) = \beta$, and, as already seen, slightly increases in a concave way when $\beta$ increases. Also, the higher the difference between $\beta(\bm{x}_1)$ and $\beta(\bm{x}_2)$, the lower the correlation. Similar conclusions hold for the scale and shape parameters, although the variations of the correlation are smaller for the chosen range of values. For $\tau(\bm{x}_1)=\tau(\bm{x}_2)=\tau$, the increase with respect to $\tau$ is concave, whereas for $\xi(\bm{x}_1)=\xi(\bm{x}_2)=\xi$, the increase with respect to $\xi$ is linear. The findings for the location are similar to those for the scale and shape although, for  $\eta(\bm{x}_1)=\eta(\bm{x}_2)=\eta$, the correlation slowly decreases in a concave way as $\eta$ increases.
\begin{figure}[!h]
    \centering
    \begin{subfigure}[b]{0.49\textwidth}
        \centering
       \includegraphics[width=\textwidth]{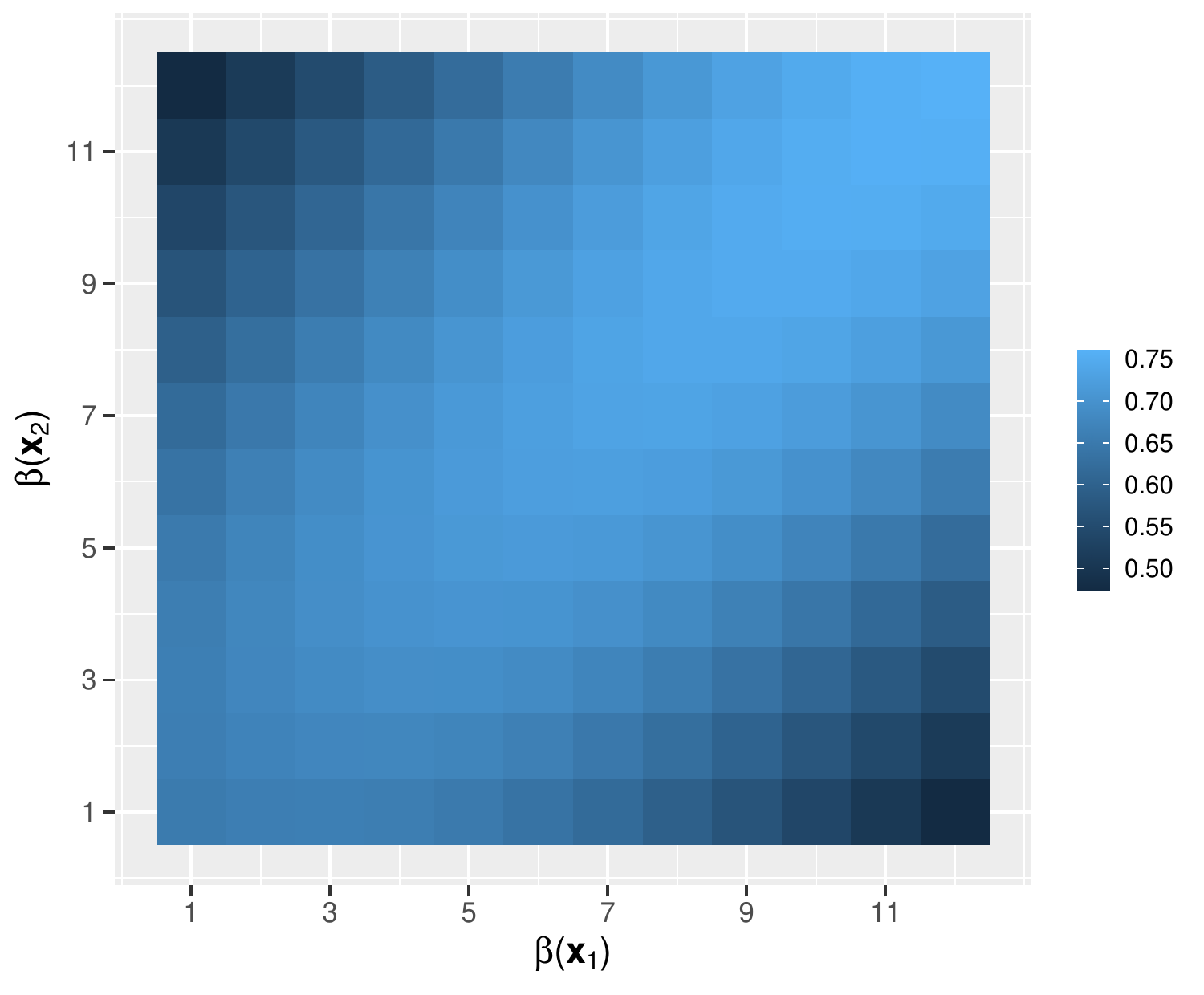}
        \end{subfigure}
        \hfill
        \begin{subfigure}[b]{0.49\textwidth}  
            \centering 
            \includegraphics[width=\textwidth]{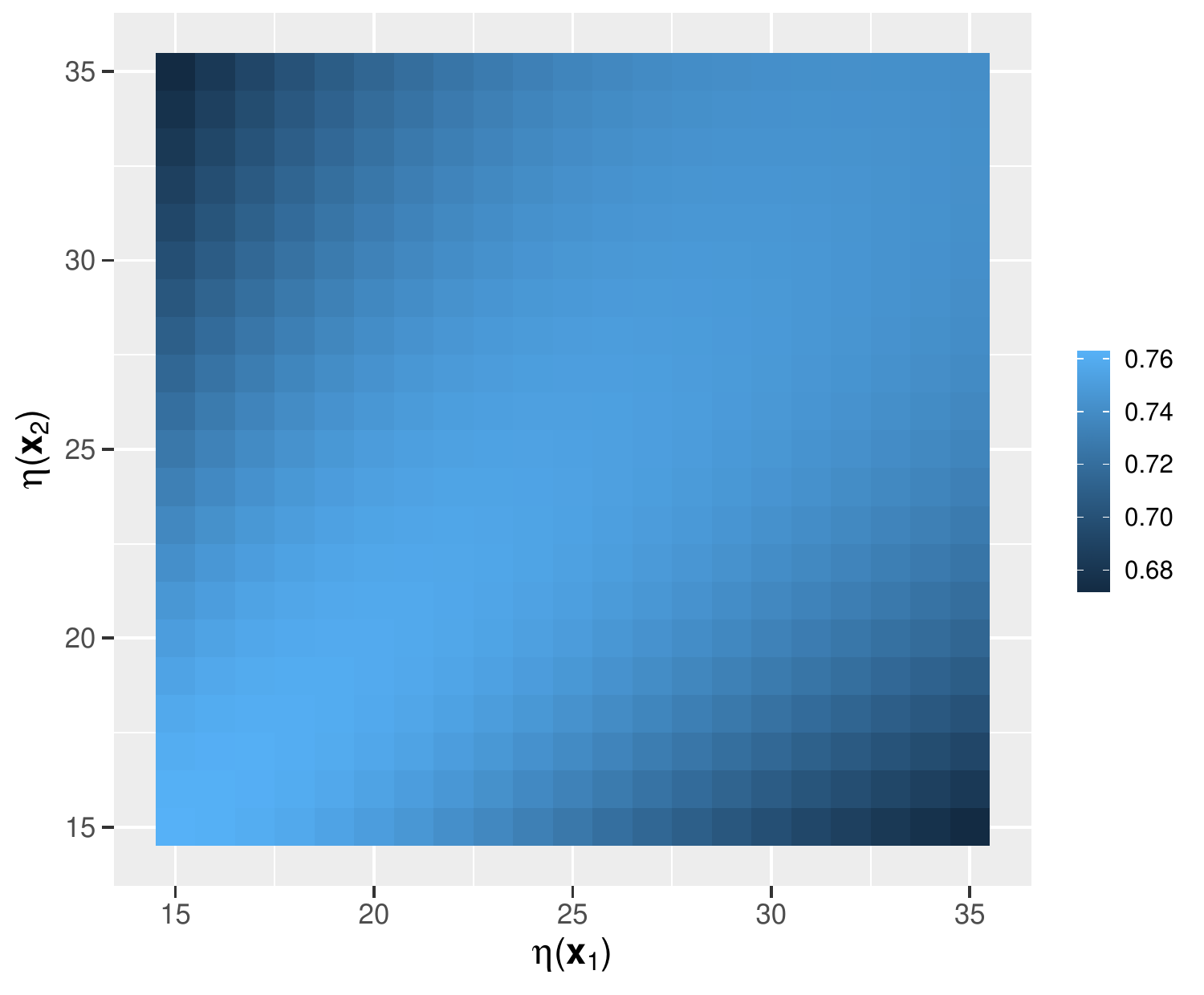}
        \end{subfigure}
        \hfill
        \begin{subfigure}[b]{0.49\textwidth}
        \centering
       \includegraphics[width=\textwidth]{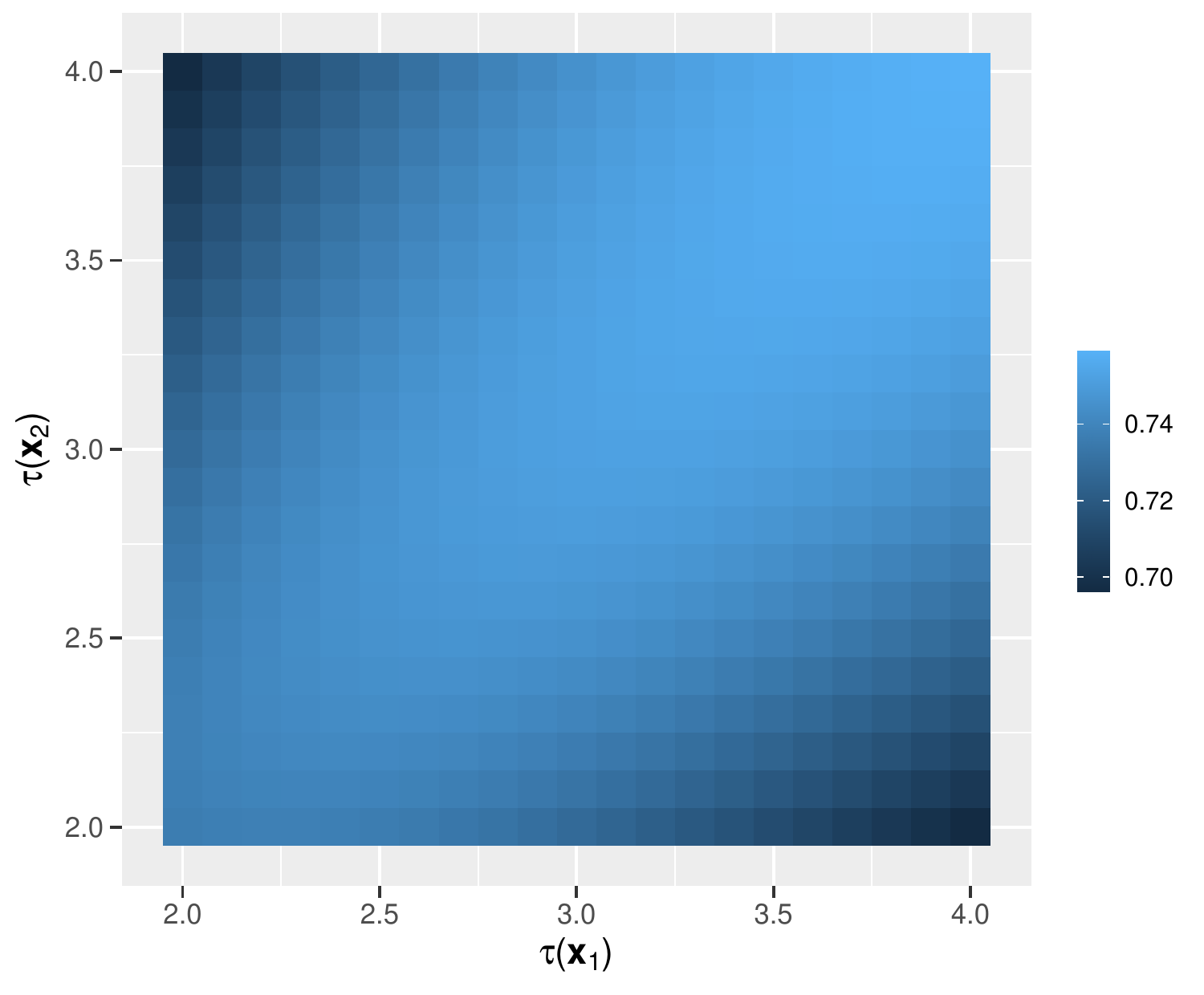}
        \end{subfigure}
        \hfill
        \begin{subfigure}[b]{0.49\textwidth}  
            \centering 
            \includegraphics[width=\textwidth]{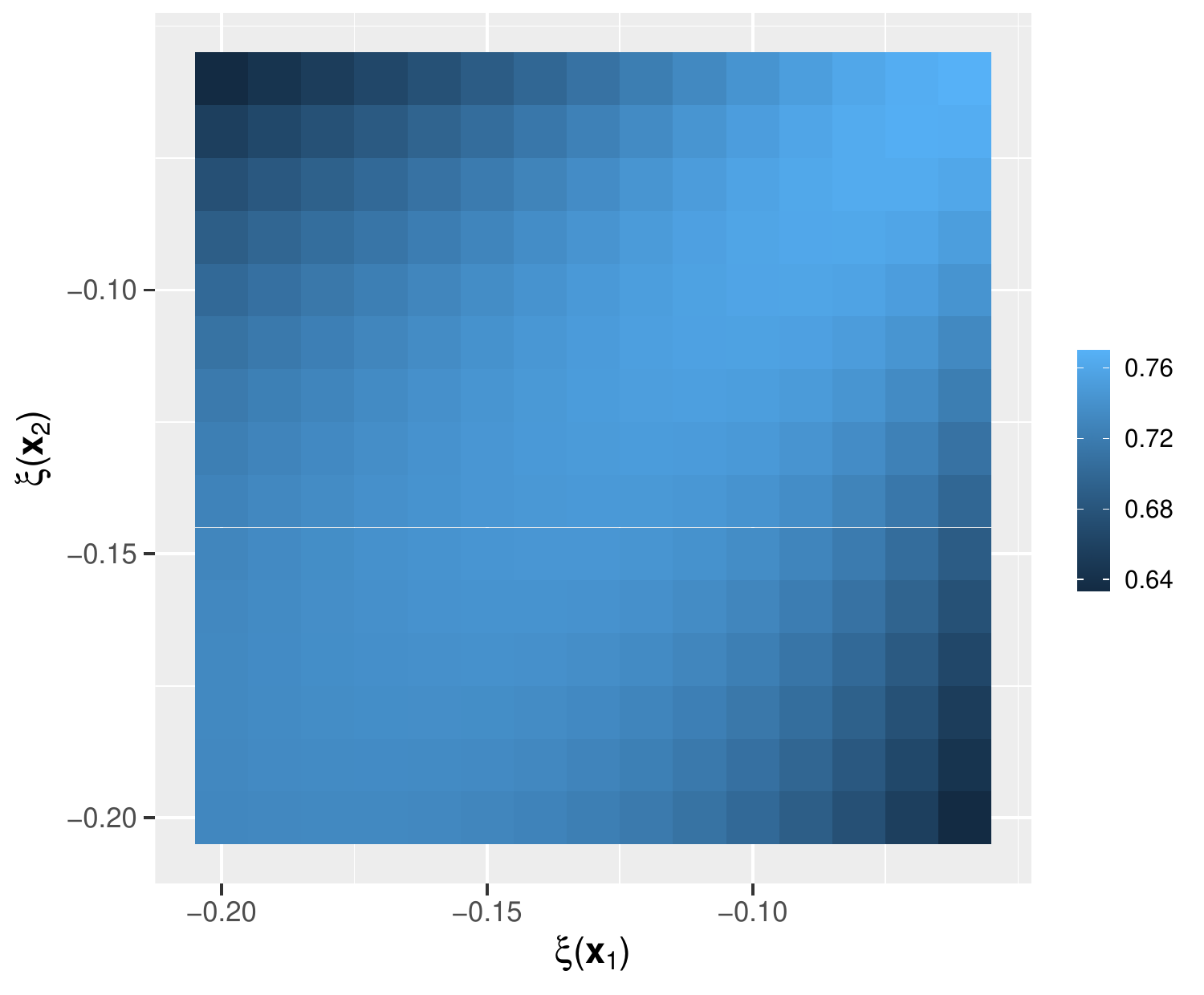}
        \end{subfigure}
        \caption{Heatmap of $\Mr{Corr} ( X^{\beta(\bm{x}_1)}(\bm{x}_1), X^{\beta(\bm{x}_2)}(\bm{x}_2))$, where $X$ is the Brown--Resnick field with semivariogram \eqref{Eq_Power_Variogram} with: parameters in Table \ref{Fig_EstimatesAllGridPoints} and $\beta(\bm{x}_1), \beta(\bm{x}_2) \in \{ 1, \ldots, 12\}$ (top left); parameters in Table \ref{Fig_EstimatesAllGridPoints} apart from the location ($\eta(\bm{x}_1), \eta(\bm{x}_2) \in [15, 35]$), and $\beta(\bm{x}_1)=\beta(\bm{x}_2)=10$ (top right); parameters in Table \ref{Fig_EstimatesAllGridPoints} apart from the scale ($\tau(\bm{x}_1), \tau(\bm{x}_2) \in [2, 4]$), and $\beta(\bm{x}_1)=\beta(\bm{x}_2)=10$ (bottom left); parameters in Table \ref{Fig_EstimatesAllGridPoints} apart from the shape ($\xi(\bm{x}_1), \xi(\bm{x}_2) \in [-0.2, -0.06]$), and $\beta(\bm{x}_1)=\beta(\bm{x}_2)=10$.
        }
\label{FigCorrelationCombinations}
\end{figure}

\section{Conclusion}
\label{Sec_Conclusion}

Hüsler--Reiss vectors and Brown--Resnick fields are popular and widely used models for componentwise and pointwise maxima. We provide explicit formulas for the correlation between powers of the components of bivariate Hüsler--Reiss vectors and deduce analytical expressions for the correlation function of powers of Brown--Resnick fields. Although extremal models are considered, studying the correlation function makes sense as the latter is required when we are interested in the variance or the asymptotic distribution of the spatial integral of a field, which is typically the case in spatial risk assessment. The application of a power transform to random variables before taking the correlation allows detection of part of non-linear dependence and is therefore common practice in financial time series analysis. Moreover, the relevance of powers as damage functions for natural disasters is largely documented in the literature. In the second part of the paper, we use our theoretical contributions and reanalysis wind gust data to study the spatial dependence of modelled insured losses from extreme wind speeds for residential buildings in Germany. We find that the dependence decreases slowly with the distance and that our dependence measure is not very sensitive to the power value.

The theoretical results obtained here are used in \cite{KochStochDerEstim2022} as well as in an ongoing study where spatial risk measures \citep{koch2017spatial, koch2019SpatialRiskAxioms} are applied to concrete assessment of the risk of impacts from extreme wind speeds. Other potentially interesting applications of the derived expressions are flood risk assessment and moment-based estimation of the parameters of Hüsler--Reiss vectors or Brown--Resnick fields. A more detailed study, both theoretically and numerically, of the correlation function expressed in Remark \ref{RqCorrBRDiffMarPow} (non-stationary case) would be welcome, and deriving analytical formulas of \eqref{Eq_DepMeas} for other classes of max-stable fields such as the extremal $t$ model \citep{opitz2013extremal} as well as $r$-Pareto fields \citep[e.g.,][]{de2018high} would be useful for applications.

Although our insured loss model is supported by the literature, thoroughly assessing its performance on insured loss data is prominent for practice, and this is done in an ongoing work. Finally, in the case study we take the value of $\beta$ obtained in other papers as given, and an approach that would involve estimating $\beta$ as well would consist in fitting powers of rescaled max-stable fields to insured loss data directly.

\section*{Acknowledgements}

The author wishes to thank Christian
Y. Robert for theoretical discussions, John Ery for exchanges about reanalysis data as well as Anthony C. Davison and Thomas Mikosch for some comments. He also would like to acknowledge the Swiss National Science Foundation (project
200021\_178824) and the Institute of Mathematics at EPFL for financial support.

\newpage

\appendix

\section{Proofs}

\subsection{For Theorem \ref{Th_CovarianceSimpleHR}}

\begin{proof}
First, we show the result for $h=0$. In that case, $Z_1=Z_2$ a.s. \citep[e.g.,][Section 2]{husler1989maxima}. Hence, since $Z_1$ and $Z_2$ follow the standard Fr\'echet distribution, 
$\mathbb{E} [Z_i^{\beta_i}]= \Gamma(1-\beta_i)$, $i=1, 2$, and thus
\begin{align*}
\mathrm{Cov} \left( Z_1^{\beta_1}, Z_2^{\beta_2} \right) =\Gamma(1-\beta_1-\beta_2)-\Gamma(1-\beta_1)\Gamma(1-\beta_2) = I_{\beta_1, \beta_2}(0)-\Gamma(1-\beta_1)\Gamma(1-\beta_2).
\end{align*} 

Now, we prove the result for $h>0$. We have
$$\mathbb{E} \left[ Z_1^{\beta_1} Z_2^{\beta_2} \right] = \displaystyle \int_{0}^{\infty} \displaystyle \int_{0}^{\infty} z_1^{\beta_1} z_2^{\beta_2} l(z_1, z_2) \mathrm{d}z_1 \mathrm{d}z_2,$$
where $l$ denotes the bivariate density of $\bm{Z}$.
In order to take advantage of the radius/angle decomposition of multivariate extreme-value distributions, we make the change of variable
$$
\begin{pmatrix}
z_1 \\
z_2
\end{pmatrix}
=\begin{pmatrix}
u \\
\theta\  u
\end{pmatrix}
=\begin{pmatrix}
\Psi_1(u, \theta) \\
\Psi_2(u, \theta)
\end{pmatrix}
=\Psi(u, \theta).
$$
The corresponding Jacobian matrix is written
$$
J_{\Psi}(u, \theta)=
\begin{pmatrix}
1 & 0 \\
& \\
\theta & u
\end{pmatrix},
$$  
and its determinant is thus $\det(J_{\Psi}(u, \theta))=u$.
Therefore, introducing 
$$a(z_1,z_2)=z_1^{\beta_1} z_2^{\beta_2} l(z_1, z_2), \quad z_1, z_2 >0,$$ 
we have
\begin{align}
\mathbb{E} \left[ Z_1^{\beta_1} Z_2^{\beta_2} \right] & = \int_{0}^{\infty} \int_{0}^{\infty} a(z_1,z_2) \mathrm{d}z_1 \mathrm{d}z_2 \nonumber
\\& = \int \int_{\Psi^{-1}((0, \infty )^2)} a(\Psi(u, \theta)) \det(J_{\Psi}(u, \theta)) \mathrm{d}u\mathrm{d}\theta \nonumber
\\&  = \int_{0}^{\infty} \int_{0}^{\infty} u^{\beta_1} \theta^{\beta_2} u^{\beta_2} l(u, \theta u) u \mathrm{d}u \mathrm{d}\theta \nonumber
\\&  = \int_{0}^{\infty} \int_{0}^{\infty} u^{\beta_1+\beta_2 +1 } \theta^{\beta_2} l(u, \theta u) \mathrm{d}u \mathrm{d}\theta.
\label{Chapter_Riskmeasures_Eq1_Exp_Damage}
\end{align}
Differentiation of \eqref{Eq_HuslerReissDist} yields \cite[see, e.g.,][Equation (4)]{padoan2010likelihood}, for $z_1, z_2>0$,
\begin{align}
l(z_1, z_2) =\exp \left( -\frac{\Phi(w)}{z_1}-\frac{\Phi(v)}{z_2} \right) \times \bigg[ \left( \frac{\Phi(w)}{z_1^2}+\frac{\phi(w)}{h z_1^2}-\frac{\phi(v)}{h z_1 z_2} \right)  &\times \left( \frac{\Phi(v)}{z_2^2}+\frac{\phi(v)}{h z_2^2}-\frac{\phi(w)}{h z_1 z_2} \right) \nonumber
 \\&  +\left( \frac{v \phi(w)}{h^2 z_1^2 z_2}+\frac{w \phi(v)}{h^2 z_1 z_2^2} \right) \bigg],
\label{Chapter_RiskMeasure_Smith_Bivariate_Density}
\end{align}
where 
$$w=\frac{h}{2}+\frac{\log \left( z_2 / z_1 \right)}{h}
\quad \mbox{and} \quad v=\frac{h}{2}-\frac{\log \left( z_2/z_1 \right)}{h}.$$

Therefore, for any $u, \theta>0$,
\begin{align}
& \quad \ l(u, \theta u) \nonumber \\
& = \exp \left( - \frac{1}{u} \left[ \Phi \left( \frac{h}{2}+ \frac{\log \theta}{h} \right)+\frac{1}{\theta} \Phi \left( \frac{h}{2} - \frac{\log \theta}{h} \right) \right] \right)
\times  \bigg \{ \frac{1}{u^4} \bigg[   \Phi \left( \frac{h}{2}+\frac{\log \theta}{h} \right) +\frac{1}{h} \phi \left( \frac{h}{2}+ \frac{\log \theta}{h} \right) \nonumber
\\& \ \ \  -\frac{1}{h \theta}\phi \left(  \frac{h}{2}-\frac{\log \theta}{h} \right) \bigg]
\times \left[ \frac{1}{\theta^2} \Phi \left(  \frac{h}{2}- \frac{\log \theta}{h} \right)+\frac{1}{h \theta^2} \phi \left( \frac{h}{2}- \frac{\log \theta}{h} \right)-\frac{1}{h \theta} \phi \left( \frac{h}{2}+ \frac{\log \theta}{h} \right) \right] \nonumber
\\& \ \ \  + \frac{1}{u^3} \left[ \frac{1}{h^2 \theta} \left( \frac{h}{2}- \frac{\log \theta}{h} \right) \ \phi \left( \frac{h}{2}+ \frac{\log \theta}{h} \right)+\frac{1}{h^2 \theta^2} \left( \frac{h}{2}+ \frac{\log \theta}{h} \right) \ \phi \left( \frac{h}{2}-\frac{\log \theta}{h}  \right) \right] \bigg \} \nonumber
\\& =  \exp \left( - \frac{C_1(\theta,h)}{u} \right) \left( \frac{C_2(\theta,h)}{u^4} + \frac{C_3(\theta,h)}{u^3} \right).
\label{Eq_h_u_tu}
\end{align}
\Tb{We denote by $\mathcal{F}_{s_f}$ the Fr\'echet distribution with shape and scale parameters $1$ and $s_f>0$, i.e., if $X \sim \mathcal{F}_{s_f}$,
$\mathbb{P}(X \leq x)= \exp ( - s_f/x ), x > 0.$
Using \eqref{Chapter_Riskmeasures_Eq1_Exp_Damage} and \eqref{Eq_h_u_tu} and the fact that the density of $X \sim \mathcal{F}_{s_f}$ is
$l_f(x)=s_f/x^2 \exp \left( -s_f/x \right)$}, we obtain
\begin{align}
&\quad \ \mathbb{E} \left[ Z_1^{\beta_1} Z_2^{\beta_2} \right] \nonumber \\
&= \int_{0}^{\infty} \theta^{\beta_2} \left( \int_{0}^{\infty}  u^{\beta_1+\beta_2 +1} \exp \left( - \frac{C_1(\theta,h)}{u} \right) \left( \frac{C_2(\theta,h)}{u^4} + \frac{C_3(\theta,h)}{u^3} \right) \mathrm{d}u \right) \mathrm{d}\theta \nonumber
\\& = \int_{0}^{\infty} C_2(\theta,h) \ \theta^{\beta_2} \left( \int_{0}^{\infty}  u^{\beta_1+\beta_2 -3 } \exp \left( - \frac{C_1(\theta,h)}{u} \right) \mathrm{d}u \right) \mathrm{d}\theta \nonumber
\\& \ \ \  + \int_{0}^{\infty} C_3(\theta,h) \ \theta^{\beta_2} \left( \int_{0}^{\infty}  u^{\beta_1+\beta_2 -2 } \exp \left( - \frac{C_1(\theta,h)}{u} \right) \mathrm{d}u \right) \mathrm{d}\theta \nonumber
\\& = \int_{0}^{\infty} C_2(\theta,h) \ \theta^{\beta_2} \left( \int_{0}^{\infty}  u^{\beta_1+\beta_2 -1 }\  \frac{1}{u^2}\  \exp \left( - \frac{C_1(\theta,h)}{u} \right) \mathrm{d}u \right) \mathrm{d}\theta \nonumber
\\& \ \ \ + \int_{0}^{\infty} C_3(\theta,h) \ \theta^{\beta_2} \left( \int_{0}^{\infty}  u^{\beta_1+\beta_2} \ \frac{1}{u^2} \exp \left( - \frac{C_1(\theta,h)}{u} \right) \mathrm{d}u \right) \mathrm{d}\theta \nonumber
\\& = \int_{0}^{\infty} \frac{C_2(\theta,h)}{C_1(\theta,h)} \ \theta^{\beta_2} \ \mu_{ \beta_1 + \beta_2 -1} \left( \mathcal{F}_{C_1(\theta,h)} \right)\mathrm{d}\theta + \int_{0}^{\infty} \frac{C_3(\theta,h)}{C_1(\theta,h)} \ \theta^{\beta_2} \ \mu_{ \beta_1 + \beta_2} \left( \mathcal{F}_{C_1(\theta,h)} \right) \mathrm{d}\theta,
\label{Eq_Expectation}
\end{align}
where $\mu_k(F)$ stands for the $k$-th moment of a random variable having $F$ as distribution. It is immediate to see that $\mu_k ( \mathcal{F}_{s_f} ) =s_f^k \  \Gamma(1-k)$,
which, combined with \eqref{Eq_Expectation}, yields the result.
\end{proof}

\subsection{For Theorem \ref{Th_Cov_Maxstab_Real_Marg}}

\begin{proof}
Using \eqref{Eq_LinkMaxstbVect_SimpleMaxstabVect} and the binomial theorem, we obtain
\begin{align*}
\Mr{Cov} \left( X_1^{\beta_1}, X_2^{\beta_2} \right) = \sum_{k_1=0}^{\beta_1} \sum_{k_2=0}^{\beta_2} & {\beta_1 \choose k_1} \left( \eta_1-\frac{\tau_1}{\xi_1} \right)^{k_1} \left( \frac{\tau_1}{\xi_1} \right)^{\beta_1-k_1} {\beta_2 \choose k_2} \left( \eta_2-\frac{\tau_2}{\xi_2} \right)^{k_2} \left( \frac{\tau_2}{\xi_2} \right)^{\beta_2-k_2} \nonumber \\
& \times \Mr{Cov}  \left(Z_1^{(\beta_1-k_1) \xi_1},  Z_2^{(\beta_2-k_2) \xi_2} \right),
\end{align*}
which directly yields \eqref{EqCovHRDiffBetaMar} by Theorem \ref{Th_CovarianceSimpleHR}.

If $Z$ is standard Fréchet, $\Mbb{E}(Z^{\beta^*})=\Gamma(1-\beta^*)$ for any $\beta^* < 1/2$, which gives, for $\beta_1^*, \beta_2^* < 1/2$, 
$$\Mr{Cov}(Z^{\beta_1^*}, Z^{\beta_2^*})= \Gamma(1 - [\beta_1^* + \beta_2^*]) - \Gamma(1 - \beta_1^*) \Gamma(1 - \beta_2^*).$$
Using this together with \eqref{Eq_LinkMaxstbVect_SimpleMaxstabVect} and the binomial theorem yields \eqref{Eq_Var_GEV_beta_i}.
\end{proof}

\subsection{For Proposition \ref{Prop_Continuity_Cov_xi_0}}

\begin{proof}
For $i=1,2$, $X_{i, \xi}$ follows the GEV distribution with parameters $\eta_i$, $\tau_i$ and $\xi$, the density of which we denote by $f_i$.
Let us assume that $\xi \in S_{\beta_1, \beta_2, \varepsilon}$ and $\xi>0$. We have for all $\alpha>0$
\Beq
\label{Eq_ExpressionMomentalpha}
\mathbb{E}[| X_{i, \xi}|^{\alpha}] = \int_{\eta_i-\tau_i/\xi}^{0} |x|^{\alpha} f_i(x) \mathrm{d}x + \int_{0}^{\infty} x^{\alpha} f_i(x) \mathrm{d}x
\Eeq
and thus
\Beq
\label{Eq_Inequality_SupMomentalpha}
\sup_{\xi \in \mathcal{S}} \mathbb{E}[| X_{i, \xi}|^{\alpha}] \leq \sup_{\xi \in \mathcal{S}}  \int_{\eta_i-\tau_i/\xi}^{0} |x|^{\alpha} f_i(x) \mathrm{d}x + \sup_{\xi \in \mathcal{S}} \int_{0}^{\infty} x^{\alpha} f_i(x) \mathrm{d}x\Eeq
for any subset $\mathcal{S}$ of $(0, \infty)$.
We deal with the second integral in \eqref{Eq_ExpressionMomentalpha}, for which there is a potential problem at $\infty$. We have
\begin{align}
& \quad \int_{0}^{\infty} x^{\alpha} \exp \left( -[1+\xi (x-\eta_i)/\tau_i]^{-1/\xi} \right) [1+\xi (x-\eta_i)/\tau_i]^{-1/\xi-1} \mathrm{d}x \nonumber \\&=\int_{0}^{1} \left[ \eta_i + \tau_i(z^{-\xi}-1)/\xi \right]^{\alpha} \exp(-z) \mathrm{d}z,
\label{Eq_SndIntegralVariableChanged}
\end{align}
where we used the change of variable $z=[1+\xi (x-\eta_i)/\tau_i]^{-1/\xi}$.
As $[ \eta_i + \tau_i(z^{-\xi}-1)/\xi ] \underset{z \to 0}{\sim} \tau_i z^{-\xi}/\xi$, \eqref{Eq_SndIntegralVariableChanged} is finite provided $\alpha \xi <1$. Choose $0<\xi^*<1/\alpha$, such that \eqref{Eq_SndIntegralVariableChanged} computed at $\xi^*$ is finite. Introducing $g(\xi)= (z^{-\xi}-1)/\xi$, $\xi>0$, where $z\geq 0$, we have
$$\frac{\mathrm{d}g(\xi)}{\Mr{d}\xi}=\frac{z^{-\xi}(\log(z^{-\xi}) - 1)}{\xi^2} + \frac{1}{\xi^2}.$$
A well-known inequality states that $\log(z^{-\xi})\geq 1 - 1 / z^{-\xi}$ for any $z\geq0$, which yields $z^{-\xi}(\log(z^{-\xi}) - 1) \geq -1$ and thus $g'(\xi)\geq0$. Combined with the fact that $0 \leq z \leq 1$, this gives for any $0 < \xi \leq \xi^*$
\begin{align*}
\left| \left[ \eta_i + \tau_i(z^{-\xi}-1)/\xi \right]^{\alpha} \exp(-z) \right| &= \left[ \eta_i + \tau_i(z^{-\xi}-1)/\xi \right]^{\alpha} \exp(-z) \\& \leq \left[ \eta_i + \tau_i(z^{-\xi^*}-1)/\xi^* \right]^{\alpha} \exp(-z)
\end{align*}
and therefore, taking $\alpha=\beta_i (1 + \varepsilon)$,
\begin{align*}
& \quad \ \sup_{\xi \in (0, \xi^*]} \int_{0}^{1} \left[ \eta_i + \tau_i(z^{-\xi}-1)/\xi \right]^{\beta_i (1 + \varepsilon)} \exp(-z) \mathrm{d}z \\&= \int_{0}^{1} \left[ \eta_i + \tau_i(z^{-\xi^*}-1)/\xi^* \right]^{\beta_i (1 + \varepsilon)} \exp(-z) \mathrm{d}z < \infty.
\end{align*}
Combining this result with a similar reasoning for the first integral in \eqref{Eq_ExpressionMomentalpha} and using \eqref{Eq_Inequality_SupMomentalpha} yields $\sup_{\xi \in (0, K]}\mathbb{E}[ |X_{i, \xi}^{\beta_i}|^{1+\varepsilon}]< \infty$ for some $K>0$. Now, let $Y_{\xi}=X_{1, \xi}^{\beta_1} X_{2, \xi}^{\beta_2}$ and $Y_{0}=X_{1, 0}^{\beta_1} X_{2, 0}^{\beta_2}$. 
By Cauchy--Schwarz inequality,
$$
\sup_{\xi \in (0, K]} \mathbb{E}\left[ \left| Y_{\xi} \right|^{1+\varepsilon} \right] \leq \sqrt{\sup_{\xi \in (0, K]} \mathbb{E} \left[ \left| X_{1, \xi}^{\beta_1} \right|^{2(1+\varepsilon)}\right]} \sqrt{\sup_{\xi \in (0, K]} \mathbb{E} \left[ \left| X_{2, \xi}^{\beta_2} \right|^{2(1+\varepsilon)}\right]} < \infty.
$$
It follows from \citet[][p.31]{billingsley1999convergence} that the $(X_{1, \xi})_{\xi}$, $(X_{2, \xi})_{\xi}$ and $(Y_{\xi})_{\xi}$ are uniformly integrable for $\xi$ around $0$ (from the right).

Now, it is well-known that $X_{i, \xi} \overset{d}{\to} X_{i, 0}$, $i=1, 2$, which implies by the continuous mapping theorem that $X_{i, \xi}^{\beta_i} \overset{d}{\to} X_{i, 0}^{\beta_i}$. Moreover, for any $z_1, z_2 \in \mathbb{R}$,
\begin{align*}
\mathbb{P} \left( \left[Z_1^{\xi}-1\right]/\xi \leq z_1, \left[Z_2^{\xi}-1 \right]/\xi \leq z_2 \right) &= \mathbb{P} \left( Z_1 \leq (1+\xi z_1)^{1/\xi}, Z_2 \leq (1+\xi z_2)^{1/\xi} \right) \\
&= \exp \left( -V \left([1+\xi z_1]^{1/\xi}, [1+\xi z_2]^{1/\xi} \right) \right),
\end{align*}
and 
$$ \mathbb{P} \left( \log Z_1 \leq z_1, \log Z_2 \leq z_2 \right) = \exp(-V(\exp(z_1), \exp(z_2))),$$
where $V$ is the exponent function of $(Z_1, Z_2)'$.
Thus, by continuity of $V$,
$$ \lim_{\xi \to 0} \mathbb{P} \left( \left[Z_1^{\xi}-1\right]/\xi \leq z_1, \left[Z_2^{\xi}-1\right]/\xi \leq z_2 \right)= 
\mathbb{P} \left( \log Z_1 \leq z_1, \log Z_2 \leq z_2 \right),$$
and therefore $$\left( \left[Z_1^{\xi}-1\right]/\xi, \left[Z_2^{\xi}-1\right]/\xi \right)' \overset{d}{\to} \left( \log Z_1, \log Z_2 \right)'.$$
Consequently, the continuous mapping theorem yields 
$$(X_{1, \xi}^{\beta_1}, X_{2, \xi}^{\beta_2})' \overset{d}{\to} (X_{1, 0}^{\beta_1}, X_{2, 0}^{\beta_2})',$$
and hence, applied again, $Y_{\xi} \overset{d}{\to} Y_{0}$.
Finally, Theorem 3.5 in \cite{billingsley1999convergence} yields that
$\lim_{\xi \to 0} \mathbb{E}(X^{\beta_i}_{i, \xi}) = \mathbb{E}(X^{\beta_i}_{i, 0})$, $i=1,2$ and $\lim_{\xi \to 0} \mathbb{E}(Y_{\xi})=\mathbb{E}(Y_{0})$. The result follows immediately. Similar arguments give the same conclusion for $\xi<0$.
\end{proof}

\subsection{For Proposition \ref{Prop_Decrease_gtilde}}

In this section, we denote by $F_{X}$ the distribution function of any random variable $X$ and by $F_{X_1, X_2}$ the distribution function of any random vector $\bm{X}=(X_1, X_2)'$.

\subsubsection{Preliminary result}
\label{subsubsec_Generalization_Dhaene}

We first need the following result.
\begin{Prop}
\label{Prop_Generalization_Dhaene}
Let $\bm{X}=(X_1, X_2)'$ and $\bm{Y}=(Y_1, Y_2)'$ be random vectors such that $F_{X_1}=F_{Y_1}$ and $F_{X_2}=F_{Y_2}$. We have
$$ 
F_{X_1, X_2}(z_1, z_2) < F_{Y_1, Y_2}(z_1, z_2) \ \mbox{ for all } z_1, z_2 > 0 \Longrightarrow  \Tb{\mathrm{Cov}(f_1(X_1), f_2(X_2)) < \mathrm{Cov}(f_1(Y_1), f_2(Y_2))},
$$
for all strictly increasing functions $f_1: \Tb{(0, \infty)} \to \Tb{\mathbb{R}}$ and $f_2: \Tb{(0, \infty)} \to \Tb{\mathbb{R}}$, provided the covariances exist.
\end{Prop}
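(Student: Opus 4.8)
The plan is to reduce the whole statement to a single integral representation of the covariance of monotone transforms, a Hoeffding-type identity, and then to read off the comparison directly. First I would establish that, for any bivariate vector $(U_1,U_2)'$ with joint distribution function $H$ and marginals $G_1,G_2$, and for any strictly increasing $f_1,f_2:(0,\infty)\to\mathbb{R}$ whose transforms have finite variance,
\[
\mathrm{Cov}\big(f_1(U_1),f_2(U_2)\big)=\int_0^\infty\!\!\int_0^\infty \big[H(z_1,z_2)-G_1(z_1)G_2(z_2)\big]\,\mathrm{d}f_1(z_1)\,\mathrm{d}f_2(z_2),
\]
where $\mathrm{d}f_i$ denotes the Lebesgue--Stieltjes measure induced by $f_i$. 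The standard way to obtain this is to introduce an independent copy $(U_1',U_2')'$ of $(U_1,U_2)'$, write $2\,\mathrm{Cov}(f_1(U_1),f_2(U_2))=\mathbb{E}[(f_1(U_1)-f_1(U_1'))(f_2(U_2)-f_2(U_2'))]$, and represent each increment as $f_i(U_i)-f_i(U_i')=\int_0^\infty(\mathbbm{1}[U_i'<z_i]-\mathbbm{1}[U_i<z_i])\,\mathrm{d}f_i(z_i)$. Multiplying the two representations, applying Fubini, and evaluating the resulting indicator expectation (the two ``diagonal'' terms give $H(z_1,z_2)$ and the two ``cross'' terms give $G_1(z_1)G_2(z_2)$ by independence and equality in law) yields the identity after dividing by $2$.

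Next I would apply this identity to both $\bm{X}$ and $\bm{Y}$. Because $F_{X_1}=F_{Y_1}=:F_1$ and $F_{X_2}=F_{Y_2}=:F_2$, the product terms $F_1(z_1)F_2(z_2)$ coincide and cancel upon subtraction, leaving
\[
\mathrm{Cov}\big(f_1(Y_1),f_2(Y_2)\big)-\mathrm{Cov}\big(f_1(X_1),f_2(X_2)\big)=\int_0^\infty\!\!\int_0^\infty \big[F_{Y_1,Y_2}(z_1,z_2)-F_{X_1,X_2}(z_1,z_2)\big]\,\mathrm{d}f_1(z_1)\,\mathrm{d}f_2(z_2).
\]
The integrand is strictly positive on $(0,\infty)^2$ by hypothesis, so it remains only to argue that the product measure $\mathrm{d}f_1\otimes\mathrm{d}f_2$ gives strictly positive mass to the region. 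Since $f_1$ and $f_2$ are \emph{strictly} increasing, each $\mathrm{d}f_i$ assigns positive mass to every nondegenerate subinterval of $(0,\infty)$; hence it is a non-zero positive measure (indeed of full support), and the product is a non-zero positive measure on $(0,\infty)^2$. Integrating a strictly positive function against a non-zero positive measure gives a strictly positive value, so the right-hand side is $>0$, which is exactly the desired strict inequality.

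The main obstacle is the clean justification of the Hoeffding identity rather than the comparison itself: one must check that $f_1(U_1)$ and $f_2(U_2)$ are square-integrable (guaranteed by the standing assumption that the covariances exist, which I would invoke to validate Fubini and the interchange of integration and expectation), and one must handle the usual left-/right-continuity conventions for the distribution functions and the associated Stieltjes measures, so that the indicator expectation reduces correctly to $H-G_1G_2$ up to sets of measure zero. Once these measure-theoretic technicalities are dispatched, the strictness follows immediately from strict monotonicity of $f_1,f_2$ together with the strict inequality assumed between the joint distribution functions, and no further structure of $\bm{X}$ or $\bm{Y}$ is needed.
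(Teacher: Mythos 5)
Your proof is correct, and it executes the same underlying tool as the paper --- Hoeffding's covariance identity --- by a genuinely different route. The paper pushes the transformations into the probabilities: it rewrites $\mathbb{P}(f_1(X_1)\le z_1,\, f_2(X_2)\le z_2)$ via the inverses $f_i^{-1}$, deduces the strict ordering of the joint distribution functions of $(f_1(X_1),f_2(X_2))'$ and $(f_1(Y_1),f_2(Y_2))'$ together with equality of the transformed margins, and then cites Lemma~1 of Dhaene and Goovaerts (1996), i.e., Hoeffding's identity with Lebesgue integrators, applied to the transformed pairs. You instead push the transformations into the integrators: you prove a Stieltjes-form Hoeffding identity from scratch by the independent-copy symmetrization trick and conclude by integrating the strictly positive difference $F_{Y_1,Y_2}-F_{X_1,X_2}$ against the full-support product measure $\mathrm{d}f_1\otimes\mathrm{d}f_2$. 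Your route is self-contained (no external lemma) and sidesteps the generalized-inverse step, which the paper in fact slightly overstates: if, say, $\inf f_1>0$, both transformed joint distribution functions vanish for small $z_1$, so the strict inequality claimed there for \emph{all} $z_1,z_2>0$ fails on part of the quadrant (harmlessly, since strictness on a set of positive measure suffices inside the integral). The price of your route is the Stieltjes bookkeeping at discontinuities of $f_1,f_2$, which you flag but do not fully dispatch: with the convention $\mathbbm{1}[U_i<z_i]$ the symmetrization produces the left-limit versions $H(z_1^-,z_2^-)$ of the joint distribution functions, and if an atom of some $\mathrm{d}f_i$ coincides with an atom of the corresponding marginal law, strict positivity of the integrand can a priori degrade on those lines; this is settled whenever the margins or the $f_i$ are continuous (as in the paper's application, where the margins are Fr\'echet), and it is of the same order of informality as the paper's own handling of $f_i^{-1}$. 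One small slip: existence of the covariances does not guarantee square-integrability, but square-integrability is not needed --- expanding $|(f_1(U_1)-f_1(U_1'))(f_2(U_2)-f_2(U_2'))|$ into four products and using independence of the copies to factor the cross terms shows that integrability of the product and of each factor already validates Fubini.
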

\begin{proof}
The proof is partly inspired from the proof of Theorem 1 in \cite{dhaene1996dependency}. Let $f_1: \Tb{(0, \infty)} \to \Tb{\mathbb{R}}$ and $f_2: \Tb{(0, \infty)} \to \Tb{\mathbb{R}}$ be strictly increasing functions. Assume that, for all $z_1, z_2>0$, 
\Beq
\label{Eq_Majoration_Function_Distribution}
F_{X_1, X_2}(z_1, z_2) < F_{Y_1, Y_2}(z_1, z_2).
\Eeq
We have 
$$ \mathbb{P}(f_1(X_1) \leq z_1, f_2(X_2) \leq z_2)= \mathbb{P} \left( X_1 \leq f_1^{-1}(z_1), X_2 \leq f_2^{-1}(z_2) \right)$$
and the same equality for $\bm{Y}$. Consequently, since, for all $z_1, z_2>0$, $f_1^{-1}(z_1), f_2^{-1}(z_2)>0$, it follows from 
\eqref{Eq_Majoration_Function_Distribution} that, for all $z_1, z_2>0$,
\Beq
\label{Eq_Majoration_Function_Distribution_Transformed}
\mathbb{P}(f_1(X_1) \leq z_1, f_2(X_2) \leq z_2) < \mathbb{P}(f_1(Y_1) \leq z_1, f_2(Y_2) \leq z_2).
\Eeq
Since $X_1$ and $Y_1$ have the same distribution and this also holds for $X_2$ and $Y_2$, we deduce that
\Beq
\label{Eq_Same_Margins}
f_1(X_1) \overset{d}{=} f_1(Y_1) \quad \mbox{and} \quad f_2(X_2) \overset{d}{=} f_2(Y_2).
\Eeq
Using \eqref{Eq_Majoration_Function_Distribution_Transformed}, \eqref{Eq_Same_Margins} and  Lemma 1 in \cite{dhaene1996dependency}, we obtain
\begin{align*}
\mathrm{Cov}(f_1(X_1), f_2(X_2)) &=\int_{0}^{\infty} \int_{0}^{\infty} \left[ F_{f_1(X_1), f_2(X_2)}(u,v)-F_{f_1(X_1)}(u) F_{f_2(X_2)}(v)\right] \mathrm{d}u \mathrm{d}v \\ 
& < \int_{0}^{\infty} \int_{0}^{\infty} \left[ F_{f_1(Y_1), f_2(Y_2)}(u,v)-F_{f_1(Y_1)}(u) F_{f_2(Y_2)}(v) \right] \mathrm{d}u \mathrm{d}v \\
&= \mathrm{Cov}(f_1(Y_1), f_2(Y_2)).
\end{align*}
\end{proof}

\subsubsection{Proof of Proposition \ref{Prop_Decrease_gtilde}}

\begin{proof}
Let $\bm{Z}=(Z_1, Z_2)'$ be a random vector having the Hüsler--Reiss distribution function \eqref{Eq_HuslerReissDist} with parameter $h$. We immediately obtain that, for all $z_1, z_2>0$,
\Beq
\label{Eq_Derivative_h_Bivariate_Df_Smith}
\frac{\partial \mathbb{P}(Z_1 \leq z_1, Z_2 \leq z_2)}{\partial h}(h) = \exp \left( -\frac{1}{z_1} \Phi\left( \frac{h}{2} + \frac{1}{h} \log \left( \frac{z_2}{z_1} \right) \right) -\frac{1}{z_2} \Phi\left( \frac{h}{2} + \frac{1}{h} \log \left( \frac{z_1}{z_2} \right) \right) \right) T_2,
\Eeq
where
$$ T_2 = -\frac{1}{z_1} \left(\frac{1}{2} - \frac{\log(z_2/z_1)}{h^2} \right) \phi \left( \frac{h}{2}+\frac{\log(z_2/z_1)}{h} \right) - \frac{1}{z_2} \left(\frac{1}{2} + \frac{\log(z_2/z_1)}{h^2} \right) \phi \left( \frac{h}{2}-\frac{\log(z_2/z_1)}{h} \right).
$$
For all $z_1, z_2>0$, we introduce $y=z_2/z_1$, which is strictly positive. We have
\begin{align*}
T_2 &= \frac{1}{z_2} \left[-\frac{z_2}{z_1} \left(\frac{1}{2} - \frac{\log(z_2/z_1)}{h^2} \right) \phi \left( \frac{h}{2}+\frac{\log(z_2/z_1)}{h} \right) - \left(\frac{1}{2} + \frac{\log(z_2/z_1)}{h^2} \right) \phi \left( \frac{h}{2}-\frac{\log(z_2/z_1)}{h} \right) \right]
\\& = \frac{1}{z_2} \left[-y \left(\frac{1}{2} - \frac{\log y}{h^2} \right) \phi \left( \frac{h}{2}+\frac{\log y}{h} \right) - \left(\frac{1}{2} + \frac{\log y}{h^2} \right) \phi \left( \frac{h}{2}-\frac{\log y}{h} \right) \right]
\\& = \frac{1}{\sqrt{2 \pi} z_2} \exp\left( -\frac{h^2}{8}-\frac{(\log y)^2}{2h^2} \right) \left[ -y \left(\frac{1}{2} - \frac{\log y}{h^2} \right)y^{-1/2} - \left(\frac{1}{2} + \frac{\log y}{h^2} \right) y^{1/2} \right]
\\& = -\frac{y^{1/2}}{\sqrt{2 \pi} z_2} \exp\left( -\frac{h^2}{8}-\frac{(\log y)^2}{2h^2} \right),
\end{align*}
which is strictly negative. Thus, \eqref{Eq_Derivative_h_Bivariate_Df_Smith} gives that, for all $h \geq 0$ and $z_1, z_2>0$, 
\Beq
\label{Eq_Derivative_h_Bivariate_Df_Smith_Negative}
\partial \mathbb{P}(Z_1 \leq z_1, Z_2 \leq z_2)/\partial h(h) <0.
\Eeq
Let us consider $h_1>h_2>0$, and $\bm{Z}_{1}=(Z_{1, 1}, Z_{1, 2})'$ and $\bm{Z}_{2}=(Z_{2, 1}, Z_{2, 2})'$ following the Hüsler--Reiss distribution \eqref{Eq_HuslerReissDist} with parameters $h_1$ and $h_2$, respectively.
We get from \eqref{Eq_Derivative_h_Bivariate_Df_Smith_Negative} that
$F_{Z_{1, 1}, Z_{1, 2}}(z_1, z_2) < F_{Z_{2, 1}, Z_{2, 2}}(z_1, z_2)$ for all $z_1, z_2 >0$. 
Since the components of $\bm{Z}_{1}$ and $\bm{Z}_{2}$ all follow the standard Fréchet distribution, we have $F_{Z_{1, 1}}=F_{Z_{2, 1}}$ and $F_{Z_{1, 2}}=F_{Z_{2, 2}}$. 
Now, as $\tau>0$, for $\xi \neq 0$, the function 
$$
\begin{array}{cccc}
f: & (0, \infty) & \to & \mathbb{R} \\
 & z & \mapsto & \left( \eta-\tau/\xi + \tau z^{\xi}/\xi \right)^{\beta}
\end{array}
$$
is strictly increasing. 
Hence, letting 
$$ Y_{i, j} = \eta-\frac{\tau}{\xi} + \frac{\tau}{\xi} {Z_{i, j}}^{\xi}, \quad i, j=1, 2,$$
Proposition \ref{Prop_Generalization_Dhaene} yields
\Beq
\label{Eq_Order_Expectation}
\mathrm{Cov} \left( Y_{1, 1}^{\beta}, Y_{1, 2}^{\beta} \right) < \mathrm{Cov} \left( Y_{2, 1}^{\beta}, Y_{2, 2}^{\beta} \right).
\Eeq
Furthermore, we know from \eqref{Eq_Cov_Maxstab_Real_Marg_Eq_Coeff} that, for $i=1, 2$, 
\Beq
\label{Eq_g_Expectation}
\mathrm{Cov} \left( Y_{i, 1}^{\beta}, Y_{i, 2}^{\beta} \right)=g_{\beta, \eta, \tau, \xi}(h_i)-\sum_{k_1=0}^{\beta} \sum_{k_2=0}^{\beta} B_{k_1, k_2, \beta, \eta, \tau, \xi} \ \Gamma(1-[\beta-k_1]\xi) \Gamma(1-[\beta-k_2]\xi).
\Eeq
Finally the combination of \eqref{Eq_Order_Expectation} and \eqref{Eq_g_Expectation} gives that
$g_{\beta, \eta, \tau, \xi}(h_1)<g_{\beta, \eta, \tau, \xi}(h_2)$, showing the result. 
\end{proof}

\subsection{For Proposition \ref{Prop_Lim_gtildebeta_hto0}}

\begin{proof}
Let $X$ be the Brown--Resnick field associated with the semivariogram $\gamma_W(\bm{x})=\| \bm{x}\|^2/2$, $\bm{x} \in \Mbb{R}^2$, and with GEV parameters $\eta$, $\tau$, and $\xi \neq 0$, and $\beta \in \mathbb{N}_*$ such that $\beta \xi < 1/2$. It is well-known that $X$ is sample-continuous.

The field $X^{\beta}$ is stationary by stationarity of $X$ and has a finite second moment since $\beta \xi < 1/2$. Accordingly, $X^{\beta}$ is second-order stationary. Moreover, $X^{\beta}$ is sample-continuous and thus, by the same arguments as in the proof of Proposition 1 in \cite{koch2017TCL}, continuous in quadratic mean. Hence, the covariance function of $X^{\beta}$ is continuous at the origin. It implies by Theorem \ref{ThAppBR} that
\begin{align*}
& \quad \lim_{\bm{x} \to \bm{0}} \mathrm{Cov} \left( X^{\beta}(\bm{0}), X^{\beta}(\bm{x}) \right) \\&=\lim_{\bm{x} \to \bm{0}} \left( g_{\beta, \eta, \tau, \xi} \left( \| \bm{x} \| \right)- \sum_{k_1=0}^{\beta} \sum_{k_2=0}^{\beta} B_{k_1, k_2, \beta, \eta, \tau, \xi} \ \Gamma(1-[\beta-k_1]\xi) \Gamma(1-[\beta-k_2]\xi) \right) \nonumber
\\& = \Mr{Var} \left( X^{\beta}(\bm{0}) \right), 
\end{align*}
which, combined with \eqref{Eq_Var_GEV_beta}, yields \eqref{Eq_Lim_gtildebeta_hto0}.
This easily gives $\lim_{h \to 0}  g_{\beta, \eta, \tau, \xi}(h)= g_{\beta, \eta, \tau, \xi}(0)$, which implies that $g_{\beta, \eta, \tau, \xi}$ is continuous at $h=0$. The continuity of $g_{\beta, \eta, \tau, \xi}$ at any $h>0$ comes from the fact that the covariance function of a field which is second-order stationary can be discontinuous only at the origin.
\end{proof}

\subsection{For Proposition \ref{Prop_Lim_gtilde_infty}}

\subsubsection{Preliminary results}

\begin{Lem}
\label{Lemma_CP_Locally_Integrable_Sample_Paths}
Let $\{ X(\bm{x}) \}_{\bm{x} \in \mathbb{R}^2}$ be a measurable max-stable random field with GEV parameters $\eta \in \mathbb{R}$, $ \tau>0$ and $\xi \neq 0$. Let $\beta \in \mathbb{N}_*$ such that $\beta \xi < 1$. Then, the random field $X^{\beta}$ belongs to $\mathcal{C}$.
\end{Lem}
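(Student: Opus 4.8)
The plan is to verify directly the two defining properties of membership in $\mathcal{C}$ for the field $X^{\beta}$: joint measurability, and a.s.\ local integrability of its sample paths. The first is immediate. Since $X$ is jointly measurable by hypothesis and the map $t \mapsto t^{\beta}$ is continuous, hence Borel measurable, on $\mathbb{R}$, the composition $(\omega, \bm{x}) \mapsto X(\bm{x})^{\beta}$ is jointly measurable; it is real-valued because $\beta \in \mathbb{N}_*$. So the whole content of the lemma lies in the local integrability.

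For local integrability I would reduce the pathwise statement to a first-moment bound via Tonelli's theorem. Fix a compact $K \subset \mathbb{R}^2$. As $(\omega, \bm{x}) \mapsto |X(\bm{x})|^{\beta}$ is nonnegative and jointly measurable, Tonelli gives
\[
\mathbb{E}\left[ \int_K |X(\bm{x})|^{\beta}\, \mathrm{d}\bm{x} \right] = \int_K \mathbb{E}\left[ |X(\bm{x})|^{\beta} \right] \mathrm{d}\bm{x}.
\]
Because the GEV parameters do not depend on location, every $X(\bm{x})$ has the same GEV$(\eta,\tau,\xi)$ law, so $\mathbb{E}[|X(\bm{x})|^{\beta}] =: m$ is independent of $\bm{x}$, and the crux is to prove $m<\infty$. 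Writing $X(\bm{x}) = \eta - \tau/\xi + (\tau/\xi) Z(\bm{x})^{\xi}$ with $Z(\bm{x})$ standard Fréchet, as in \eqref{Eq_Link_Maxstb_Simple_Maxstab}, I would use the bound $|X(\bm{x})| \leq |\eta - \tau/\xi| + (\tau/|\xi|)\, Z(\bm{x})^{\xi}$ and expand its $\beta$-th power by the binomial theorem into a nonnegative combination of terms $Z(\bm{x})^{k\xi}$, $0 \leq k \leq \beta$. Recalling that $\mathbb{E}[Z^{s}] = \Gamma(1-s)$ for standard Fréchet $Z$ whenever $s<1$, each such moment is finite: if $\xi>0$ then $k\xi \leq \beta\xi < 1$ for all $k \leq \beta$, while if $\xi<0$ then $k\xi \leq 0 < 1$. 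Hence $m<\infty$, so the displayed expectation equals $m\,\mathrm{Leb}(K) < \infty$, and therefore $\int_K |X(\bm{x})|^{\beta}\,\mathrm{d}\bm{x} < \infty$ almost surely.

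Finally I would upgrade ``a.s.\ for each fixed $K$'' to ``a.s.\ for all compacts simultaneously.'' Applying the previous step to the exhausting sequence $K_n = [-n,n]^2$ yields, for each $n$, a null set $N_n$ off which $\int_{K_n} |X^{\beta}| < \infty$; on the complement of the countable union $\bigcup_n N_n$, which is still null, every compact set is contained in some $K_n$ and thus has finite integral of $|X^{\beta}|$. This establishes a.s.\ local integrability and completes the verification that $X^{\beta} \in \mathcal{C}$. I expect the only genuinely delicate point to be the moment estimate, namely checking that $\beta\xi<1$ is exactly the condition rendering the highest-order Fréchet moment finite and treating the two signs of $\xi$ in one stroke; the measurability and the exhaustion argument are routine.
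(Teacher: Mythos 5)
Your proof is correct and takes essentially the same route as the paper, which likewise reduces a.s.\ local integrability of $X^{\beta}$ to the constancy (hence local integrability) of the mean function $\bm{x} \mapsto \mathbb{E}[|X(\bm{x})^{\beta}|]$, except that it delegates your Tonelli-plus-exhaustion step to Proposition~1 of \citet{koch2019SpatialRiskAxioms} rather than proving it inline. The only content you add is the explicit binomial/Fréchet-moment check that $\beta\xi<1$ makes this mean finite (with the correct case split $k\xi\leq\beta\xi<1$ for $\xi>0$ and $k\xi\leq 0<1$ for $\xi<0$), a point the paper leaves implicit here and handles elsewhere via the moment computations in the proof of Proposition~\ref{Prop_Continuity_Cov_xi_0}.
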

\begin{proof}
The field $X^{\beta}$ is obviously measurable. Furthermore, as $X$ has identical univariate marginal distributions,
the function $\bm{x} \mapsto \mathbb{E} [ |X(\bm{x})^{\beta} | ]$
is constant and hence locally integrable. Therefore, Proposition 1 in \cite{koch2019SpatialRiskAxioms} yields that $X^{\beta}$ has a.s. locally integrable sample paths.
\end{proof}

Let $\mathcal{B}(\mathbb{R})$ and $\mathcal{B}((0, \infty))$ denote the Borel $\sigma$-fields on $\mathbb{R}$ and $(0, \infty)$, respectively. 
\begin{Lem}
\label{Lem_Moment_2plusdelta}
Let $\{ Z(\bm{x}) \}_{\bm{x} \in \mathbb{R}^2}$ be a simple max-stable random field. Let $\eta \in \mathbb{R}$, $\tau>0$, $\xi \in \mathbb{R}$ and $\beta \in \mathbb{N}_*$.
The function defined by
\Beq
\label{Eq_D_beta_eta_tau_xi}
D_{\beta, \eta, \tau, \xi}(z)=
\left\{
\begin{array}{ll}
\left( \eta-\tau/\xi + \tau z^{\xi}/\xi \right)^{\beta}, & \quad \xi \neq 0, \\
\left( \eta + \tau \log z \right)^{\beta}, & \quad \xi = 0,
\end{array}
\qquad z >0, \right.
\Eeq
is measurable from $((0, \infty), \mathcal{B}((0,\infty)))$ to $(\mathbb{R},\mathcal{B}(\mathbb{R}))$ and strictly increasing. Moreover, if $\beta \xi <1/2$, then $\mathbb{E} [ | D_{\beta, \eta, \tau, \xi}(Z(\bm{0})) |^{2+\delta} ]<~\infty$ for any $\delta$ such that $0 < \delta < 1/(\xi \beta)-2$.
\end{Lem}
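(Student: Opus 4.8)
The plan is to handle the two assertions separately: first the structural claim (measurability and strict monotonicity of $D_{\beta,\eta,\tau,\xi}$), which is elementary, and then the $(2+\delta)$-moment bound, where the real work lies. For the structural part I would write $D_{\beta,\eta,\tau,\xi}=\phi^{\beta}$, where $\phi(z)=\eta-\tau/\xi+\tau z^{\xi}/\xi$ (resp.\ $\phi(z)=\eta+\tau\log z$ when $\xi=0$) is the inner GEV-type transform. On $(0,\infty)$ the map $\phi$ is continuous, hence so is $D_{\beta,\eta,\tau,\xi}$, which therefore is Borel measurable from $((0,\infty),\mathcal{B}((0,\infty)))$ to $(\mathbb{R},\mathcal{B}(\mathbb{R}))$. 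For strict monotonicity I would differentiate: $\phi'(z)=\tau z^{\xi-1}>0$ for $\xi\neq 0$ and $\phi'(z)=\tau/z>0$ for $\xi=0$, so $\phi$ is strictly increasing, and composing with $t\mapsto t^{\beta}$ (which preserves strict order on the range of $\phi$) gives that $D_{\beta,\eta,\tau,\xi}$ is strictly increasing.

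For the moment bound the key idea is the change of variable $U=1/Z(\bm{0})$, which turns the standard Fréchet law into a standard exponential law, since $\mathbb{P}(1/Z(\bm{0})\le u)=\mathbb{P}(Z(\bm{0})\ge 1/u)=1-e^{-u}$ for $u>0$. Writing $\alpha:=\beta(2+\delta)$ and using $|D_{\beta,\eta,\tau,\xi}(Z(\bm{0}))|^{2+\delta}=|\phi(Z(\bm{0}))|^{\alpha}$, this recasts the moment as
\Beq
\mathbb{E}\left[\left|D_{\beta,\eta,\tau,\xi}(Z(\bm{0}))\right|^{2+\delta}\right]=\int_{0}^{\infty}\left|\eta-\frac{\tau}{\xi}+\frac{\tau}{\xi}\,u^{-\xi}\right|^{\alpha}e^{-u}\,\mathrm{d}u,
\Eeq
so it suffices to check integrability of the right-hand side at its two endpoints. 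The integrand is continuous on $(0,\infty)$, so only the behaviour as $u\to 0^{+}$ and $u\to\infty$ matters.

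I would then split on the sign of $\xi$. When $\xi>0$, the factor $u^{-\xi}\to 0$ as $u\to\infty$, so the integrand is bounded by a constant times $e^{-u}$ near $\infty$ and causes no problem; as $u\to 0^{+}$ we have $u^{-\xi}\to\infty$ and the integrand behaves like a constant times $u^{-\xi\alpha}$, which is integrable at $0$ precisely when $\xi\alpha<1$, i.e.\ $\beta(2+\delta)\xi<1$, i.e.\ $\delta<1/(\xi\beta)-2$. This is exactly the stated range for $\delta$, and it is nonempty precisely because $\beta\xi<1/2$ forces $1/(\xi\beta)-2>0$; hence the integral converges. When $\xi<0$, the condition $0<\delta<1/(\xi\beta)-2$ is vacuous, the right endpoint being negative, so there is nothing to prove; for completeness one may note that here $u^{-\xi}\to 0$ as $u\to 0^{+}$ while $e^{-u}$ dominates the polynomial growth of $|u^{-\xi}|^{\alpha}$ as $u\to\infty$, so in fact every moment is finite.

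The main obstacle is the endpoint analysis in the case $\xi>0$: one must pin down the exact polynomial order $u^{-\xi\alpha}$ of the integrand near $u=0$ and verify that the integrability threshold $\xi\alpha<1$ coincides with the advertised bound $\delta<1/(\xi\beta)-2$. The passage to the exponential law is what makes this transparent, since it confines all the delicate behaviour to a single endpoint and lets the exponential factor dispose of the other.
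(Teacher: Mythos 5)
Your proof of the moment bound is correct and is in substance the same argument as the paper's. The paper writes $|D_{\beta,\eta,\tau,\xi}(Z(\bm{0}))|^{2+\delta}=|Y|^{\beta(2+\delta)}$ with $Y$ following the GEV distribution with parameters $\eta,\tau,\xi$, and then defers to the computation in the proof of Proposition \ref{Prop_Continuity_Cov_xi_0}; unwinding that reference, the key step there is precisely your substitution, since $z=[1+\xi(x-\eta)/\tau]^{-1/\xi}$ is exactly the unit exponential variable $1/Z(\bm{0})$, and the resulting integrand $\left[\eta+\tau(z^{-\xi}-1)/\xi\right]^{\alpha}e^{-z}$ is your $\left|\eta-\tau/\xi+(\tau/\xi)u^{-\xi}\right|^{\alpha}e^{-u}$, with integrability at the origin forcing $\alpha\xi<1$, i.e.\ $\delta<1/(\xi\beta)-2$. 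Your write-up is merely self-contained rather than a pointer, and it makes explicit two things the paper leaves implicit: that the stated $\delta$-range is nonempty exactly because $\beta\xi<1/2$, and that for $\xi<0$ the stated claim is vacuous while in fact all moments are finite (the paper covers $\xi<0$ only through the closing remark ``similar arguments'' in the proof of Proposition \ref{Prop_Continuity_Cov_xi_0}).

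One step of your structural part does not survive scrutiny: the parenthetical claim that $t\mapsto t^{\beta}$ ``preserves strict order on the range of $\phi$.'' For even $\beta$ this map is strictly decreasing on the negative half-line, and the range of $\phi$ contains negative values whenever $\xi<0$ (the range is then $(-\infty,\eta-\tau/\xi)$), whenever $\xi>0$ with $\eta-\tau/\xi<0$, and always when $\xi=0$; for instance $\eta=0$, $\tau=1$, $\xi=0$, $\beta=2$ gives $D_{\beta,\eta,\tau,\xi}(z)=(\log z)^{2}$, which is not monotone. Strict increase of $D_{\beta,\eta,\tau,\xi}$ as stated genuinely holds only when $\beta$ is odd or when $\phi$ is positive on $(0,\infty)$. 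In fairness, the paper's own proof dismisses this part as ``obvious'' and offers nothing more, so the defect originates in the lemma's statement rather than in your argument specifically — note that the regime $\xi<0$ with even $\beta$ is precisely that of the paper's case study ($\xi\approx-0.12$, $\beta=10$). Still, since your justification, unlike the paper's silence, asserts the exact false step, you should repair it, e.g.\ by restricting to parameter ranges where $\phi>0$, assuming $\beta$ odd, or weakening ``strictly increasing'' to whatever monotonicity the subsequent application of the lemma actually requires.
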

\begin{proof}
The fact that $D$ is measurable and strictly increasing is obvious. Denoting $Y= [D_{\beta, \eta, \tau, \xi}(Z(\bm{0}))]^{1/\beta}$, we have, for $\delta>0$,
$$\mathbb{E} \left[ \left| D_{\beta, \eta, \tau, \xi}(Z(\bm{0})) \right|^{2+\delta} \right]=\mathbb{E} \left[ \left|Y^{\beta} \right|^{2+\delta}\right] = \mathbb{E}\left[\left |Y \right|^{\beta(2+\delta)}\right],$$
which is finite (see the proof of Proposition \ref{Prop_Continuity_Cov_xi_0}) provided $\beta(2+\delta) \xi < 1$ as $Y$ follows the GEV distribution with parameters $\eta$, $\tau$ and $\xi$. The latter inequality is satisfied for any strictly positive $\delta$ such that $\delta < 1/(\xi \beta)-2$. 
\end{proof}

\subsubsection{Proof of Proposition \ref{Prop_Lim_gtilde_infty}}

\begin{proof}
Let $X$ be the Brown--Resnick field associated with the semivariogram $\gamma_W(\bm{x})=\| \bm{x}\|^2/2$, $\bm{x} \in \Mbb{R}^2$, and with GEV parameters $\eta$, $\tau$ and $\xi \neq 0$, and $\beta \in \mathbb{N}_*$ such that $\beta \xi < 1/2$. 

The field $X$ is sample-continuous 
and thus measurable, which yields by Lemma \ref{Lemma_CP_Locally_Integrable_Sample_Paths} that $X^{\beta} \in \mathcal{C}$. Now, we have $X^{\beta}(\bm{x})=D_{\beta, \eta, \tau, \xi}(Z(\bm{x}))$, $\bm{x} \in \Mbb{R}^2$, where $Z$ is the simple Brown--Resnick field associated with the semivariogram just above, and $D_{\beta, \eta, \tau, \xi}$ is defined in \eqref{Eq_D_beta_eta_tau_xi}. In addition, by Lemma \ref{Lem_Moment_2plusdelta}, $D_{\beta, \eta, \tau, \xi}$ satisfies the assumptions on the function $F$ of Theorem 3 in \cite{koch2017TCL}. Thus, the latter theorem yields that $X^{\beta}$ satisfies the central limit theorem. This implies that 
$$
\int_{\mathbb{R}^2} \left| \mathrm{Cov} \left( X^{\beta}(\bm{0}), X^{\beta}(\bm{x}) \right) \right|\mathrm{d}\bm{x} < \infty,
$$
which entails, using Theorem \ref{ThAppBR}, that
$$ \int_{\mathbb{R}^2} \left( g_{\beta, \eta, \tau, \xi} \left( \| \bm{x} \| \right) - \sum_{k_1=0}^{\beta} \sum_{k_2=0}^{\beta} B_{k_1, k_2, \beta, \eta, \tau, \xi} \ \Gamma(1-[\beta-k_1]\xi) \Gamma(1-[\beta-k_2]\xi) \right) \mathrm{d}\bm{x} < \infty.$$
Since $g_{\beta, \eta, \tau, \xi}$ is strictly decreasing, this necessarily implies that 
$$ \lim_{h \to \infty} \left( g_{\beta, \eta, \tau, \xi} \left( h \right) - \sum_{k_1=0}^{\beta} \sum_{k_2=0}^{\beta} B_{k_1, k_2, \beta, \eta, \tau, \xi} \ \Gamma(1-[\beta-k_1]\xi) \Gamma(1-[\beta-k_2]\xi) \right)=0,$$
i.e., \eqref{Eq_Lim_gtilde_infty}. \end{proof}

\section{Case of simple Brown--Resnick fields and $\beta<1/2$}
\label{Sec_Appendix_SimpleBRfield}

This appendix explains that the results obtained in Sections~\ref{Subsec_TheoreticalContribution} and \ref{Subsec_Results} are similar if the Brown--Resnick field considered is simple and the power satisfies $\beta<1/2$. 
As standard Fr\'echet margins are rarely encountered in practice, the interest of this section mostly lies in a better understanding of some properties of simple Brown--Resnick fields and in possible applications to inference (using, e.g., the method of moments).

First we consider the dependence measure $\mathrm{Corr} ( Z^{\beta}(\Mb{x}_1), Z^{\beta}(\Mb{x}_2))$, where $\{ Z(\Mb{x}) \}_{\Mb{x} \in \mathbb{R}^2}$ is a simple Brown--Resnick max-stable random field and $\beta < 1/2$. The condition $\beta \xi < 1/2$ with $\beta \in \mathbb{N}_*$ of \eqref{Eq_DepMeas} translates into $\beta < 1/2$; any negative value is allowed as simple max-stable fields are a.s. strictly positive. We introduce, for $\beta<1/2$,  
$$
I_{\beta}(h) =
\left \{
\begin{array}{ll}
\Gamma(1-2 \beta) & \mbox{if} \quad  h=0, \\ 
\displaystyle \int_{0}^{\infty} \theta^{\beta} \Big[ C_2(\theta,h) \  C_1(\theta,h)^{2 \beta -2} \ \Gamma(2-2 \beta) 
\\ \qquad + C_3(\theta,h) \ C_1(\theta,h)^{2 \beta -1} \  \Gamma(1-2 \beta) \Big] \  \mathrm{d}\theta & \mbox{if} \quad h>0,
\end{array}
\right.
$$
which arises when setting $\beta_1=\beta_2$ in the function $I_{\beta_1, \beta_2}$ specified in \eqref{Eq_Def_g_beta1_beta2}.
Denoting by $\gamma_W$ the semivariogram of $Z$, it follows from Theorem \ref{Th_CovarianceSimpleHR} and \eqref{EqBivDistFuncBRField} that, for all $\Mb{x}_1, \Mb{x}_2 \in \mathbb{R}^2$ and $\beta < 1/2$,
$\mathrm{Cov} ( Z^{\beta}(\Mb{x}_1), Z^{\beta}(\Mb{x}_2) )=I_{\beta}(\sqrt{2 \gamma_W(\Mb{x}_2-\Mb{x}_1)}) - \left[ \Gamma(1- \beta) \right]^2$. Then $\mathrm{Corr} ( Z^{\beta}(\Mb{x}_1), Z^{\beta}(\Mb{x}_2))$ (provided that $\beta \neq 0$) is readily derived and its behaviour is similar to the one we observed in Section \ref{Subsec_Results} (not shown); for more details, see Figures 3 and 4 in the unpublished work by \cite{koch2019spatialpowers1}.

We now investigate the function $I_{\beta}$ in further details.
Very similar proofs as for Propositions~\ref{Prop_Decrease_gtilde}--\ref{Prop_Lim_gtilde_infty} yield, for $\beta, \beta_1, \beta_2 <1/2$, that the functions $I_{\beta_1, \beta_2}$ defined in \eqref{Eq_Def_g_beta1_beta2} and $I_{\beta}$ are strictly decreasing, $\lim_{h \to 0} I_{\beta}(h)=\Gamma(1-2 \beta)$
(implying that $I_{\beta}$ is continuous everywhere on $[0, \infty)$) and $\lim_{h \to \infty} I_{\beta}(h) = [\Gamma(1-\beta)]^2$. This entails that, for any $h \geq 0$, $\lim_{\beta \to -\infty} I_{\beta}(h)= \infty$.
Figure \ref{Plot_Persp_Ibeta}, obtained using adaptive quadrature with a relative accuracy of $10^{-5}$, shows that the decrease of $I_{\beta}(h)$ for a given $\beta$ with respect to $h$ is more and more pronounced when $|\beta|$ increases, and that, for $h$ fixed, the absolute value of the slope of $I_{\beta}(h)$ increases very fast with $|\beta|$, in link with rapid divergence to $\infty$. Obviously, the behaviour of $\mathrm{Cov} ( Z^{\beta}(\Mb{x}_1), Z^{\beta}(\Mb{x}_2))$ is similar; the same holds true for $I_{\beta_1, \beta_2}$.
\begin{figure}[h!]
\center
\includegraphics[scale=0.8]{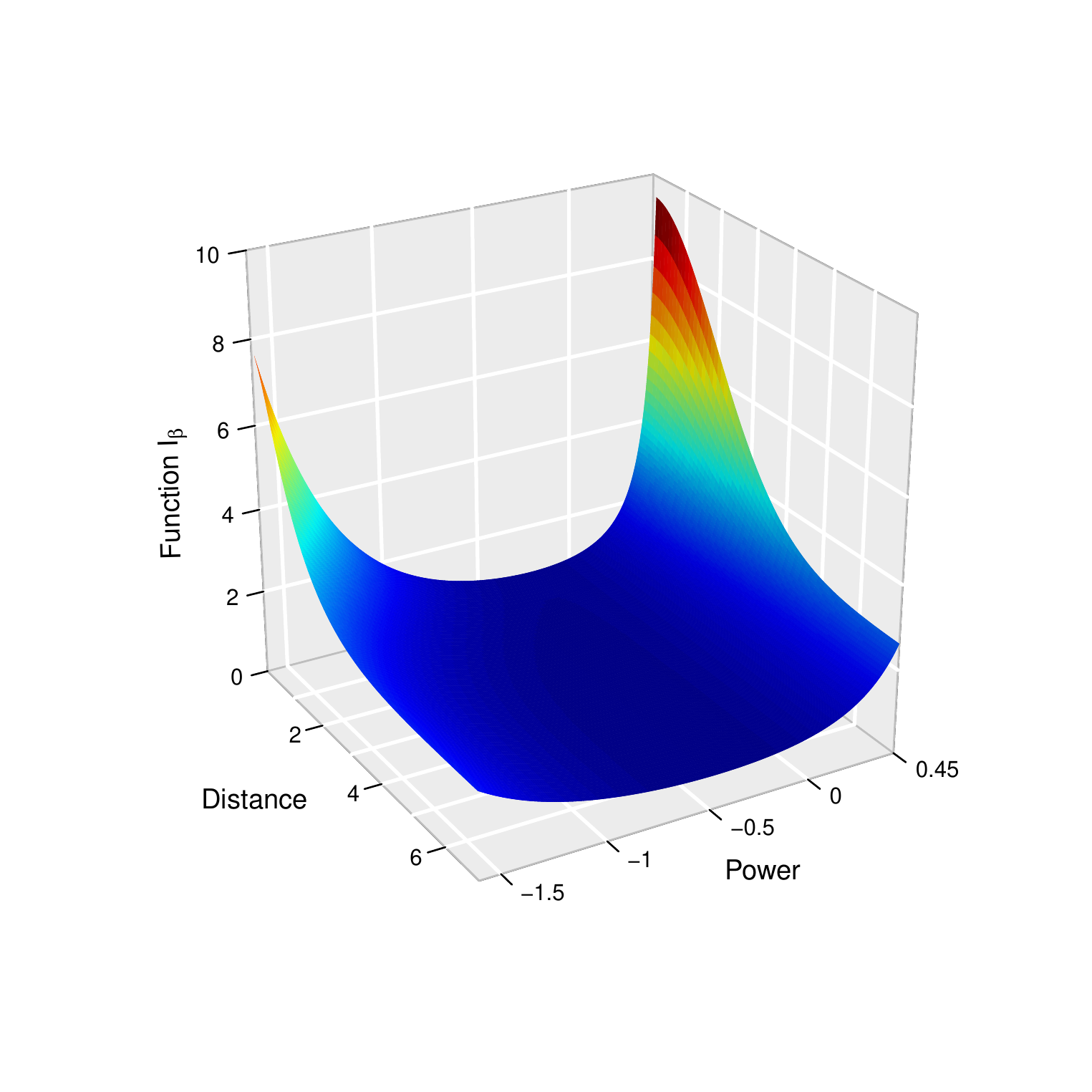}
\caption{Evolution of the function $I_{\beta}$ with respect to the distance $h$ and the power $\beta$ for $\beta \in [-1.6, 0.45]$.}
\label{Plot_Persp_Ibeta}
\end{figure}

\newpage
\bibliographystyle{apalike}
\bibliography{../../../../../Bibliography_Erwan/References_Erwan}

\end{document}